\theoremstyle{plain}
\newtheorem{assumption}{Assumption}
\newtheorem{theorem}{Theorem}
\newtheorem{lemma}{lemma}
\theoremstyle{remark}
\newtheorem{experiment}{Experiment}
\newcommand{\argmin}{\operatornamewithlimits{argmin}}
\newcommand{\blind}{0}
\begin{document}

	\def\spacingset#1{\renewcommand{\baselinestretch}%
		{#1}\small\normalsize} \spacingset{1}

	
	\if0\blind
	{
		\title{\bf Shape constrained kernel-weighted least squares: Estimating production functions for Chilean manufacturing industries
		\thanks{We thank two anonymous reviewers and the Associate Editor for providing useful suggestions that helped improve this manuscript. We also thank Chris Parmeter, Jeff Racine and Qi Li for their helpful comments.}
		}
		
		\author[1]{Daisuke Yagi}
        \author[2]{Yining Chen}
        \author[1,3]{Andrew L. Johnson}
        \author[4]{Timo Kuosmanen}
        \affil[1]{Texas A\&M University}
        \affil[2]{London School of Economics and Political Science}        \affil[3]{Osaka University}
        \affil[4]{Aalto University}
        \renewcommand\Authands{ and }
		
        \maketitle
	} \fi
	
	\if1\blind
	{
		\bigskip
		\bigskip
		\bigskip
		\begin{center}
			{\LARGE\bf Shape constrained kernel-weighted least squares: Estimating production functions for Chilean manufacturing industries}
		\end{center}
		\medskip
	} \fi

	\bigskip
	\begin{abstract}
		In this paper we examine a novel way of imposing shape constraints on a local polynomial kernel estimator. The proposed approach is referred to as Shape Constrained Kernel-weighted Least Squares (SCKLS). We prove uniform consistency of the SCKLS estimator with monotonicity and convexity/concavity constraints and establish its convergence rate. 
		In addition, we propose a test to validate whether shape constraints are correctly specified. 
		The competitiveness of SCKLS is shown in a comprehensive simulation study. Finally, we analyze Chilean manufacturing data using the SCKLS estimator and quantify production in the plastics and wood industries. The results show that exporting firms have significantly higher productivity.
	\end{abstract}
	
	\noindent%
	{\it Keywords:}  Local Polynomials, Kernel Estimation, Multivariate Convex Regression, Nonparametric regression, Shape Constraints.
	\vfill

	\newpage
	\spacingset{1.45} 

	\section{Introduction}
	\label{sec:intro}
	
	Nonparametric regression methods, such as the local linear (LL) estimator, avoid functional form misspecification. 
	To model production with a production or a cost function, the flexible nature of nonparametric methods can cause difficulties in interpreting the results. Fortunately, microeconomic theory provides additional structure in the form of shape constraints. Recently several nonparametric shape constrained estimators have been proposed that combine the advantage of avoiding parametric functional specification with improved small sample performance relative to unconstrained nonparametric estimators. Nevertheless, the existing methods have limitations regarding either estimation performance or computational feasibility. 
	In this paper, we propose a new estimator that imposes shape restrictions on local kernel weighting methods. By combining local averaging with shape constrained estimation, we improve finite sample performance by avoiding overfitting.
	
	Work on shape-constrained regression first started in the 1950s with \cite{hildreth1954point}, who studied the univariate regressor case with a least squares objective subject to monotonicity and concavity/convexity constraints. See also \cite{brunk1955maximum} and \cite{grenander1956theory} for alternative shape constrained estimators. Under the concavity/convexity constraint, properties such as consistency, rate of convergence, and asymptotic distribution have been shown by \cite{hanson1976consistency}, \cite{mammen1991nonparametric}, and \cite{groeneboom2001estimation}, respectively. In the multivariate case, \cite{kuosmanen2008representation} developed the characterization of the least squares estimator subject to concavity/convexity and monotonicity constraints, which we will refer to as Convex Nonparametric Least Squares (CNLS) throughout this paper.
	Furthermore, consistency of the least squares estimator was shown independently by \cite{seijo2011nonparametric} and \cite{lim2012consistency}.  

	Regarding the nonparametric estimation implemented using kernel based methods, \cite{birke2007estimating}, \cite{carroll2011testing}, and \cite{hall2001nonparametric} investigated the univariate case and proposed smooth estimators that can impose derivative-based constraints including monotonicity and concavity/convexity. \cite{du2013nonparametric} proposed Constrained Weighted Bootstrap (CWB) by generalizing Hall and Huang's method to the multivariate regression setting. \cite{beresteanu2007nonparametric} developed a similar type of estimator but for use with spline based estimators. 
	Finally, we mention the work of \cite{li2016nonparametric}, which extended Hall and Huang's method to use the $k$-nearest neighbor approach subject to the monotonicity constraint.
	
	In this paper, \emph{Shape Constrained Kernel-weighted Least Squares} (SCKLS) estimator is described, which optimizes a local polynomial kernel criterion while estimating a multivariate regression function with shape constraints. Under the monotonicity and convex/concavity constraints, we prove uniform consistency and establish the convergence rate of the SCKLS estimator. 
	\cite{kuosmanen2008representation}, \cite{seijo2011nonparametric} and \cite{lim2012consistency} emphasize the potential advantage that CNLS does not require the selection of tuning parameters. Our proposed SCKLS estimator sheds further light on this issue: in the SCKLS framework, CNLS can be seen as the zero bandwidth estimator; we argue that, compared to unrestricted kernel methods, the SCKLS estimator is relatively robust to the bandwidth selected and is able to alleviate well-known issues such as boundary inconsistency faced by the CNLS estimator.

	Note that with $n$ observations, CNLS imposes $O(n^2)$ concavity/convexity constraints, which can lead to computational difficulties. The number of constraints and the number of variables in the SCKLS estimator do not depend on the number of observations, but rather the number of evaluation points which is arbitrarily defined by the modeler, thereby bring the computational complexity of the estimator largely under control of the modeler. In this paper, we implement an iterative algorithm that reduces the number of constraints by building on the ideas in \cite{lee2013more} to further improve the computational performance. We then validate the performance of the SCKLS estimator via Monte Carlo simulations. For a variety of parameter settings, we find performance of SCKLS to be better or at least competitive with CNLS, CWB, and the local linear estimators. We provide the first simulation study of CWB with global concavity constraints. 
	We also investigate the use of variable bandwidth methods that are a function of the data density  \footnote{A variable bandwidth method allows the bandwidth associated with a particular regressor to vary with the density of the data.} 
	and propose variants of a uniform grid as practical ways to further improve the performance of SCKLS.

	Crucially, we also investigate the behavior of SCKLS when the shape constraints are misspecified and propose a hypothesis test to validate the shape constraints imposed. 
	Having a test that validates the shape constraints is critical because otherwise our estimation procedure would lead to inconsistent estimates. 

	Finally, we apply the SCKLS estimator empirically on Chilean manufacturing data from the Chilean Annual Industrial Survey. The estimation results provide a concise description of the supply-side of the Chilean plastic and wood industries as we report marginal productivity, marginal rate of substitution and most productive scale size. We also investigate the impact of exporting on productivity by including additional predictors of output in a semi-parametric model. We find that exporting correlates with higher productivity, thus supporting international trade theories that high productivity firms are more likely to compete in international markets.
		
	Our focus on production functions guides our selection of the polynomial function used in estimation, the data generation processes (DGP) in the Monte Carlo simulations. For the application analyzing the Chilean manufacturing data, we are interested in monotonic and concave shape constraints and use a local linear kernel function. These assumptions are motivated by standard economic theory for production functions \citep{varian1984nonparametric}. However, the methods proposed in the paper are general and applicable for other applications with higher order polynomial functions or alternative shape restrictions, as discussed in Appendix \ref{App:AppendixA}.
	
	The remainder of this paper is as follows. Section \ref{sec:2.ModelEst} describes the model framework and presents our estimator, SCKLS. Section \ref{sec:4.property} contains the statistical properties of the estimator, and Section~\ref{sec:5.test} discusses the behavior of SCKLS under misspecification, as well as a test for concavity and monotonicity. 
	Monte Carlo simulation results 
	under several different experimental settings are shown in Section \ref{sec:5.simulation}. Section \ref{sec:6.application} applies the SCKLS estimator to estimate a production function for both the Chilean plastics and wood industries. Section \ref{sec:7.conc} concludes and suggests future research directions. Appendix \ref{App:AppendixA} provides extensions to SCKLS and a comparison to CNLS and CWB. Appendix \ref{App:AppendixC} contains all the technical proofs and Appendix \ref{sec:5.2test} describes a test for affinity. Appendix \ref{App:AppendixD} states the details of the iterative algorithm for SCKLS, and Appendix \ref{App:CompResults} presents a more extensive set of simulation results. Appendix \ref{App:AppendixI} describes the details of the partially linear model, and Appendix \ref{App:application} gives further details about the application to the Chilean manufacturing data.

	\section{Model Framework and Methodology}
	\label{sec:2.ModelEst}
	

	\subsection{Model}
	\label{sec:2.1.Model}
	
	Suppose we observe $n$ pairs of input and output data, $\{\bm{X}_j,y_j\}_{j=1}^n$, where for every $j=1,\ldots,n$, $\bm{X}_j=(X_{j1},\ldots,X_{jd} )' \in \mathbb{R}^d$ is a $d$-dimensional input vector, and $y_j \in \mathbb{R}$ is an output. 
	Consider the following regression model
	\[ y_j=g_0(\bm{X}_j)+\epsilon_j, \quad \mbox{ for } j = 1,\ldots,n,\]
	where $\epsilon_j$ is a random variable satisfying $E(\epsilon_j |\bm{X}_j)=0$. Assume that the regression function $g_0: \mathbb{R}^d \rightarrow \mathbb{R}$ belongs to a class of functions, $G$, that satisfies certain shape restrictions. Here our estimator can impose any shape restriction that can be modeled as a lower or upper bound on a derivative. Examples are supermodularity, convexity, monotonicity, and quasi-convexity. 
	For purposes of concreteness, and in view of the application to production functions, we focus on imposing monotonicity and global convexity/concavity, specifically, $g_0$ is concave if:
	\[\lambda g_0(\bm{x_1})+(1-\lambda)g_0(\bm{x_2})\leq g_0(\lambda \bm{x_1}+(1-\lambda)\bm{x_2}),\hspace{1cm}\forall\bm{x_1},\bm{x_2}\in \mathbb{R}^d \mbox{ and }\forall\lambda \in [0,1]\]
	Furthermore, saying $g_0$ is monotonically increasing means that
	\[\mbox{if } \bm{x_1}\leq \bm{x_2}\mbox{, then } g_0(\bm{x_1})\leq g_0(\bm{x_2}), \]
	where the inequality of $\bm{x_1}\leq \bm{x_2}$ means that every component of $\bm{x_2}$ is greater than or equal to the corresponding component of $\bm{x_1}$. Here we denote $G_2$ as the set of functions satisfying these constraints.

	\subsection{Shape Constrained Kernel-weighted Least Squares (SCKLS) with Local Linear}
	\label{sec:3.SCKLS}
	Given observations $\{\bm{X}_j,y_j\}_{j=1}^n$, we state the (multivariate) local linear kernel estimator developed by \cite{stone1977consistent} and \cite{cleveland1979robust} as
	\begin{equation}
	\begin{aligned}
	\label{eq:4.LL}
	\min_{a,\bm{b}}\sum_{j=1}^{n}(y_j-a-(\bm{X}_j-\bm{x})'\bm{b})^2K\left(\frac{\bm{X}_j-\bm{x}}{\bm{h}}\right),
	\end{aligned}
	\end{equation}	
	where $a$ is a functional estimate, and $\bm{b}$ is an estimate of the slope of the function at $\bm{x}$ with $\bm{x}$ being an arbitrary point in the input space, $K\left(\frac{\bm{X}_j-\bm{x}}{\bm{h}}\right)$ denotes a product kernel, and $\bm{h}$ is a vector of bandwidths (see \cite{racine2004nonparametric} for more detail). We note that the objective function uses kernel weights, so more weight is given to the observations that are closer to the  point $\bm{x}$.
	
	We introduce a set of $m$ points, $\bm{x}_1,\ldots,\bm{x}_m$, for evaluating constraints, which we call evaluation points, and impose shape constraints on the local linear kernel estimator. In the spirit of local linear kernel estimator, we define Shape Constrained Kernel-weighted Least Squares (SCKLS) estimator, for the case of monotonicity and concavity, to be the function $\hat{g}_n:\mathbb{R}^d\rightarrow\mathbb{R}$ such that
	\begin{equation}
	\label{eq:7.Extrapolation}
	\hat{g}_n(\bm{x};\bm{\hat{a}},\bm{\hat{b}})=\min_{i\in\{1,\ldots,m\}}\Big\{\hat{a}_i+(\bm{x}-\bm{x}_i)'\bm{\hat{b}}_i\Big\}
	\end{equation}
	for any $\bm{x}\in\mathbb{R}^d$, where $\hat{\bm{a}}=(\hat{a}_1,\ldots,\hat{a}_m)'$ and $\hat{\bm{b}}=(\bm{\hat{b}}_1',\ldots,\bm{\hat{b}}_m')'$ are the solutions to the following optimization problem
	\begin{equation}
	\begin{aligned}
	\label{eq:6.SCKLS}
	& \min_{\bm{a},\bm{b}}
	& & \sum_{i=1}^{m}\sum_{j=1}^{n}(y_j-a_i-(\bm{X}_j-\bm{x}_i)'\bm{b}_i)^2K\left(\frac{\bm{X}_j-\bm{x}_i}{\bm{h}}\right)\\
	& \mbox{subject to}
	& & a_i-a_l\geq \bm{b}_i'(\bm{x}_i-\bm{x}_l), \; & i,l=1,\ldots,m\\
	&
	& & \bm{b}_i\geq 0, \; & i=1,\ldots,m.
	\end{aligned}
	\end{equation}	
	The first set of constraints in (\ref{eq:6.SCKLS}) imposes concavity and the second set of constraints imposes non-negativity of $\bm{b}_i$ at each evaluation point $\bm{x}_i$. For more details see \cite{kuosmanen2008representation}. Note that (\ref{eq:7.Extrapolation})  implies the functional estimate is constructed by taking the minimum of linear interpolations between the evaluation points. This makes SCKLS a globally shape constrained function although it is a non-smooth piece-wise linear function.
	

	The SCKLS estimator requires the user to specify the number and the locations of the evaluation points. A standard method for determining the location of evaluation points, $\{\bm{x}_i\}_{i=1}^m$, is to construct a uniform grid, where each dimension is divided using equal spacing. However, we can address the skewness of input variable distributions common in manufacturing survey data by using a non-uniform grid method, specifically percentile gridding, to specify evaluation points. 
	
	Alternatively, we can deal with the input skewness by applying the $k$-nearest neighbor ($k$-NN) approach, \cite{li2016nonparametric}. The $k$-NN approach uses a smaller bandwidth in dense data regions and a larger bandwidth when the data is sparse. The analysis in Section \ref{sec:6.application} uses both a percentile grid and $k$-NN approach to define the kernel function. For details of these extensions, see Appendix \ref{App:AppendixA}.
	
	As the density of the evaluation points increases, the estimated function potentially has more hyperplane components and is more flexible; however, the computation time typically increases. If a smooth functional estimate is preferred, see \cite{nesterov2005smooth} and \cite{mazumder2015Computational}, where methods for smoothing are provided. In practice, we propose to select the bandwidth vector $\bm{h}$ via the leave-one-out cross-validation based on the unconstrained estimator. See Section~\ref{sec:5.simulation} for the details.
	
	
	Appendix \ref{App:AppendixA} proposes several alternative implementations of the SCKLS estimator: (1) SCKLS with Local Polynomial approximation, (2) a $k$-nearest neighbor ($k$-NN) approach and (3) non-uniform grid method.

	\section{Theoretical Properties of SCKLS}
	\label{sec:4.property}
	For mathematical concreteness, we next consider the statistical properties of SCKLS under monotonicity and concavity constraints. 
	Recall that $G_2$ is the class of functions which are monotonically increasing and globally concave, and $g_0$ is the truth to be estimated from $n$ pairs of observations. We make the following assumptions: 
	\begin{assumption}
		\leavevmode
		\label{ass:1}
		\begin{enumerate}[(i)]
			\item \label{ass:1.1} $\{\bm{X}_j,y_j\}_{j=1}^\infty$ are a sequence of i.i.d. random variables with  $y_j = g_0(\bm{X}_j)+ \epsilon_j$.
			\item \label{ass:1.2} $g_0\in G_2$ and is twice-differentiable.
			\item \label{ass:1.3} $\bm{X}_j$ follows a distribution with continuous density function $f$ and support $\bm{S}$. Here $\bm{S}$ is a convex, non-degenerate and compact subset of $\mathbb{R}^d$. Moreover, \[\min_{\bm{x} \in \bm{S}} f(\bm{x}) > 0.\]	
			\item \label{ass:1.4} The conditional probability density function of $\epsilon_j$, given $\bm{X}_j$, denoted as $p(e|\bm{x})$, is continuous with respect to both $e$ and $\bm{x}$, with the mean function \[\mu(\cdot) = E(\epsilon_j|\bm{X}_j=\cdot) = 0\] and the variance function \[\sigma^2(\cdot) = \mathrm{Var}(\epsilon_j|\bm{X}_j=\cdot)\] bounded away from 0 and continuous over $\bm{S}$. Moreover, $\sup_{\bm{x} \in \bm{S}}E\Big(\epsilon_j^4\Big|\bm{X}_j = \bm{x}\Big) < \infty$.
	
			\item \label{ass:1.5} $K(\cdot)$ is a non-negative, Lipschitz second order kernel with a compact and convex support. For simplicity, we set the bandwidth associated with each explanatory variable, $h_k$, for $k=1,\ldots,d$, to be $h_1=\cdots=h_d = h$.
			\item \label{ass:1.6} $h =  O(n^{-1/(4+d)})$ as $n \rightarrow \infty$.
		\end{enumerate}
	\end{assumption}

	Here (\ref{ass:1.1}) states that the data are i.i.d.; (\ref{ass:1.2}) says that the constraints we impose on the SCKLS estimator are satisfied by the true function; (\ref{ass:1.3}) makes a further assumption on the distribution of the covariates; (\ref{ass:1.4}) states that the noise can be heteroscedastic in certain ways, but requires the change in the variance to be smooth; (\ref{ass:1.5}) is rather standard in local polynomial estimation to facilitate the theoretical analysis; and (\ref{ass:1.6}) assures the bandwidths become sufficiently small as $n\rightarrow \infty$ so that both the bias and the variance from local averaging go to zero. For details of the consistency of local linear estimator and a discussion of some of these conditions, see \cite{Masry1996multivariatelocal},  \cite{li2007nonparametric} and \cite{fan2016}. 
	
	We consider two scenarios: let the number of evaluation points (denoted by $m$) grow with $n$, or fix the number of evaluation points a priori. For simplicity, we also assume that the evaluation points are drawn independent of $\{\bm{X}_j,y_j\}_{j=1}^n$.
	
	\begin{assumption}
		\leavevmode
		\label{ass:2}
		\begin{enumerate}[(i)]
			\item \label{ass:2.1} The number of evaluation points $m$ $\rightarrow \infty$ as $n \rightarrow \infty$. For simplicity, we assume that the empirical distribution of $\{\bm{x}_1,\ldots,\bm{x}_m\}$ converges to a distribution $Q$ that has support $\bm{S}$ (i.e. as defined in Assumption~\ref{ass:1}(\ref{ass:1.4}))) and a continuous differentiable density function $q:\bm{S} \rightarrow \mathbb{R}$ satisfying $\min_{\bm{x} \in \bm{S}} q(\bm{x}) > 0$.
			\item \label{ass:2.2}  The number of evaluation points $m$ is fixed.  All the evaluation points lie in the interior of $\bm{S}$. Moreover, 
			\[
			\frac{\sup_{\bm{x} \in \bm{S}} \min_{i = 1,\ldots,m} \|\bm{x} - \bm{x}_i\|}{ \min_{i \neq j; i,j \in \{1,\ldots,m\}} \|\bm{x}_j - \bm{x}_i\|} \le \kappa
			\] 
			for some $\kappa \ge 1$ (i.e. $\{\bm{x}_1,\ldots,\bm{x}_m\}$ are reasonably well spread across $\bm{S}$).
		\end{enumerate}
	\end{assumption}
	
	Our main results are summarized below. A short discussion on our proof strategy and the proofs are available in Appendix B. 
	\begin{theorem}
		\label{thm:2.rate}
		Suppose that Assumption \ref{ass:1}(\ref{ass:1.1})-\ref{ass:1}(\ref{ass:1.6}) and Assumption \ref{ass:2}(\ref{ass:2.1}) or  \ref{ass:2}(\ref{ass:2.2})  hold. Then,
		\[
		\frac{1}{m} \sum_{i=1}^m \{\hat{g}_n(\bm{x}_i)- g_0(\bm{x}_i)\}^2 = O(n^{-4/(4+d)} \log n)
		\].
	\end{theorem}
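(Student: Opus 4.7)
The plan is to exploit two facts: the SCKLS objective is a positive-semidefinite quadratic form, and the feasible set is closed and convex with the truth inside it. Together these make $\hat\theta$ the Hessian-projection of the unconstrained minimizer onto the feasible set, which in turn bounds the error of $\hat\theta$ by that of the unconstrained local linear estimator. As a first reduction, applying the concavity constraints of (\ref{eq:6.SCKLS}) with indices swapped gives $\hat{a}_l + \hat{\bm{b}}_l'(\bm{x}_i - \bm{x}_l) \ge \hat{a}_i$ for every $l$, so the minimum in (\ref{eq:7.Extrapolation}) evaluated at $\bm{x}_i$ equals $\hat{a}_i$ (attained at $l=i$); hence $\hat{g}_n(\bm{x}_i) = \hat{a}_i$, and it suffices to bound $m^{-1}\sum_i (\hat{a}_i - g_0(\bm{x}_i))^2$.

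Next, let $(\tilde{a}_i, \tilde{\bm{b}}_i)$ denote the unconstrained local linear estimator at $\bm{x}_i$, which minimizes the $i$-th inner sum of (\ref{eq:6.SCKLS}) (the unconstrained problem separates over $i$), and let $\theta_i^\ast = (g_0(\bm{x}_i), \nabla g_0(\bm{x}_i))$, which is feasible for (\ref{eq:6.SCKLS}) by Assumption \ref{ass:1}(\ref{ass:1.2}). Writing $H_n$ for the (block-diagonal) Hessian of the objective, the quadratic identity $L_n(\theta) - L_n(\tilde\theta) = (\theta - \tilde\theta)' H_n (\theta - \tilde\theta)$ combined with optimality of $\hat\theta$ and feasibility of $\theta^\ast$ yields the basic inequality
\[
(\hat\theta - \tilde\theta)' H_n (\hat\theta - \tilde\theta) \;\le\; (\theta^\ast - \tilde\theta)' H_n (\theta^\ast - \tilde\theta),
\]
and a triangle inequality in the $H_n$-seminorm then gives $(\hat\theta - \theta^\ast)' H_n (\hat\theta - \theta^\ast) \le 4\,(\tilde\theta - \theta^\ast)' H_n (\tilde\theta - \theta^\ast).$

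To convert this into a bound on the target, I would establish a uniform spectral bound on the Hessian blocks $H_i = \sum_j z_{ij} z_{ij}' K\big((\bm{X}_j - \bm{x}_i)/\bm{h}\big)$ with $z_{ij} = (1, (\bm{X}_j - \bm{x}_i)')'$. Writing $D_h = \mathrm{diag}(1, h\bm{I}_d)$, a standard uniform law of large numbers for kernel-weighted averages yields $(nh^d)^{-1} D_h^{-1} H_i D_h^{-1} \to f(\bm{x}_i)\,M_K$ uniformly in $\bm{x}_i \in \bm{S}$, where $M_K$ is the (block-diagonal, by symmetry of $K$) positive-definite matrix of kernel moments. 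Combined with $\inf_{\bm{S}} f > 0$, this produces, on an event of probability tending to one and uniformly over $i$,
\[
c\,nh^d\big((a_i - a_i^\ast)^2 + h^2\|\bm{b}_i - \bm{b}_i^\ast\|^2\big) \le (\theta_i - \theta_i^\ast)' H_i (\theta_i - \theta_i^\ast) \le C\,nh^d\big((a_i - a_i^\ast)^2 + h^2\|\bm{b}_i - \bm{b}_i^\ast\|^2\big)
\]
for positive constants $c, C$. Applying this on both sides of the basic inequality and invoking the standard uniform rates for multivariate local linear regression (see, e.g., \cite{Masry1996multivariatelocal}), namely $\sup_i |\tilde{a}_i - g_0(\bm{x}_i)| = O_P(n^{-2/(4+d)}\sqrt{\log n})$ and $\sup_i \|\tilde{\bm{b}}_i - \nabla g_0(\bm{x}_i)\| = O_P(n^{-1/(4+d)}\sqrt{\log n})$, together with the calibration $h^2 n^{-2/(4+d)} \asymp n^{-4/(4+d)}$ from Assumption \ref{ass:1}(\ref{ass:1.6}), produces $m^{-1}\sum_i (\hat{a}_i - g_0(\bm{x}_i))^2 = O_P(\log n / (nh^d)) = O_P(n^{-4/(4+d)}\log n)$ as desired.

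The main technical obstacle is the uniform control---of both the Hessian spectrum and the unconstrained local linear error---needed under Assumption \ref{ass:2}(\ref{ass:2.1}), where $m \to \infty$ and the evaluation points eventually cover $\bm{S}$, so uniformity must hold across the whole support; this is where one invokes empirical-process or Bernstein-type bounds for kernel-weighted sums. Under Assumption \ref{ass:2}(\ref{ass:2.2}) the finitely many interior evaluation points reduce the statement to the corresponding classical pointwise results. Once these uniform ingredients are in place, the remainder---quadratic projection, triangle inequality, kernel-moment computations, and bandwidth bookkeeping---is routine.
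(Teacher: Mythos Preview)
Your proposal is correct and follows essentially the same route as the paper's proof: the paper rewrites the SCKLS objective so that it equals the $\|\cdot\|_{n,m}$-distance (your $H_n$-seminorm) from the unconstrained local linear fit, obtains the projection inequality $\|\hat g_n-\tilde g_n\|_{n,m}\le\|g_0-\tilde g_n\|_{n,m}$, applies the triangle inequality, and then uses a uniform eigenvalue bound on the rescaled Hessian blocks (their Lemma~\ref{lem:1}, citing \citet{fan2016}) together with the uniform local-linear rates (their Lemma~\ref{lem:2}) to conclude. The only cosmetic differences are that you explicitly verify $\hat g_n(\bm{x}_i)=\hat a_i$ and cite \citet{Masry1996multivariatelocal} rather than \citet{fan2016} for the uniform kernel rates.
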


	\begin{theorem}
		\label{thm:1.consistency}
		$\quad$
		 
		\begin{enumerate}[1.]
		\item \textbf{(The case of an increasing $m$)}  Suppose that Assumption \ref{ass:1}(\ref{ass:1.1})-\ref{ass:1}(\ref{ass:1.6}) and Assumption \ref{ass:2}(\ref{ass:2.1}) hold. Let $\bm{C}$ be any fixed closed set that belongs to the interior of $\bm{S}$. Then with probability one, as $n \rightarrow \infty$, the SCKLS estimator satisfies
		\begin{equation*}
		\begin{aligned}
		\sup_{\bm{x} \in \bm{C}} \big|\hat{g}_n(\bm{x}) - g_0(\bm{x}) \big| \rightarrow 0.
		\end{aligned}
		\end{equation*}		
		
		\item \textbf{(The case of a fixed $m$)} Suppose that Assumption \ref{ass:1}(\ref{ass:1.1})-\ref{ass:1}(\ref{ass:1.6}) and Assumption \ref{ass:2}(\ref{ass:2.2}) hold. Then, as $n \rightarrow \infty$, with probability one, the estimates from SCKLS satisfy
		\begin{equation*}
		\begin{aligned}
			\hat{a}_i & \rightarrow g_0(\bm{x}_i) \quad \mbox{ and }\quad		\hat{\bm{b}}_i & \rightarrow \frac{\partial g_0}{\partial \bm{x}}(\bm{x}_i)
		\end{aligned}
		\end{equation*}
		for all $i = 1,\ldots,m$.
		\end{enumerate}
	\end{theorem}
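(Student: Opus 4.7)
The plan is to use the rate bound of Theorem~\ref{thm:2.rate} as a quantitative backbone, then exploit structural properties of concave monotone functions for Part~1 and strong convexity of the local quadratic objective for Part~2 to upgrade to the stronger modes of convergence claimed.

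For Part~1, first observe that $\hat{g}_n$, being a minimum of affine functions with nonnegative slopes $\hat{\bm{b}}_i$, is itself monotonically increasing and concave on $\mathbb{R}^d$. By a Borel--Cantelli argument along a polynomial subsequence (using the $\log n$ factor as slack), the bound in Theorem~\ref{thm:2.rate} gives almost-sure convergence of the empirical squared error at the evaluation points to zero, and combined with the density of $\{\bm{x}_i\}$ from Assumption~\ref{ass:2}(\ref{ass:2.1}) this yields pointwise a.s.\ convergence $\hat{g}_n(\bm{x}) \to g_0(\bm{x})$ on a countable dense subset of $\bm{S}$. Because $\hat{g}_n$ is monotone and concave, its restriction to the closed set $\bm{C} \subset \mathrm{int}(\bm{S})$ has sup-norm and subgradients that are uniformly bounded whenever its values on a slightly larger interior neighborhood of $\bm{C}$ are. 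Hence $\{\hat{g}_n|_{\bm{C}}\}$ is eventually equi-Lipschitz, and Arzel\`a--Ascoli gives that every subsequence has a further uniformly convergent subsequence. The pointwise a.s.\ convergence on a dense subset, together with continuity of $g_0$, identifies every such limit with $g_0$; thus the full sequence converges uniformly to $g_0$ on $\bm{C}$.

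For Part~2, the rate of Theorem~\ref{thm:2.rate} is not strong enough on its own because $m$ is fixed, so I would argue directly via the optimality conditions of (\ref{eq:6.SCKLS}). The \emph{unconstrained} version of (\ref{eq:6.SCKLS}) decouples across $i$, with each subproblem $\min_{a_i,\bm{b}_i} Q_i(a_i,\bm{b}_i)$ being the classical multivariate local linear problem at $\bm{x}_i$. Under Assumption~\ref{ass:1} and the interior placement of $\bm{x}_i$, the results of \cite{Masry1996multivariatelocal} give a.s.\ consistency of the unconstrained local linear minimizer $(a_i^{LL},\bm{b}_i^{LL})$ for $(g_0(\bm{x}_i), \nabla g_0(\bm{x}_i))$, and the (properly rescaled) Hessian of $Q_i$ is uniformly positive definite a.s.\ for $n$ large. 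The truth at the evaluation points is feasible in (\ref{eq:6.SCKLS}), since $g_0$ is concave and monotone, so the constrained optimal value is bounded by $\sum_i Q_i(g_0(\bm{x}_i),\nabla g_0(\bm{x}_i))$. Combining this upper bound with strong convexity of each $Q_i$ around its unconstrained minimizer yields an inequality of the form
\[
\sum_{i=1}^m \bigl\|(\hat{a}_i,\hat{\bm{b}}_i) - (a_i^{LL},\bm{b}_i^{LL})\bigr\|^2 \;\le\; C \sum_{i=1}^m \bigl\|(g_0(\bm{x}_i),\nabla g_0(\bm{x}_i)) - (a_i^{LL},\bm{b}_i^{LL})\bigr\|^2,
\]
whose right-hand side vanishes almost surely. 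The triangle inequality then delivers the claim.

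The hard part is Part~1: one needs genuine a.s.\ pointwise control on a set dense enough to pin down concave limits on $\bm{C}$, together with uniform slope control on $\bm{C}$, without picking up boundary pathologies. Restricting to $\bm{C} \subset \mathrm{int}(\bm{S})$ is essential---near $\partial \bm{S}$ the subgradients of a concave function may blow up, and our assumptions do not preclude $\hat{\bm{b}}_i$ from being ill-behaved at evaluation points drifting toward the boundary. The combination of monotonicity (providing lower bounds on $\hat{g}_n$ via supporting hyperplanes from below) and concavity (providing upper bounds via supporting hyperplanes from above) should, however, deliver the required equi-Lipschitz control on the strict interior.
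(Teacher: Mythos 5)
Your Part~2 is essentially the paper's own argument in different clothing: the paper obtains the fixed-$m$ claim by noting that the proof of Theorem~\ref{thm:2.rate} actually bounds the full norm $\|\hat{g}_n-g_0\|_{n,m}^2$, which contains both $|\hat{a}_i-g_0(\bm{x}_i)|^2$ and $h^2\|\hat{\bm{b}}_i-\frac{\partial g_0}{\partial\bm{x}}(\bm{x}_i)\|^2$, and then divides the gradient part by $h^2=O(n^{-2/(4+d)})$; your ``feasibility of the truth plus strong convexity around the unconstrained local linear minimizer'' is the same comparison, since $\|\cdot\|_{n,m}$ is exactly the quadratic form of the objective centered at the LL solution. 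One bookkeeping caveat: in the unscaled coordinates $(a_i,\bm{b}_i)$ the strong-convexity modulus of $Q_i$ degenerates like $h^2$ (Lemma~\ref{lem:1} controls eigenvalues only after rescaling $\bm{b}$ by $h$), so the constant $C$ in your displayed inequality must grow like $h^{-2}$; the conclusion survives because $h^{-2}n^{-4/(4+d)}\log n\rightarrow 0$, but with a fixed $C$ the inequality as written is not justified.

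Part~1 has a genuine gap at its first step. Theorem~\ref{thm:2.rate} only controls the average $\frac{1}{m}\sum_i\{\hat{g}_n(\bm{x}_i)-g_0(\bm{x}_i)\}^2$ over a set of evaluation points that changes with $n$; this does not by itself ``yield pointwise a.s.\ convergence on a countable dense subset'': a fixed point $\bm{x}_0$ need never be an evaluation point, and without prior control on the slopes of $\hat{g}_n$ (which you obtain only later, and conditionally on boundedness of $\hat{g}_n$ on a neighborhood) accuracy at nearby evaluation points does not pin down $\hat{g}_n(\bm{x}_0)$. The missing ingredient is exactly the paper's Lemma~\ref{lem:5}: one argues by contradiction that if $|\hat{g}_n|$ were large at an interior point, monotonicity would force errors of comparable size at every evaluation point in a whole ball to the north-east or south-west of that point, and since the empirical measure of the $\bm{x}_i$ has limiting density bounded away from zero (Assumption~\ref{ass:2}(\ref{ass:2.1})) such a ball carries a non-vanishing fraction of evaluation points, contradicting the rate bound; boundedness plus concavity then yields the equi-Lipschitz property, after which either your Arzel\`a--Ascoli route or the paper's route (Rao's uniform law of large numbers to pass to $\int(\hat{g}_n-g_0)^2q$, then Lemma~\ref{lem:4}) finishes. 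In your write-up the logical order is reversed---pointwise convergence is asserted before any boundedness or Lipschitz control---so the key step is assumed rather than proved. Separately, the Borel--Cantelli upgrade to almost-sure convergence is unsupported: Theorem~\ref{thm:2.rate} is an in-probability statement with no tail or concentration bound to sum along your subsequence, and indeed the paper's own proof only delivers convergence in probability.
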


	Note that this convergence rate is nearly optimal (differing only by a factor of $\log n$). However, in the above, we only manage to show that the SCKLS estimator converges at the evaluation points or in the interior of the domain. It is known that shape-constrained estimators tend to suffer from bad boundary behaviors. For instance, the quantity $\sup_{\bm{S}}\big|\hat{g}_n^{CNLS}(\bm{x})-g_0(\bm{x})\big|$ does \emph{not} converge to zero in probability, where $\hat{g}_n^{CNLS}$ is the CNLS estimator. Though for SCKLS, if we let the number of evaluation points, $m$, grow at a rate slower than $n$, we argue that we can both alleviate the boundary inconsistency and improve the computational efficiency. 
	
	\begin{assumption}
		\leavevmode
		\label{ass:3}		
	The number of evaluation points $m = o(n^{2/(4+d)}/\log n)$ as $n \rightarrow \infty$.
	\end{assumption}

	\begin{theorem}
		\label{thm:3.uconsistency}
		Suppose that Assumption \ref{ass:1}(\ref{ass:1.1})-\ref{ass:1}(\ref{ass:1.6}),  Assumption \ref{ass:2}(\ref{ass:2.1}) and Assumption \ref{ass:3} hold. Then, with probability one, as $n \rightarrow \infty$, the SCKLS estimator satisfies
		\begin{equation*}
		\sup_{\bm{x} \in \bm{S}} \big|\hat{g}_n(\bm{x}) - g_0(\bm{x}) \big| \rightarrow 0.
		\end{equation*}	
	\end{theorem}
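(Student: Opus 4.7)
The plan is to upgrade Theorem \ref{thm:1.consistency}, which already gives uniform consistency on compact subsets of the \emph{interior} of $\bm{S}$, to uniform consistency on the whole of $\bm{S}$. The key new ingredient is Assumption \ref{ass:3}, which slows the growth of $m$ enough that the average-error bound of Theorem \ref{thm:2.rate} becomes an $o(1)$ bound on the \emph{maximum} error over the evaluation points: by nonnegativity of the summands,
\[
\max_{1\le i\le m}\{\hat{g}_n(\bm{x}_i)-g_0(\bm{x}_i)\}^2 \;\le\; \sum_{i=1}^m\{\hat{g}_n(\bm{x}_i)-g_0(\bm{x}_i)\}^2 \;=\; O\!\bigl(m\,n^{-4/(4+d)}\log n\bigr) \;=\; o(1) \ \text{a.s.,}
\]
so $\max_i|\hat{a}_i-g_0(\bm{x}_i)|\to 0$ and hence $\max_i|\hat{a}_i|=O(1)$ almost surely.

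Next I would establish a uniform Lipschitz bound on $\hat{g}_n$. Since $\hat{g}_n$ is the minimum of the affine maps $\bm{x}\mapsto \hat{a}_i+(\bm{x}-\bm{x}_i)'\hat{\bm{b}}_i$, its global Lipschitz constant is at most $L:=\max_i\|\hat{\bm{b}}_i\|$. To bound $L$, I would use the constraints in (\ref{eq:6.SCKLS}): for each $i$ and each coordinate $k$, find a reference evaluation point $\bm{x}_l$ with $\bm{x}_l\le\bm{x}_i$ componentwise and $x_{i,k}-x_{l,k}$ of order unity (possible under Assumption \ref{ass:2}(\ref{ass:2.1}) whenever $\bm{x}_i$ sits sufficiently inside $\bm{S}$, since the density $q$ is bounded below). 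The concavity constraint $\hat{a}_i-\hat{a}_l\ge\hat{\bm{b}}_i'(\bm{x}_i-\bm{x}_l)$ together with monotonicity $\hat{\bm{b}}_i\ge\bm{0}$ then gives the upper bound $\hat{b}_{i,k}\le(\hat{a}_i-\hat{a}_l)/(x_{i,k}-x_{l,k})$, which is $O(1)$ by Step~1, while monotonicity yields $\hat{b}_{i,k}\ge 0$. Hence $L=O(1)$ almost surely.

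To finish, for each $\bm{x}\in\bm{S}$ let $\bm{x}_{i(\bm{x})}$ denote a nearest evaluation point; density of the $\bm{x}_i$'s (Assumption \ref{ass:2}(\ref{ass:2.1})) gives $\delta_m:=\sup_{\bm{x}\in\bm{S}}\|\bm{x}-\bm{x}_{i(\bm{x})}\|\to 0$ almost surely. The triangle inequality together with uniform continuity of $g_0$ on the compact set $\bm{S}$ (with modulus $\omega_{g_0}$) yields
\[
\sup_{\bm{x}\in\bm{S}}|\hat{g}_n(\bm{x})-g_0(\bm{x})|\;\le\;L\,\delta_m\;+\;\max_i|\hat{g}_n(\bm{x}_i)-g_0(\bm{x}_i)|\;+\;\omega_{g_0}(\delta_m),
\]
and all three terms vanish almost surely as $n\to\infty$ by the preceding two steps.

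The hard part is the slope bound: for evaluation points $\bm{x}_i$ close to the boundary of $\bm{S}$, a componentwise reference $\bm{x}_l\le\bm{x}_i$ with an order-unity gap in every direction need not exist, so the clean directional argument above breaks down there. Handling these boundary eval points is the technical heart of the proof and is where the shape constraints (monotonicity \emph{and} concavity used jointly rather than separately) and the slow-growth Assumption \ref{ass:3} must work in concert to propagate the $O(1)$ slope bound from deep-interior eval points outward to the boundary; this is exactly what distinguishes Theorem \ref{thm:3.uconsistency} from Theorem \ref{thm:1.consistency}.
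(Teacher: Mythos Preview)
Your argument has a real gap exactly where you flag it: the constraint-based bound on the slopes cannot be pushed to evaluation points near the boundary. For an evaluation point $\bm{x}_i$ at the ``lower'' boundary in coordinate $k$, there is no $\bm{x}_l$ with $x_{l,k}$ an order-unity distance below $x_{i,k}$, and swapping to $\bm{x}_l\ge\bm{x}_i$ only yields a \emph{lower} bound on $\hat{b}_{i,k}$, not an upper bound. Concavity and monotonicity alone do not prevent the boundary subgradients of a concave function from blowing up (think of $\sqrt{x}$ at $0$), so no amount of ``propagating from the interior outward'' via the Afriat inequalities will close this gap. Notice also that your Step~1 only uses $m=o(n^{4/(4+d)}/\log n)$, strictly weaker than Assumption~\ref{ass:3}; you never identify what the full strength $m=o(n^{2/(4+d)}/\log n)$ is actually doing.

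The paper's route is different and avoids the constraints entirely for the slope bound. Recall from the alternative characterization in Appendix~\ref{App:AppendixC1} that the SCKLS objective can be written in terms of the norm $\|\cdot\|_{n,m}$, which controls not only $\hat{a}_i-g_0(\bm{x}_i)$ but also $h\bigl(\hat{\bm{b}}_i-\tfrac{\partial g_0}{\partial\bm{x}}(\bm{x}_i)\bigr)$. From Lemma~\ref{lem:1} and the proof of Theorem~\ref{thm:2.rate} one obtains
\[
\frac{h^2}{Cm}\sum_{i=1}^m\Bigl\|\hat{\bm{b}}_i-\frac{\partial g_0}{\partial\bm{x}}(\bm{x}_i)\Bigr\|^2\;\le\;\|\hat{g}_n-g_0\|_{n,m}^2\;=\;O_p\!\bigl(n^{-4/(4+d)}\log n\bigr),
\]
so that $\sum_i\|\hat{\bm{b}}_i-\tfrac{\partial g_0}{\partial\bm{x}}(\bm{x}_i)\|^2=O_p\!\bigl(m\,h^{-2}\,n^{-4/(4+d)}\log n\bigr)=O_p\!\bigl(m\,n^{-2/(4+d)}\log n\bigr)$, which is $o_p(1)$ \emph{precisely} under Assumption~\ref{ass:3}. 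This gives $\max_i\|\hat{\bm{b}}_i\|\le\sup_{\bm{S}}\|\tfrac{\partial g_0}{\partial\bm{x}}\|+o_p(1)$ directly, for \emph{all} evaluation points including boundary ones, and the rest of the argument (global Lipschitz bound on $\hat{g}_n$, then Theorem~\ref{thm:1.consistency} on interior compacts plus an $\epsilon$-approximation of $\bm{S}$ by such compacts) goes through essentially as you outlined in your final step.
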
	
	

	We also note that CNLS can be viewed as a special case of SCKLS when we let the set of evaluation points be $\{\bm{X}_1,\ldots,\bm{X}_n\}$ and the bandwidth vector $\|\bm{h}\|\rightarrow \mathbf{0}$. See Appendix~\ref{App:AppendixA} for the proof of the relationship between CNLS and SCKLS, together with more discussions on the relationship between SCKLS and alternative shape constrained estimators such as CWB.
	
	\section{Shape Misspecification: Theory and Testing}
	\label{sec:5.test}
	\subsection{Misspecification of the shape restrictions}

	So far we have assumed in our estimation procedures that $g_0 \in G_2$, where $G_2$ is the class of functions which are monotonically increasing and globally concave. To  understand the behavior of SCKLS, we are interested in its performance when $g_0 \notin G_2$.
	
	Let $Q$ be a distribution on $\bm{S}$ (as in Assumption~\ref{ass:2}(\ref{ass:2.1})) and define $g^*: \bm{S} \rightarrow \mathbb{R}$ as 
	\[
	g_0^* := \argmin_{g \in G_2} \int_{\bm{S}}\{g(\bm{x})-g_0(\bm{x})\}^2 Q(d\bm{x}).
	\]
	The existence and $Q$-uniqueness of $g_0^*$ follows from the well-known results about the projection onto a cone in the Hilbert space. When $g_0 \in G_2$, it is easy to check that $g_0^* = g_0$. See also \citet{lim2012consistency}.  The following result can be viewed as a generalization of Theorem~\ref{thm:1.consistency}.
	
	\begin{theorem}
		\label{thm:4.missconsistency}
		$\quad$
		 
        Suppose that Assumption \ref{ass:1}(\ref{ass:1.1}), \ref{ass:1}(\ref{ass:1.3})-\ref{ass:1}(\ref{ass:1.6}) and Assumption \ref{ass:2}(\ref{ass:2.1}) hold. Furthermore, suppose that $g_0$ is twice-differentiable. Let $\bm{C}$ be any compact set that belongs to the interior of $\bm{S}$. Then with probability one, as $n \rightarrow \infty$, the SCKLS estimator satisfies
		\begin{equation*}
		\begin{aligned}
		\sup_{\bm{x} \in \bm{C}} \big|\hat{g}_n(\bm{x}) - g_0^*(\bm{x}) \big| \rightarrow 0.
		\end{aligned}
		\end{equation*}		
	\end{theorem}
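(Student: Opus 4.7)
The plan is to adapt the strategy used for Theorem~\ref{thm:1.consistency} in the well-specified case, letting $g_0^*$ play the role previously played by $g_0$. The main new ingredient is to show that, asymptotically, the SCKLS optimization behaves as an empirical $L_2$ projection onto $G_2$, whose unique limit is $g_0^*$.

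First, I would analyze the population version of the kernel-weighted least squares criterion. Writing $y_j = g_0(\bm{X}_j)+\epsilon_j$ and dividing the SCKLS objective~(\ref{eq:6.SCKLS}) by $n m h^d$, the expectation can be approximated, for small $h$, by
\[
\frac{1}{m}\sum_{i=1}^m \int_{\bm{S}} \bigl(g_0(\bm{x}) - a_i - (\bm{x}-\bm{x}_i)'\bm{b}_i\bigr)^2 h^{-d} K\bigl(\tfrac{\bm{x}-\bm{x}_i}{\bm{h}}\bigr) f(\bm{x}) \, d\bm{x} + R_n,
\]
where $R_n$ collects terms that are asymptotically independent of $(\bm{a},\bm{b})$. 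As $m\to\infty$, the empirical measure on the evaluation points tends to $Q$, and for candidates $(a_i,\bm{b}_i)$ representing the values and gradients of some $g\in G_2$, the expression above converges (up to the kernel constant $\int K$) to the population criterion
\[
M(g) = \int_{\bm{S}} \bigl(g(\bm{x})-g_0(\bm{x})\bigr)^2 \nu(d\bm{x}),
\]
for an appropriate measure $\nu$ on $\bm{S}$ derived from $f$ and $Q$. By definition, $M$ is uniquely minimized over $G_2$ by $g_0^*$.

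Next, I would establish that the empirical SCKLS criterion converges uniformly to $M$ over $G_2$ equipped with a suitable topology, e.g.\ $L^2(\nu)$ combined with uniform convergence on compact interior subsets. The key ingredients are a uniform law of large numbers for the kernel-weighted sums indexed by $g\in G_2$, exploiting the bracketing-entropy bounds available for bounded monotone concave functions on compact convex sets \cite{seijo2011nonparametric,lim2012consistency}, and a bias control showing that restricting the optimization to the $m$ evaluation points approximates optimization over all of $G_2$ as $m\to\infty$ (using the twice-differentiability of $g_0$ and the local-polynomial structure of SCKLS). A standard argmin-continuity argument then yields $\hat{g}_n \to g_0^*$ in $L^2(\nu)$. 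Finally, since both $\hat{g}_n$ and $g_0^*$ are monotone and concave and $g_0^*$ is continuous on the interior of $\bm{S}$, the $L^2$ convergence can be promoted to uniform convergence on any compact $\bm{C}$ in the interior of $\bm{S}$ via the classical fact that a pointwise-convergent sequence of concave functions on a convex set converges uniformly on compact subsets of the interior of their common domain whenever the limit is continuous there.

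The main obstacle will be the uniform convergence step: establishing that the kernel-weighted empirical criterion converges uniformly over the infinite-dimensional class $G_2$. The shrinking bandwidth $h$ complicates direct application of standard uniform laws of large numbers, and the bias analysis must be carried out without assuming smoothness of $g_0^*$, since the projection of a smooth function onto $G_2$ need not be twice differentiable. A natural remedy is to use the Lipschitz continuity of $g_0^*$ on the interior of $\bm{S}$ (which follows from concavity), to exploit the assumed smoothness of $g_0$ when controlling the local-polynomial bias at each $\bm{x}_i$, and to truncate near the boundary of $\bm{S}$ in the manner used in the proof of Theorem~\ref{thm:3.uconsistency} to avoid boundary-inconsistency issues.
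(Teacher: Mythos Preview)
Your strategy---population criterion, uniform convergence over $G_2$, argmin continuity, then upgrade to uniform convergence via concavity---is a different route from the paper's, and the step you yourself flag as the main obstacle is in fact a genuine gap rather than a technicality.

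The paper never attempts uniform convergence of the criterion over $G_2$. Instead it exploits the exact quadratic structure: using the reformulation in Appendix~\ref{App:AppendixC1}, SCKLS is the $\|\cdot\|_{n,m}$-projection of the unconstrained local linear estimator $\tilde{g}_n$ onto $G_2$. Expanding $\|\tilde{g}_n-\hat{g}_n\|_{n,m}^2$ around $g_0^*$ and using $\|\tilde{g}_n-\hat{g}_n\|_{n,m}\le\|\tilde{g}_n-g_0^*\|_{n,m}$ reduces everything to bounding the single cross term
\[
\frac{1}{m}\sum_{i=1}^m\bigl(\tilde{g}_n(\bm{x}_i)-g_0^*(\bm{x}_i)\bigr)\bigl(\hat{g}_n(\bm{x}_i)-g_0^*(\bm{x}_i)\bigr)\;+\;\text{(derivative terms)}.
\]
After replacing $\tilde{g}_n$ by $g_0$ and dropping the derivative pieces (which carry a factor $h$), the limit of this quantity is $\int_{\bm{S}}(g_0-g_0^*)(\hat{g}_n-g_0^*)\,q\,d\bm{x}$, and this is $\le 0$ by the \emph{variational characterization of the projection onto the convex cone $G_2$}: for every $g\in G_2$, $\int_{\bm{S}}(g_0-g_0^*)(g-g_0^*)\,Q(d\bm{x})\le 0$. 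That KKT-type inequality is the key device you are missing; it replaces the need for any uniformity over $G_2$.

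Your proposed remedy does not close the gap. The entropy bounds you cite apply to \emph{bounded} subclasses of concave functions, but $G_2$ itself is unbounded (functions can be arbitrarily large with arbitrarily steep gradients near $\partial\bm{S}$), so uniform convergence of $M_n$ over $G_2$ will not follow. One would first have to show that $\hat{g}_n$ eventually lies in a fixed bounded Lipschitz subclass on each interior compact; the paper does essentially this (its Lemma~\ref{lem:5}, suitably extended), but the argument already uses the quadratic/projection structure above, so you would be rebuilding the paper's proof inside yours. A secondary issue: you leave the limiting measure $\nu$ unspecified, yet $g_0^*$ is defined as the projection with respect to $Q$; you need to check that the population limit of the kernel-weighted criterion really is minimized at the $Q$-projection and not at a projection under some other weight involving $f$. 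Your final step (concavity promotes $L^2$ convergence to uniform convergence on interior compacta) is correct and coincides with the paper's Lemma~\ref{lem:4}.
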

	 Theorem \ref{thm:4.missconsistency} assures us that the SCKLS estimator converges uniformly on a compact set to the function $g_0^*$ that is closest in $L^2$ distance to the true function $g_0$ for which our estimator is misspecified. Consequently, as long as $g_0$ is not too far away from $G_2$, our estimator can still be used as a reasonable approximation to the truth, especially when the sample size is moderate. See Appendix~\ref{App:CompResults} for a numerical demonstration.

	\subsection{Hypothesis Testing for the Shape}
	\label{sec:5.1test}
	Admittedly, the SCKLS estimator can be inappropriate if the shape constraints are not fulfilled by $g_0$. Thus, we propose a procedure based on the SCKLS estimators for testing
	\[
	H_0:\;\; \{g_0: \bm{S} \rightarrow \mathbb{R}\}\in G_2 \quad \mbox{against} \quad H_1:\;\; \{g_0: \bm{S} \rightarrow \mathbb{R}\}\notin G_2. 
	\]
    
    Denote by
    \begin{align*}
	\tilde{r}^2\Big(\{\bm{X}_j,y_j\}_{j=1}^n, \{\bm{x}_i\}_{i=1}^m\Big) &= 
		 \min_{\bm{a},\bm{b}} \sum_{i=1}^{m}\sum_{j=1}^{n}(y_j-a_i-(\bm{X}_j-\bm{x}_i)'\bm{b}_i)^2K\left(\frac{\bm{X}_j-\bm{x}_i}{\bm{h}}\right);
    \end{align*}
    the value of the objective function that is minimized by the local linear kernel estimator. And denote by
    \begin{align*}
	\hat{r}^2\Big(\{\bm{X}_j,y_j\}_{j=1}^n, \{\bm{x}_i\}_{i=1}^m\Big) &= 
		 \min_{\bm{a},\bm{b}} \sum_{i=1}^{m}\sum_{j=1}^{n}(y_j-a_i-(\bm{X}_j-\bm{x}_i)'\bm{b}_i)^2K\left(\frac{\bm{X}_j-\bm{x}_i}{\bm{h}}\right), \\
		 	& \quad \mbox{subject to} \quad a_i-a_l\geq \bm{b}_i'(\bm{x}_i-\bm{x}_l) \mbox{ and }  \bm{b}_i\geq 0, \; i,l=1,\ldots,m.
    \end{align*}
    Here $\hat{r}^2(\cdot,\cdot)$ is the value of the objective function that is minimized by SCKLS.
    
	We focus on the test statistic  	 	
	\begin{align*}	 	
	T_n := T\Big(\{\bm{X}_j,y_j\}_{j=1}^n, \{\bm{x}_i\}_{i=1}^m\Big) &= \Big[\frac{1}{mnh^d}\Big\{\hat{r}^2\Big(\{\bm{X}_j,y_j\}_{j=1}^n, \{\bm{x}_i\}_{i=1}^m\Big) - \tilde{r}^2\Big(\{\bm{X}_j,y_j\}_{j=1}^n, \{\bm{x}_i\}_{i=1}^m\Big)\Big\}\Big]^{1/2},
    \end{align*}
    which is a re-scaled version of the difference between the values of the same objective function (with the same bandwidth $\bm{h}$), optimized either with or without the shape constraints. Intuitively, the value of this statistic should be small if $g_0 \in G_2$. This statistic can also be viewed as a smoothed and re-scaled version of the goodness-of-fit statistic.
    
    Here we focus on the boundary case  when $g_0$ is constant (i.e. $g_0 = 0$) because it is hardest to evaluate the null hypothesis when $g_0$ is both non-increasing and non-decreasing and both concave and convex, intuitively and theoretically and it allows us to control the size of our test statistic. Since the noise here might be non-homogeneous, we use the wild bootstrap to approximate the distribution of the test statistic under $H_0$. See \citet{wu1986}, \citet{liu1988}, \citet{mammen1993} and \citet{davidson2008} for an overview of the wild bootstrap procedure.

	Our testing procedure has three steps:
	\begin{enumerate}
		\item Estimate the error at each $\bm{X}_j$ by $\tilde{\epsilon_j} = y_j-\tilde{g}_n(\bm{X}_j)$ for $j = 1,\ldots,n$, where $\tilde{g}$ is the unconstrained local linear estimator with kernel and bandwidth satisfying Assumptions \ref{ass:1}(\ref{ass:1.5})--(\ref{ass:1.6}).
		
		\item  The wild bootstrap method is used to construct a critical region for $T_n$. Let $B$ be the number of Monte Carlo iterations. For every $k = 1,\ldots,B$, let $\bm{u}_k = (u_{1k},\ldots,u_{nk})'$ be a random vector with components sampled independently from the Rademacher distribution, i.e. $P(u_{jk} = 1) = P(u_{jk} = -1) = 0.5$. Furthermore, let $y_{jk} = u_{jk}\ \tilde{\epsilon}_j$.  Then, the wild bootstrap test statistic is
		
    \[
    T_{nk}= T\Big(\{\bm{X}_j, y_{jk}\}_{j=1}^n, \{\bm{x}_i\}_{i=1}^m\Big).
    \]
	\item Define the Monte Carlo $p$-value as\footnote{Since we underestimate the level of the errors in Step 1 by a factor of roughly $n^{-2/(4+d)}$, for the theoretical development, we address this bias issue by modifying the $p$-value to be $p_n = \frac{1}{B} \sum_{k=1}^B \mathbf{1}_{\{T_n \le T_{nk} + \Delta_n\}}$, where  $\Delta_n = O(n^{-2/(4+d)} \log n)$. Note that if we fix $m$ and pick $h = O(n^{-\eta})$ for $\eta \in (\frac{1}{4+d},\frac{1}{d})$, then $\Delta_n/ T_{nk} = o_p(1)$ as $n \rightarrow \infty$, i.e. this correction has a negligible effect.  Indeed, our experience suggests that this modification offers little improvement in terms of finite sample performance in our simulation study.} 
\[
p_n = \frac{1}{B} \sum_{k=1}^B \mathbf{1}_{\{T_n \le T_{nk} \}}.
\]
For a test of size $\alpha \in (0,1)$, we reject $H_0$ if $p_n < \alpha$. 
\end{enumerate}	

A few remarks are in order. 

First, here we conveniently implemented the simplest wild bootstrap scheme to simplify our analysis, in line with the work of \citet{davidson2008}. Instead of imposing the Rademacher distribution on $u_{kj}$, we can also use any distribution with zero-mean and unit-variance. One popular choice suggested by \citet{mammen1993} is
\[
u_{jk}=\left\{{\begin{matrix}-\frac{{\sqrt {5}}-1}{2}&{\mbox{with probability }}\frac{5+\sqrt {5}}{10}\\\frac{{\sqrt {5}}+1}{2}&{\mbox{with probability }}\frac{5-\sqrt {5}}{10}\end{matrix}}\right..
\]

Second, note that the definition of $y_{jk}$ in Step~2  makes this a test of the residuals, i.e.,  when drawing bootstrap samples, we use $y_{jk} =  u_{jk}\ \tilde{\epsilon}_j$ instead of $y_{jk} = \hat{g}_n(\bm{X}_j)+ u_{jk}\ \tilde{\epsilon}_j$. From this perspective, our test is similar to the univariate monotonicity test in \cite{hall2000testing}. One reason behind this choice is to avoid the boundary inconsistency of the bootstrap procedure. See \citet{Andrews2000} and \citet{GNA2017} who addressed this issue in a much simpler setup.  Generally speaking, 
testing the null hypothesis becomes harder when $g_0$ is on the boundary of $G_2$.  
In practice, we could use $y_{jk} = \hat{g}_n(\bm{X}_j)+ u_{jk}\ \tilde{\epsilon}_j$ in certain scenarios (e.g. when testing $g_0$ is a strictly increasing and strictly concave function against $g_0 \notin G_2$), and slight improvements are observed in terms of finite-sample performance.

We now look into the theoretical properties of our procedure under both $H_0$ and $H_1$.
See Appendix \ref{App:AppendixC} for the proof. 
	\begin{theorem}
		\label{thm:5.SCKLS-TEST1}
	 Suppose that Assumptions \ref{ass:1}(\ref{ass:1.1}),(\ref{ass:1.3})--(\ref{ass:1.5}) and  \ref{ass:2}(\ref{ass:2.1})  hold, and the conditional error distribution (i.e. $\epsilon_j|\bm{X}_j$) is symmetric. Furthermore, assume that $g_0$ is continuously twice-differentiable and let $h = O(n^{-\eta})$ for some fixed $\eta \in (\frac{1}{4+d},\frac{1}{d})$. Let $B := B(n) \rightarrow \infty$ as $n \rightarrow \infty$. Then, for any given $\alpha \in (0,1)$, 
	\begin{itemize}[--]
    \item Type I error: for any $g_0 \in G_2$,  $\limsup_{n \rightarrow \infty} P(p_n < \alpha) \le \alpha$; 
	\item Type II error: for any $g_0 \notin G_2$, $\limsup_{n \rightarrow \infty} \Big\{ 1 - P(p_n < \alpha) \Big\} = 0$. 
	\end{itemize}
	In addition, if we replace Assumption \ref{ass:2}(\ref{ass:2.1}) by Assumption \ref{ass:2}(\ref{ass:2.2}), the same conclusions hold for sufficiently large $m$.
	\end{theorem}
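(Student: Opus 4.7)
The plan is to split the analysis into Type~I and Type~II control, using a single algebraic identity as the common starting point. First I would observe that, since $(\tilde{a}_i, \tilde{\bm{b}}_i)$ is the unconstrained minimizer of the $i$-th kernel-weighted block, the first-order conditions make the cross term vanish in the Pythagorean expansion and give
\begin{align*}
\hat{r}^2 - \tilde{r}^2 = \sum_{i=1}^m \sum_{j=1}^n K\!\left(\frac{\bm{X}_j - \bm{x}_i}{\bm{h}}\right)\bigl[(\tilde{a}_i - \hat{a}_i) + (\bm{X}_j - \bm{x}_i)'(\tilde{\bm{b}}_i - \hat{\bm{b}}_i)\bigr]^2.
\end{align*}
Combined with the usual local linear moment expansions $\sum_j K(\cdot) \asymp nh^d f(\bm{x}_i)$ and $\sum_j K(\cdot)(\bm{X}_j-\bm{x}_i)(\bm{X}_j-\bm{x}_i)' \asymp nh^{d+2} f(\bm{x}_i)\int uu'K$, this gives $T_n^2 = \frac{1}{m}\sum_i(\hat{a}_i-\tilde{a}_i)^2 + \frac{h^2}{m}\sum_i\|\hat{\bm{b}}_i-\tilde{\bm{b}}_i\|^2$ up to lower-order terms, so the test statistic is a kernel-weighted gap between the constrained and unconstrained estimators at the evaluation points.

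For Type~II control, I would combine Theorem~\ref{thm:4.missconsistency} (which yields $\hat{g}_n \to g_0^*$ uniformly on interior compacta under misspecification) with uniform consistency of the local linear estimator ($\tilde{g}_n \to g_0$ at the evaluation points). Because $g_0\notin G_2$ forces $g_0^*\neq g_0$ on a set of positive $Q$-measure, the identity above produces $\frac{1}{m}\sum_i (\hat{a}_i-\tilde{a}_i)^2 \ge c$ with probability tending to one for some $c>0$ depending on $\|g_0-g_0^*\|_{L^2(Q)}$. Meanwhile the bootstrap statistic $T_{nk}$ is computed on pseudo-data with zero conditional mean, so the Type~I analysis below will show $T_{nk}=O_p((nh^d)^{-1/2})\to 0$ both unconditionally and conditionally on the data. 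Hence $P(T_n\le T_{nk}\mid \text{data})\to 0$ and $p_n \to 0$.

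For Type~I control the argument is more delicate. I would partition $\bm{S}$ into a ``strict'' region, where both the monotonicity and the concavity of $g_0$ are strict (so the SCKLS constraints are slack at the neighboring evaluation points for large $n$, forcing $(\hat{a}_i,\hat{\bm{b}}_i)=(\tilde{a}_i,\tilde{\bm{b}}_i)$ with probability tending to one), and a ``boundary'' region, where at least one of the shape constraints is active at $g_0$. The strict region contributes negligibly to $\hat{r}^2-\tilde{r}^2$. The boundary region contributes a term that amounts to projecting the local linear estimates onto the polyhedral cone $G_2$; since this projection is $1$-Lipschitz and monotone in its input, its contribution is stochastically dominated by the same quantity computed under $g_0\equiv 0$. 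This reduction is exactly why the bootstrap is calibrated to the worst-case boundary case.

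Finally, in the worst case $g_0\equiv 0$, I would establish distributional consistency of the wild bootstrap. Under the undersmoothed choice $h=O(n^{-\eta})$, the LL residuals satisfy $\tilde{\epsilon}_j=\epsilon_j+o_p(1)$ uniformly on interior compacta; combined with the symmetry of $\epsilon_j\mid\bm{X}_j$, the Rademacher-flipped pseudo-errors $u_{jk}\tilde{\epsilon}_j$ match the conditional distribution of $\epsilon_j$ up to a negligible error. A Lindeberg-type conditional CLT applied to the linear-in-$\epsilon$ expansion of $(nh^d)^{1/2}(\tilde{a}_i,\tilde{\bm{b}}_i)$, together with the continuous mapping theorem (the projection onto the SCKLS cone is continuous and piecewise linear), then yields that $(nh^d)^{1/2}T_n$ and $(nh^d)^{1/2}T_{nk}$ share the same limiting law. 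Standard bootstrap consistency with $B\to\infty$ then yields $\limsup_n P(p_n<\alpha)\le\alpha$, with equality in the worst case. The hard part will be the last step: obtaining \emph{joint} convergence of the multivariate projection over all $m$ evaluation points (particularly with $m\to\infty$) under both the true and bootstrap laws, which requires uniform control of the kernel-weighted noise process together with a stochastic equicontinuity argument for the cone projection, analogous in spirit to the tools used for Theorem~\ref{thm:3.uconsistency}.
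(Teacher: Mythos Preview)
Your Pythagorean identity and Type~II argument are essentially the paper's: $T_n = \|\hat{g}_n - \tilde{g}_n\|_{n,m}$, Theorem~\ref{thm:4.missconsistency} gives $T_n \ge c' > 0$, and $T_{nk} = o_p(1)$ since the bootstrap has zero mean. That part is fine.

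For Type~I control, however, you miss the paper's key shortcut and instead try to do something much harder. The paper does \emph{not} prove distributional convergence of $T_n$ or invoke a CLT. Instead, in the baseline case $g_0 \equiv 0$, it observes that by symmetry of the conditional error distribution, the true errors satisfy $\epsilon_j \stackrel{d}{=} u_j|\epsilon_j|$ with $u_j$ Rademacher. Hence, conditioning on $(|\epsilon_1|,\ldots,|\epsilon_n|)$, the ``oracle'' statistics $T_n^\circ, T_{n1}^\circ,\ldots,T_{nB}^\circ$ (all computed from response vectors of the form $(\pm|\epsilon_1|,\ldots,\pm|\epsilon_n|)$) are \emph{exchangeable}. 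This immediately gives $P(p_n^\circ \le \alpha) \le \alpha$ by a permutation-test argument---no limit law, no continuous mapping through the cone projection, no stochastic equicontinuity over $m$ evaluation points. The remaining work is just to show $|T_{nk}-T_{nk}^\circ|$ is negligible (because $\tilde{\epsilon}_j - \epsilon_j$ is uniformly small), which is where the $\Delta_n$ correction enters. Your proposed route via a Lindeberg CLT and joint convergence over all evaluation points is exactly the ``hard part'' you flag, and it is simply unnecessary.

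Your reduction from general $g_0 \in G_2$ to $g_0 = 0$ via a strict/boundary partition is also more fragile than the paper's. The SCKLS constraints are \emph{global} Afriat inequalities linking every pair $(\bm{x}_i,\bm{x}_l)$, so strict concavity of $g_0$ near $\bm{x}_i$ does not by itself guarantee that all constraints touching $\bm{x}_i$ are slack. The paper instead uses a one-line domination: by linearity of the local linear estimator, $\tilde{g}_n = \tilde{g}_n^\circ + \tilde{g}_n^\dagger$ with $\tilde{g}_n^\dagger$ the LL fit to the noiseless responses $g_0(\bm{X}_j)$; since $\hat{g}_n^\circ + g_0 \in G_2$ is feasible for the constrained problem, $T_n \le \|\tilde{g}_n - (\hat{g}_n^\circ + g_0)\|_{n,m} \le T_n^\circ + \|\tilde{g}_n^\dagger - g_0\|_{n,m} = T_n^\circ + O_p(h^2)$. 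This clean inequality replaces your partition argument entirely.
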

	
See also Section~\ref{sec:5.simulation} for the finite-sample performance of our test in a simulation study, where we demonstrate that the proposed test controls both Type I and Type II errors reasonably well. Additionally, Appendix \ref{sec:5.2test} describes our procedure for testing affinity using SCKLS. 
    
	\section{Simulation study}
	\label{sec:5.simulation}
	\subsection{Numerical experiments on estimation}
	\subsubsection{The setup}
	We now examine the finite sample performance and robustness of the proposed estimator through Monte Carlo simulations. We run our experiments on a computer with Intel Core2 Quad CPU 3.00 GHz and 8GB RAM. We compare the performance of SCKLS is compared with that of CNLS and LL. See Appendix~\ref{App:CompResults} for a comparisons of SCKLS with CWB. For the SCKLS and the CNLS estimator, we solve the quadratic programming problems with MATLAB using the built-in quadratic programming solver, \texttt{quadprog}. We run two sets of experiments varying the number of observations ($n$), the number of evaluation points ($m$), and the number of the inputs ($d$). 
	We also run additional experiments to show the robust performance of the SCKLS estimator under alternative conditions. See Appendix~\ref{App:CompResults} for the results. 
	
	We measure the estimator's performance using Root Mean Squared Errors (RMSE) based on two criteria: the distance from the estimated function to the true function measured 1) at the observed points and 2) at the evaluation points constructed on an uniform grid 
	, respectively. As CNLS estimates hyperplanes at observation points, we use linear interpolation to obtain the RMSE of CNLS\footnote{The CNLS estimates include the second stage linear programming estimation procedure described in \cite{kuosmanen2012stochastic} to find the minimum extrapolated production function.}. 
	We replicate each scenario 10 times and report the average and standard deviation.
	
	\subsubsection{Choosing of the tuning parameters}
	For the SCKLS estimator, we use the Gaussian kernel function $K(\cdot)$ and leave-one-out cross-validation (LOOCV) for bandwidth selection. LOOCV is a data-driven method, and has been shown to perform well for unconstrained kernel estimators such as local linear \citep{stone1977consistent}. We apply LOOCV procedure on unconstrained estimates (i.e. local linear) to select the bandwidth for SCKLS to reduce the computational burden and because SCKLS is relatively insensitive to the bandwidth choice (see for example Section~\ref{sec:5.1.exp1}). For further computational improvements, we apply the iterative algorithm described in Appendix \ref{App:AppendixD}.	
		
	\subsubsection{Results}
	\paragraph{Fixed number of evaluation points}
	\label{sec:5.1.exp1}
	\begin{experiment}
	\label{exp:1}
	We consider a Cobb--Douglas production function with $d$-inputs and one-output, $g_0(x_1,\ldots,x_d)=\prod_{k=1}^{d}x_k^{\frac{0.8}{d}}$. For each pair $(\bm{X}_j,y_j)$, each component of the input, $\bm{X}_{jk}$, is randomly and independently drawn from uniform distribution $unif[1,10]$, and the additive noise, $\epsilon_j$, is randomly sampled from a normal distribution, $N(0,0.7^2)$. We consider 15 different scenarios with different numbers of observations (100, 200, 300, 400 and 500) and input dimensions (2, 3 and 4). The structure and data generation process of Experiment \ref{exp:1} follows \cite{lee2013more}. We fix the number of evaluation points at approximately 400 and locate them on a uniform grid.
	\end{experiment}
	
	For this experiment, we compare the following four estimators: SCKLS, CNLS, Local Linear Kernel (LL), and parametric Cobb--Douglas estimator. The latter estimator serves as a baseline because it is correctly specified parametric form. Tables \ref{tab:1.exp1obs} and \ref{tab:2.exp1grd} show for Experiment \ref{exp:1} the RMSE measured on observation points and evaluation points, respectively. The number in parentheses is the standard deviation of RMSE values computed by 10 replications. Note the standard derivations are generally small compared to the parameter estimates, which indicates low variability even after only 10 replications. A more extensive set of results for this experiment is summarized in  Appendix \ref{App:CompResults}. The SCKLS estimator has the lowest RMSE in most scenarios even when RMSE is measured on observation points (note that the SCKLS estimator imposes the global shape constraints via evaluation points in Equation~(\ref{eq:6.SCKLS})). Also as expected, the performance of SCKLS estimator improves as the number of observation points increases. Moreover, the SCKLS estimator performs better than the LL estimator particularly in higher dimensional functional estimation. This provides empirical evidence that the shape constraints in SCKLS are helpful in improving the finite sample performance as compared to LL. Note that LL appears to have larger RMSE values on evaluation points which are located in input space regions with sparse observations. This implies that the SCKLS estimator has more robust out-of-sample performance than the LL estimator due to the shape constraints. We also observe that the performance of the CNLS estimator measured at the evaluation points is worse than that measured at the observations. CNLS often has ill-defined hyperplanes which are very steep/shallow at the edge of the observed data, and this over-fitting leads to poor out-of-sample performance. In contrast, the SCKLS estimator performs similarly for both the observation points and evaluation points, because the construction of the grid that completely covers the observed data makes the SCKLS estimator more robust. 
	
	\renewcommand{\arraystretch}{0.65}
	\begin{table}[!b]
		\footnotesize
		\centering
		\caption{RMSE on observation points for Experiment \ref{exp:1}.}
		\begin{tabular}{llccccc}
			\toprule
			& & \multicolumn{5}{c}{Average of RMSE on observation points} \\
			\multicolumn{2}{c}{Number of observations} &
			\multicolumn{1}{c}{100} & \multicolumn{1}{c}{200} & \multicolumn{1}{c}{300} & \multicolumn{1}{c}{400} & \multicolumn{1}{c}{500} \\
			\midrule
			\multirow{7}[2]{*}{2-input} & \multicolumn{1}{l}{SCKLS} & \bf 0.193 & 0.171 & 0.141 & \bf 0.132 & 0.118 \\
			& \multicolumn{1}{l}{} & (0.053) & (0.047) & (0.032) & (0.029) & (0.017) \\
			& \multicolumn{1}{l}{CNLS} & 0.229 & \bf 0.163 & \bf 0.137 & 0.138 & \bf 0.116 \\
			& \multicolumn{1}{l}{} & (0.042) & (0.037) & (0.010) & (0.027) & (0.016) \\
			& \multicolumn{1}{l}{LL} & 0.212 & 0.166 & 0.149 & 0.152 & 0.140 \\
			& \multicolumn{1}{l}{} & (0.079) & (0.042) & (0.028) & (0.028) & (0.028) \\
			\cmidrule{2-7}
			& \multicolumn{1}{l}{Cobb--Douglas } & 0.078 & 0.075 & 0.048 & 0.039 & 0.043 \\
			\midrule
			\multirow{7}[2]{*}{3-input} & \multicolumn{1}{l}{SCKLS} &\bf 0.230 & \bf 0.187 & \bf 0.183 & \bf 0.152 & \bf 0.165 \\
			& \multicolumn{1}{l}{} & (0.050) & (0.026) & (0.032) & (0.019) & (0.031) \\
			& \multicolumn{1}{l}{CNLS} & 0.294 & 0.202 & 0.189 & 0.173 & 0.168 \\
			& \multicolumn{1}{l}{} & (0.048) & (0.035) & (0.020) & (0.014) & (0.020) \\
			& \multicolumn{1}{l}{LL} & 0.250 & 0.230 & 0.235 & 0.203 & 0.181 \\
			& \multicolumn{1}{l}{} & (0.068) & (0.050) & (0.052) & (0.050) & (0.021) \\
			\cmidrule{2-7}
			& \multicolumn{1}{l}{Cobb--Douglas } & 0.104 & 0.089 & 0.070 & 0.047 & 0.041 \\
			\midrule
			\multirow{7}[2]{*}{4-input} & \multicolumn{1}{l}{SCKLS} & \bf 0.225 & \bf 0.248 & \bf 0.228 & \bf 0.203 & \bf 0.198 \\
			& \multicolumn{1}{l}{} & (0.038) & (0.020) & (0.037) & (0.042) & (0.028) \\
			& \multicolumn{1}{l}{CNLS} & 0.315 & 0.294 & 0.246 & 0.235 & 0.214 \\
			& \multicolumn{1}{l}{} & (0.039) & (0.027) & (0.024) & (0.029) & (0.015) \\
			& \multicolumn{1}{l}{LL} & 0.256 & 0.297 & 0.252 & 0.240 & 0.226 \\
			& \multicolumn{1}{l}{} & (0.044) & (0.057) & (0.056) & (0.060) & (0.038) \\
			\cmidrule{2-7}
			& \multicolumn{1}{l}{Cobb--Douglas } & 0.120 & 0.073 & 0.091 & 0.067 & 0.063 \\
			\bottomrule
		\end{tabular}%
		\label{tab:1.exp1obs}%
	\end{table}%
	
	\begin{table}
		\footnotesize
		\centering
		\caption{RMSE on evaluation points for Experiment \ref{exp:1}.}
		\begin{tabular}{llccccc}
			\toprule
			& & \multicolumn{5}{c}{Average of RMSE on evaluation points} \\
			\multicolumn{2}{c}{Number of observations} &
			\multicolumn{1}{c}{100} & \multicolumn{1}{c}{200} & \multicolumn{1}{c}{300} & \multicolumn{1}{c}{400} & \multicolumn{1}{c}{500} \\
			\midrule
			\multirow{7}[0]{*}{2-input} & \multicolumn{1}{l}{SCKLS} &\bf 0.219 & 0.189 & \bf0.150 & \bf0.147 & \bf0.128 \\
			& \multicolumn{1}{l}{} & (0.053) & (0.057) & (0.034) & (0.030) & (0.021) \\
			& \multicolumn{1}{l}{CNLS} & 0.350 & 0.299 & 0.260 & 0.284 & 0.265 \\
			& \multicolumn{1}{l}{} & (0.082) & (0.093) & (0.109) & (0.119) & (0.078) \\
			& \multicolumn{1}{l}{LL} & 0.247 & \bf0.182 & 0.167 & 0.171 & 0.156 \\
			& \multicolumn{1}{l}{} & (0.101) & (0.053) & (0.030) & (0.030) & (0.034) \\
			\cmidrule{2-7}
			& \multicolumn{1}{l}{Cobb--Douglas } & 0.076 & 0.076 & 0.049 & 0.040 & 0.043 \\
			\midrule
			\multirow{7}[0]{*}{3-input} & \multicolumn{1}{l}{SCKLS} & \bf0.283 & \bf0.231 & \bf0.238 & \bf0.213 & \bf0.215 \\
			& \multicolumn{1}{l}{} & (0.072) & (0.033) & (0.030) & (0.029) & (0.034) \\
			& \multicolumn{1}{l}{CNLS} & 0.529 & 0.587 & 0.540 & 0.589 & 0.598 \\
			& \multicolumn{1}{l}{} & (0.112) & (0.243) & (0.161) & (0.109) & (0.143) \\
			& \multicolumn{1}{l}{LL} & 0.336 & 0.340 & 0.360 & 0.326 & 0.264 \\
			& \multicolumn{1}{l}{} & (0.085) & (0.093) & (0.108) & (0.086) & (0.042) \\
			\cmidrule{2-7}
			& \multicolumn{1}{l}{Cobb--Douglas } & 0.116 & 0.098 & 0.080 & 0.052 & 0.046 \\
			\midrule
			\multirow{7}[0]{*}{4-input} & \multicolumn{1}{l}{SCKLS} & \bf0.321 & \bf0.357 & \bf0.329 & \bf0.308 & \bf0.290 \\
			& \multicolumn{1}{l}{} & (0.046) & (0.065) & (0.049) & (0.084) & (0.044) \\
			& \multicolumn{1}{l}{CNLS} & 0.845 & 0.873 & 0.901 & 0.827 & 0.792 \\
			& \multicolumn{1}{l}{} & (0.188) & (0.137) & (0.151) & (0.235) & (0.091) \\
			& \multicolumn{1}{l}{LL} & 0.482 & 0.527 & 0.483 & 0.495 & 0.445 \\
			& \multicolumn{1}{l}{} & (0.115) & (0.125) & (0.146) & (0.153) & (0.074) \\
			\cmidrule{2-7}
			& \multicolumn{1}{l}{Cobb--Douglas } & 0.146 & 0.091 & 0.115 & 0.081 & 0.080 \\
			\bottomrule
		\end{tabular}%
		\label{tab:2.exp1grd}%
	\end{table}%
	
	
	We also conduct simulations with different bandwidths to analyze the sensitivity of each estimator to bandwidths. We compare SCKLS and LL with bandwidth $h\in[0,10]$ with an increment by 0.01 for the 1-input setting, and we use bandwidth $\bm{h}\in[0,5] \times [0,5]$ with an increment by 0.25 in each coordinate for the 2-input setting. We simulate 100 datasets to compute the RMSE for each bandwidth as well as for the bandwidth via LOOCV. Figure \ref{fig:A1.RMSE} displays the average RMSE of each estimator. The histogram shows the distribution of bandwidths selected by LOOCV. The instances when SCKLS and LL provide the lowest RMSE are shown in light gray and dark gray respectively. For the one-input scenario, the SCKLS estimator performs better than the LL estimator for bandwidth between 0.25 - 2.25 as shown in (a). For the two-input scenario, the SCKLS estimator performs better for most of the LOOCV values as shown by the majority of the histogram colored in light gray. This indicates that LOOCV, calculated using the unconstrained estimator, provides bandwidths that work well for the SCKLS estimator. Importantly, the SCKLS estimator does not appear to be very sensitive to the bandwidth selection method since, heuristically, the shape constraints help reduce the variance of the estimator.  Finally, we note that similar results can be obtained in experimental settings with lower signal-to-noise level, or with non-uniform input. See Appendix~\ref{App:CompResults} for more details.
	\begin{figure}
		\centering
		\subfloat[One-input]{\includegraphics[width=0.5\textwidth]{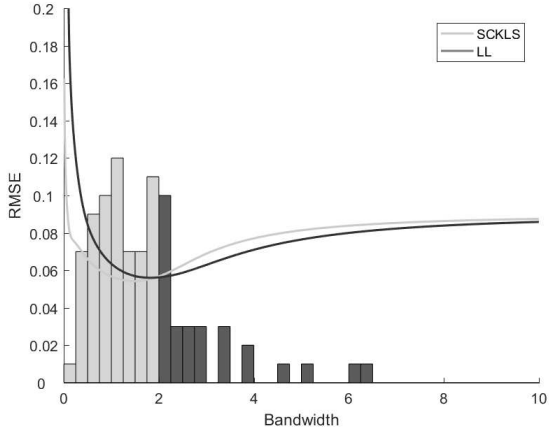}\label{fig:RMSE1}}
		\hfill
		\subfloat[Two-input]{\includegraphics[width=0.5\textwidth]{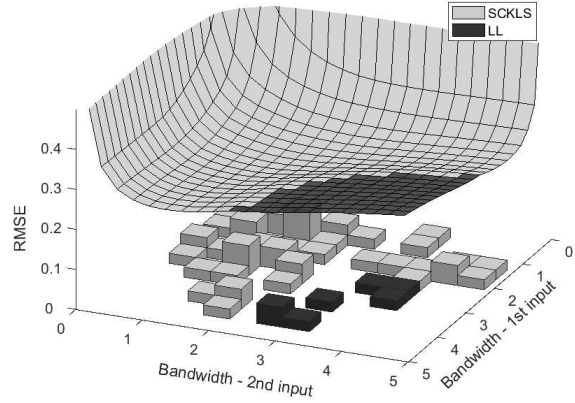}\label{fig:RMSE2}}
		\caption{The histogram shows the distribution of bandwidths selected by LOOCV. The curves show the relative performance of each estimator.}
		\label{fig:A1.RMSE}
	\end{figure}

	\paragraph{Different numbers of evaluation points}
	\label{sec:5.1.3.2 exp4}
	\begin{experiment}
		\label{exp:4}
		The setting is the same as Experiment \ref{exp:1}. However, now we consider 9 different scenarios with different numbers of evaluation points (100, 300 and 500) and input dimensions (2, 3 and 4). We fix the number of observed points at 400.
	\end{experiment}
	
	We show the performance of SCKLS. Table \ref{tab:8.exp4obs} and \ref{tab:9.exp4grd} shows for Experiment \ref{exp:4} the RMSE measured on observations and evaluation points respectively. 
	Both tables show that empirically even if we increase the number of evaluation points, the RMSE value does not change significantly. 
	This has important implications for the running time. Specifically, we can reduce the calculation time by using a rough grid without sacrificing too much in terms of RMSE performance of the estimator.
	
	\newcolumntype{L}[1]{>{\raggedright\arraybackslash}p{#1}}
	\newcolumntype{C}[1]{>{\centering\arraybackslash}p{#1}}
	\newcolumntype{R}[1]{>{\raggedleft\arraybackslash}p{#1}}
	
	\begin{table}[!htp]
		\footnotesize
		\centering\arraybackslash
		\caption{RMSE on observation points for Experiment \ref{exp:4}.}
		\begin{tabular}{rlR{0.7in}R{0.7in}R{0.7in}}
			\toprule
			&       & \multicolumn{3}{c}{Average of RMSE on observation points} \\
			\multicolumn{2}{c}{Number of evaluation points} & 100   & 300   & 500 \\
			\midrule
			\multicolumn{1}{c}{2-input} & SCKLS & 0.142 & 0.141 & 0.141 \\
			\midrule
			\multicolumn{1}{c}{3-input} & SCKLS & 0.198 & 0.203 & 0.197 \\
			\midrule
			\multicolumn{1}{c}{4-input} & SCKLS & 0.239 & 0.207 & 0.206 \\
			\bottomrule
		\end{tabular}%
		\label{tab:8.exp4obs}%
	\end{table}%
	
	\begin{table}[!htp]
		\footnotesize
		\centering
		\caption{RMSE on evaluation points for Experiment \ref{exp:4}.}
		\begin{tabular}{rlR{0.7in}R{0.7in}R{0.7in}}
			\toprule
			& & \multicolumn{3}{c}{Average of RMSE on evaluation points} \\
			\multicolumn{2}{c}{Number of evaluation points} & 100   & 300   & 500 \\
			\midrule
			\multicolumn{1}{c}{2-input} & SCKLS  & 0.181 & 0.164 & 0.158 \\
			\midrule
			\multicolumn{1}{c}{3-input} & SCKLS  & 0.304 & 0.267 & 0.257 \\
			\midrule
			\multicolumn{1}{c}{4-input} & SCKLS  & 0.383 & 0.296 & 0.270 \\
			\bottomrule
		\end{tabular}%
		\label{tab:9.exp4grd}%
	\end{table}%

	\subsection{Numerical experiments on testing the imposed shape}
	\begin{experiment}
	\label{exp:test}
    We test monotonicity and concavity for data generated from the following single-input and single-output DGP:
    \begin{equation}
	    g_0(x) = x^p
  		\label{eq:DGP1}
    \end{equation}
    and 
    \begin{equation}
	    g_0(x)  = \frac{1}{1+\exp(-5\log (2x))}.
	    \label{eq:DGP2}
    \end{equation}
    With $n$ observations, for each pair $(X_j,y_j)$, each input, $X_{j}$, is randomly and independently drawn from uniform distribution $unif[0,1]$. In this simulation, we use the following multiplicative noise to validate whether the wild bootstrap can handle non-homogeneous noise. 
    \[y_j =  g_0(X_j)+(X_j+1)\cdot\epsilon_j ,\]
    where $\epsilon_j$, is randomly and independently sampled from a normal distribution, $N(0,\sigma^2)$. We use three different DGP scenarios A, B and C. For scenarios A and B, we use function (\ref{eq:DGP1}) where the exponent parameter $p$ defines whether the function $g_0$ is an element of the class of functions $G_2$ or not. We use $p=\{0,2\}$ for scenarios A and B respectively, where $g_0\in G_2$ if $p=0$, and $g_0\notin G_2$ if $p=2$ since $g_0$ is strictly convex. For scenario C, we consider an ``S''-shape function defined by (\ref{eq:DGP2}) which violates both global concavity and convexity. We consider different sample sizes $n=\{100,300,500\}$ and standard deviation of the noise $\sigma=\{0.1,0.2\}$, and perform 500 simulations to compute the rejection rate for each scenario. We assume that we do not know the distribution of the noise in advance and use the wild bootstrap procedure described in Section~\ref{sec:5.1test} with $B=200$. 
    \end{experiment}

    Table \ref{tab:test1} shows the rejection rate for each DGP. For high signal-to-noise ratio scenarios ($\sigma=0.1$), the test works well even with a small sample size. Our test is able to control the Type I error, as illustrated in scenario A. In addition, the Type I\hspace{-.1em}I error of our test is small for the scenarios B and C where shape constraints are violated by the DGP. Furthermore, for low signal-to-noise ratio scenarios ($\sigma=0.2$), the rejection rate for scenarios B and C significantly improves when the sample size is increased from 100 to 300. Indeed, for larger noise scenarios more data is required for the test to have power. Thus, our test seems informative enough to guide users to avoid imposing shape constraints on the data generated from misspecified functions.
	
	 \begin{table}[htbp]
	 	\centering
	 	\caption{Rejection rate (\%) of the test for monotonicity and concavity}
	 	\begin{tabular}{cc|rr|rr}
	 		\toprule
	 		\multirow{2}[0]{*}{Sample size} & \multicolumn{1}{c|}{\multirow{3}[0]{*}{DGP Scenario}} &
	 		\multicolumn{4}{c}{Power of the Test ($\alpha$)} \\
	 		& & 
	 		\multicolumn{1}{r}{0.05} & \multicolumn{1}{r|}{0.01} & \multicolumn{1}{r}{0.05} & \multicolumn{1}{r}{0.01} \\
	 		\multicolumn{1}{c}{($n$)} &  & \multicolumn{2}{c|}{$\sigma=0.1$} & \multicolumn{2}{c}{$\sigma=0.2$} \\
	 		\midrule
	 		\multirow{4}[0]{*}{100} 
	 		& A ($H_0$)  & 5.8  & 2.0  & 8.0 & 2.6 \\
	 		& B ($H_1$)  & 98.6 & 94.6 & 55.0 & 36.2 \\
	 		& C ($H_1$)  & 98.6 & 94.4 & 42.6 & 24.2 \\
	 		\midrule
	 		\multirow{4}[0]{*}{300} 
	 		& A ($H_0$)  & 6.8  & 1.8  & 6.6 & 3.0 \\
	 		& B ($H_1$)  & 100.0  & 100.0  & 92.0 & 83.2 \\
	 		& C ($H_1$)  & 100.0  & 100.0  & 97.0 & 86.8 \\
	 		\midrule
	 		\multirow{4}[0]{*}{500} 
	 		& A ($H_0$)  & 5.4  & 1.6  & 5.6 & 1.4 \\
	 		& B ($H_1$)  & 100.0  & 100.0  & 99.4 & 97.2 \\
	 		& C ($H_1$)  & 100.0  & 100.0  & 99.8 & 99.4 \\
	 		\bottomrule
	 	\end{tabular}%
	 	\label{tab:test1}%
	 \end{table}%

	\newpage
	\section{Application}
	\label{sec:6.application}
	We apply the proposed method to estimate the production function for two large industries in Chile: plastic (2520) and wood manufacturing (2010) where the values inside the parentheses indicate the CIIU3 industry code. There are some existing studies which analyze the productivity of Chilean data, see for example \cite{pavcnik2002trade}, who analyzed the effect of trade liberalization on productivity improvements. Other researchers have analyzed the productivity of Chilean manufacturing including \cite{benavente2006role}, \cite{alvarez2009} and \cite{levinsohn2003estimating}. However, the above-cited work use strong parametric assumptions and older data. Most studies use the Cobb--Douglas functional form which restricts the elasticity of substitution to be 1. When diminishing marginal productivity of inputs characterizes the data, the Cobb--Douglas functional form imposes that the most productive scale size is at the origin. We relax the parametric assumptions and estimate a shape constrained production function nonparametrically using data from 2010. We examine the marginal productivity, marginal rate of substitution, and most productive scale size (MPSS) to analyze the structure of the industries. We also investigate how productivity differs between exporting and non-exporting firms, as exporting has become an important source of revenue in Chile\footnote{Note that firms' decisions, i.e., selecting labor and capital levels with considerations for productivity levels or whether to export, are potentially endogenous. Solutions to this issue are to instrument or build a structural model based on timing assumptions. Our estimator can be embedded within the estimation procedures such as those described in \cite{Ackerberg2015} to address this issue.}. See Appendix \ref{App:application} for the details of estimation and comparison across different estimators.
	
	\subsection{The census of Chilean manufacturing plants}
	\label{sec:6.1.data}
	We use the Chilean Annual Industrial Survey provided by Chile's National Institute of Statistics\footnote{The data are available at \url{http://www.ine.cl/estadisticas/economicas/manufactura}.}. The survey covers manufacturing establishments with ten or more employees. We define Capital and Labor as the input variables and Value Added as the output variable of the production function\footnote{The definition of Labor includes full-time, part-time, and outsourced labors. Capital is defined as a sum of the fixed assets balance such as buildings, machines, vehicles, furniture, and technical software. Value added is computed by subtracting the cost of raw materials and intermediate consumption from the total amount produced. Further details are available at \url{http://www.ine.cl/estadisticas/economicas/manufactura}.}. Capital and Value Added are measured in millions of Chilean peso while Labor is measured as the total man-hours per year. We use cross sectional data from the plastic and the wood industries.
	 
	Many researchers have found positive effects of exporting for other countries using parametric models. See for instance, \citet{de2007exports} and \citet{bernard2004exporting}. Here we use SCKLS to relax the parametric assumption for the production function. To capture the effects of exporting, we use a semi-parametric modeling extension of SCKLS. The partially linear model is represented as follows:
	\begin{equation}
	y_j=\bm{Z}_j'\bm{\gamma}+g_0(\bm{X}_j)+\epsilon_j ,
	\end{equation}
	where $\bm{Z}_j=(Z_{j1},Z_{j2})'$ denotes contextual variables and $\bm{\gamma}=(\gamma_1,\gamma_2)'$ is the coefficient of contextual variables. We model exporting with two variables: a dummy variable indicating the establishments that are exporting and the share of output being exported. For more details see Appendix \ref{App:AppendixI}. 
	
	Table \ref{tab:11.sumstat} presents the summary of statistics for each industry by exporter/non-exporter. We find that exporters are typically larger than non-exporter in terms of labor and capital. Input variables are positively skewed, indicating there exist many small and few large establishments. Since SCKLS with variable bandwidth ($k$-nearest neighbor) and non-uniform grid performed the best in our simulation scenarios with non-uniform input data (as indicated in Appendix~\ref{App:CompResults}), we use these options. We choose the smoothing parameter $k$ via leave-one-out cross validation. Appendix \ref{App:AppendixA} explains the details of our implementation of K-NN for the SCKLS estimator.
	
	\begin{table}
	\footnotesize
		\centering
		\caption{Statistics of Chilean manufacturing data.}
		\begin{tabular}{lrrrrrrrr}
			\toprule			\multicolumn{1}{c}{\multirow{4}[0]{*}{\shortstack{Plastic\\(2520)}}} & \multicolumn{3}{c}{Non-exporters ($n=173$)} & \multicolumn{1}{c}{} & \multicolumn{4}{c}{Exporters ($n=72$)} \\
			\cmidrule{2-4}
			\cmidrule{6-9}
			\multicolumn{1}{c}{} & Labor & \shortstack{Capital\\(million)} & \shortstack{Value\\Added\\(million)} &       & Labor & \shortstack{Capital\\(million)} & \shortstack{Value\\Added\\(million)} & \shortstack{Share of\\Exports} \\
			\midrule
			mean  & 92155 & 725.85 & 546.93 &       & 240890 & 2859  & 1733.9 & 0.147 \\
			median & 55220 & 258.41 & 247.05 &       & 180330 & 1329.1 & 1054.9 & 0.0524 \\
			std   & 106530 & 1574  & 1068.1 &       & 212480 & 3840.2 & 1678.8 & 0.201 \\
			skewness & 3.301 & 5.2052 & 5.9214 &       & 1.3681 & 2.4594 & 1.0678 & -0.303 \\
			\midrule
			\multicolumn{1}{c}{\multirow{3}[0]{*}{\shortstack{Wood\\(2010)}}} & \multicolumn{3}{c}{Non-exporters ($n=97$)} & \multicolumn{1}{c}{} & \multicolumn{4}{c}{Exporters ($n=35$)} \\
			\cmidrule{2-4}
			\cmidrule{6-9}
			\multicolumn{1}{c}{} & Labor & Capital &\shortstack{Value\\Added} &       & Labor & Capital & \shortstack{Value\\Added} & \shortstack{Share of\\Exports} \\
			\midrule
			mean  & 76561 & 364.93 & 334.83 &       & 501470 & 3063.4 & 4524.1 & 0.542 \\
			median & 44087 & 109.48 & 115.39 &       & 378000 & 2195.4 & 2673.5 & 0.648 \\
			std   & 78057 & 702.35 & 555.87 &       & 436100 & 2510.3 & 4466.3 & 0.355 \\
			skewness & 2.243 & 3.5155 & 3.432 &       & 0.81454 & 0.63943 & 1.0556 & -0.303 \\
			\bottomrule
		\end{tabular}%
		\label{tab:11.sumstat}%
	\end{table}%
	
	Figure \ref{fig:3.input} is a plot of labor and capital for each industry and shows input data is sparse for large establishments. \cite{beresteanu2005nonparametric} proposed to include shape constraints only for the evaluation points that are close to the observations. Thus, in addition to using a percentile grid of evaluation points, we propose to use the evaluation points that are inside the convex hull of observed input $\{\bm{X}_j\}_{j=1}^n$. See Appendix \ref{App:application} for details.
	
	\begin{figure}
		\centering
		\subfloat[Plastic (2520)]{\includegraphics[width=0.5\textwidth]{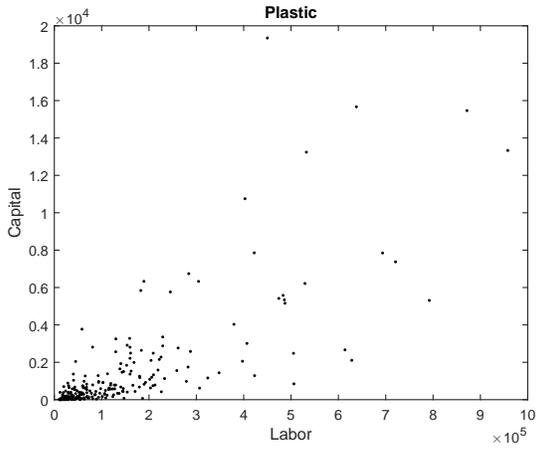}\label{fig:f1b}}
		\hfill
		\subfloat[Wood (2010)]{\includegraphics[width=0.5\textwidth]{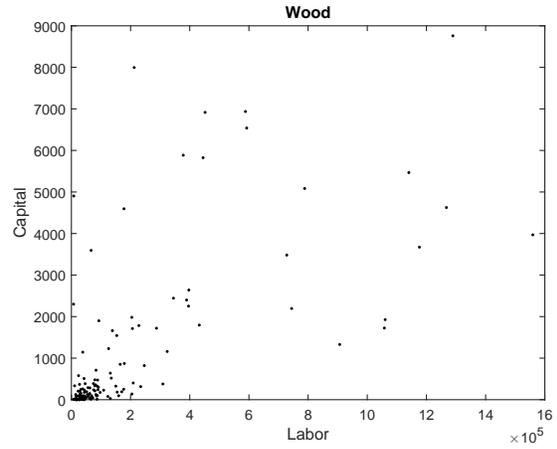}\label{fig:f2b}}
		\caption{Labor and Capital of each industry.}
		\label{fig:3.input}
	\end{figure}
	
	We begin by testing if the Cobb--Douglas production function is appropriate for our data. We use the hypothesis test for correct parametric specification described in \cite{henderson2015applied}\footnote{We apply a Cobb--Douglas OLS to the second stage data $\{\bm{X}_j, y_j-\bm{Z}_j\bm{\gamma}\}_{j=1}^n$ which removes the effect of contextual variables from observed output. See Appendix \ref{App:AppendixI} for details.}. The resulting $p$-value is 0.092 for the plastic industry and 0.007 for the wood industry, respectively. Therefore, the Cobb--Douglas parametric specification is likely to be wrong, particularly applied to the wood industry. 
	
	Next, we apply the test proposed in Section \ref{sec:5.1test} to determine if imposing global concavity and monotonicity shape constraints is appropriate. We estimate a $p$-value of $0.302$ for the plastic industry and $0.841$ for the wood industry, respectively. For both industries, the estimated $p$-value is not small enough to reject $H_0$, which means that the observed data is likely to satisfy the shape constraints imposed.
	
	\subsection{Estimated production function and interpretation}
	\label{sec:6.2.est}
	
	We estimate a semi-parametric model with a nonparametric shape constrained production function, a linear model for exporting share of sales, and a dummy variable for exporting. Table~\ref{tab:12.fit} shows the goodness of fit $(R^2)$ of the production function: 71.1\% of variance is explained in the plastic industry while 43.8\% of variance is explained in the wood industry.
	
	\begin{table}[!htbp]
		\footnotesize
		\centering
		\caption{SCKLS fitting statistics for cross sectional data.}
		\begin{tabular}{lcc}
			\toprule
			Industry & \shortstack{Number of\\observations} & \shortstack{$R^2$}  \\
			\midrule
			Plastic & 245   & 71.1\%  \\
			Wood  & 132   & 43.8\%  \\
			\bottomrule
		\end{tabular}%
		\label{tab:12.fit}%
	\end{table}%
	
	Table \ref{tab:13.est} reports additional information characterizing the production function: the marginal productivity and the marginal rates of substitution at the 10, 25, 50, 75 and 90 percentiles are reported for both measures. Here, the rate of substitution indicates how much labor is required to maintain the same level of output when we decrease a unit of capital. When comparing the two industries, we find that the wood industry has a larger marginal rate of substitution than the plastic industry. This indicates that capital is more critical in the wood industry than the plastic industry.
	
	We also compare the estimated production function by the local linear and the SCKLS estimators. Figure~\ref{fig:prod_func_plastic} and Figure~\ref{fig:prod_func_wood} show the estimated production function within the convex hull of observations for plastic and wood industries, respectively. Visually, the production function estimated by the LL estimator is difficult to interpret and the values of important economic quantities such as marginal products and marginal rates of substitution are also hard to interpret. In particular, it is not possible to identify most productive scale size.  
	
	\begin{table}[!htbp]
		\footnotesize
		\centering
		\caption{Characteristics of the production function.}
		\begin{tabular}{lrrr}
			\toprule
			& \multicolumn{3}{c}{Plastic (2520)} \\
			\cmidrule{2-4}
			& \multicolumn{2}{c}{Marginal Productivity} & \multicolumn{1}{c}{Marginal Rate of Substitution}  \\
			& \multicolumn{1}{c}{Labor ($=b_l$)} & \multicolumn{1}{c}{Capital ($=b_k$)} & \multicolumn{1}{c}{($=b_k/b_l$)} \\
			& \multicolumn{1}{c}{(million peso/man hours)} & \multicolumn{1}{c}{(peso/peso)} &  \\
			\midrule
			\multicolumn{1}{l}{10th percentile} & 0.00396 & 0.111 & 23.3 \\
			\multicolumn{1}{l}{25th percentile} & 0.00523 & 0.139 & 23.9 \\
			\multicolumn{1}{l}{50th percentile} & 0.00579 & 0.139 & 24.0 \\
			\multicolumn{1}{l}{75th percentile} & 0.00579 & 0.139 & 35.3 \\
			\multicolumn{1}{l}{90th  percentile} & 0.00579 & 0.260 & 44.8 \\
			\midrule
			& \multicolumn{3}{c}{Wood (2010)} \\
			\cmidrule{2-4}
			& \multicolumn{2}{c}{Marginal Productivity} & \multicolumn{1}{c}{Marginal Rate of Substitution}  \\
			& \multicolumn{1}{c}{Labor ($=b_l$)} & \multicolumn{1}{c}{Capital ($=b_k$)} & \multicolumn{1}{c}{($=b_k/b_l$)} \\
			\midrule
			10th percentile & 1.46$\times10^{-18}$ & 0.816 & 760 \\
			25th percentile & 8.55$\times10^{-16}$ & 0.816 & 760 \\
			50th percentile & 0.00133 & 1.01  & 760 \\
			75th percentile & 0.00133 & 1.01  & 9.73$\times10^{14}$ \\
			90th  percentile & 0.00133 & 1.01  & 5.59$\times10^{17}$ \\
			\bottomrule
		\end{tabular}%
		\label{tab:13.est}%
	\end{table}%
	

	\begin{figure}
		\centering
		\subfloat[Local Linear]{\includegraphics[width=0.5\textwidth]{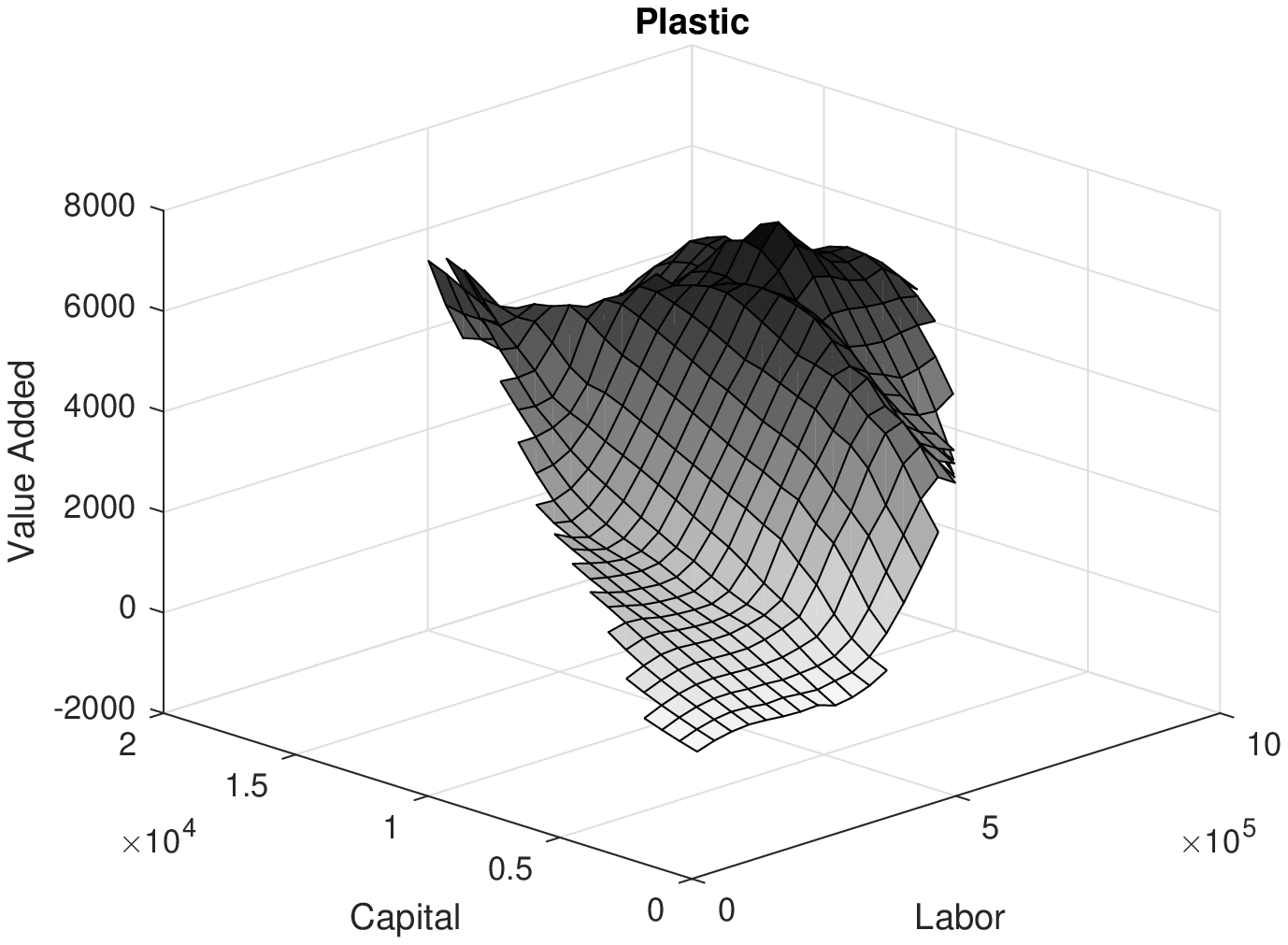}}
		\hfill
		\subfloat[SCKLS]{\includegraphics[width=0.5\textwidth]{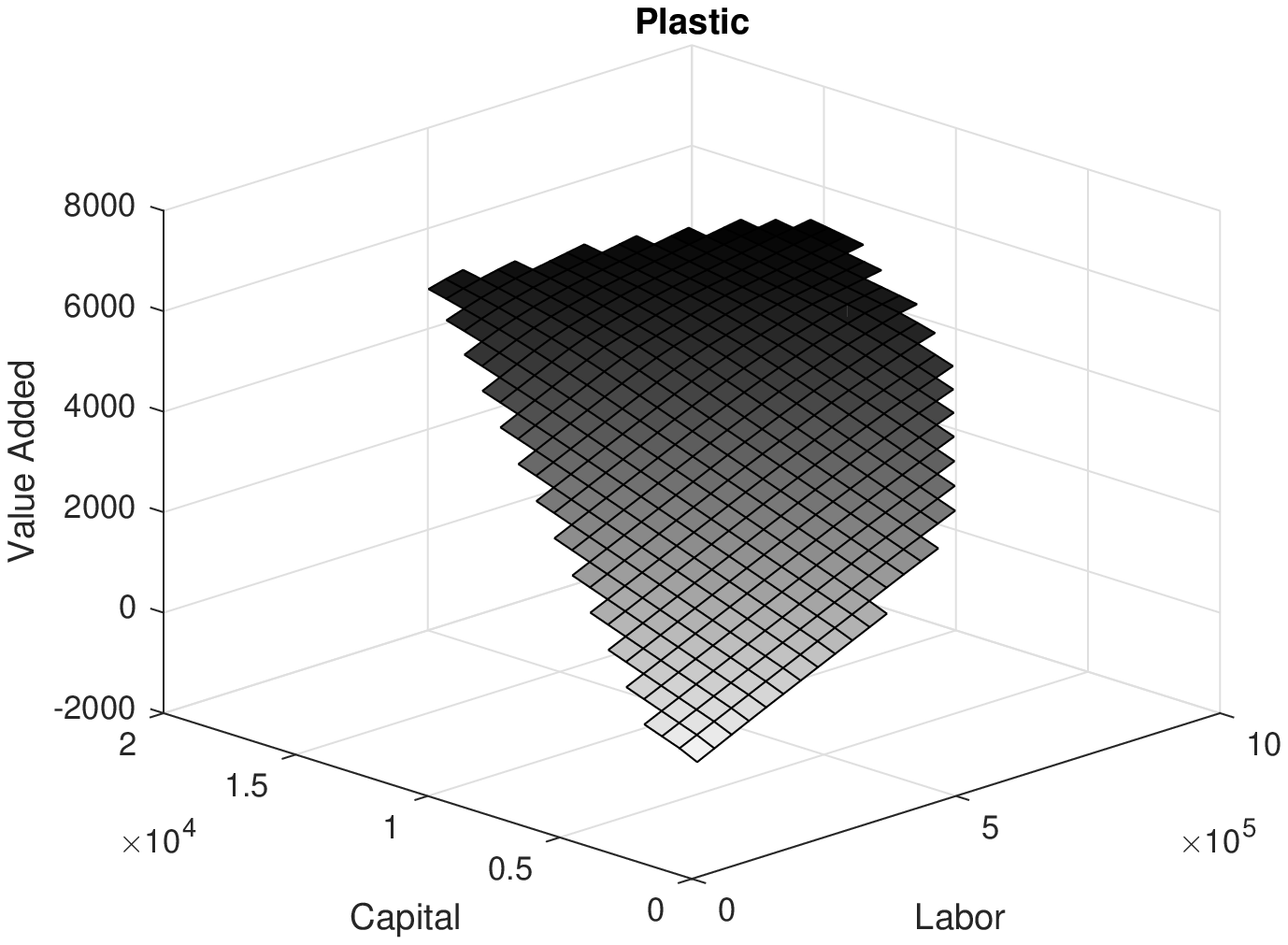}}
		\caption{Production function estimated by LL and SCKLS for the plastic industry (2520)}
		\label{fig:prod_func_plastic}
	\end{figure}
	\begin{figure}
		\centering
		\subfloat[Local Linear]{\includegraphics[width=0.5\textwidth]{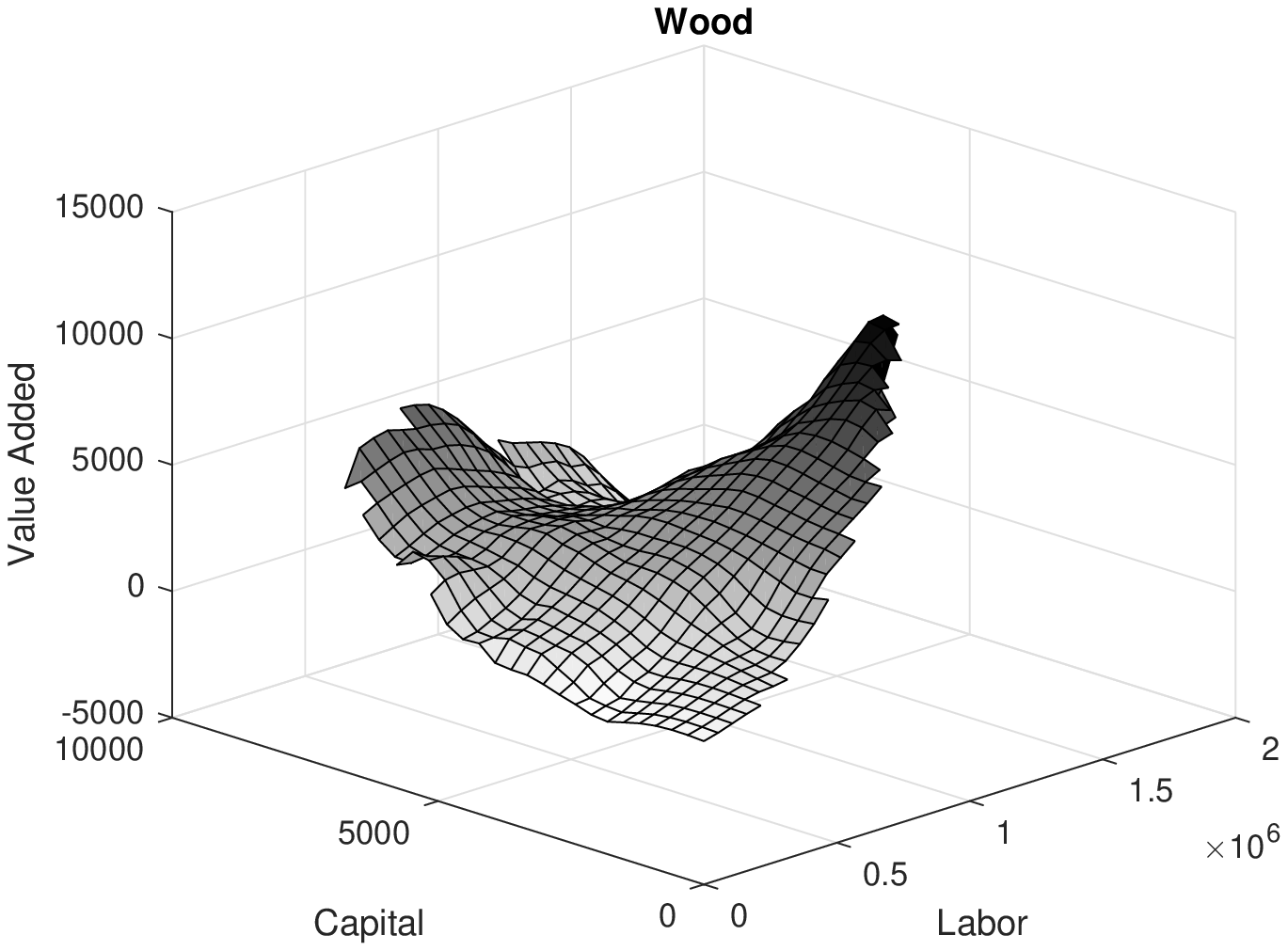}}
		\hfill
		\subfloat[SCKLS]{\includegraphics[width=0.5\textwidth]{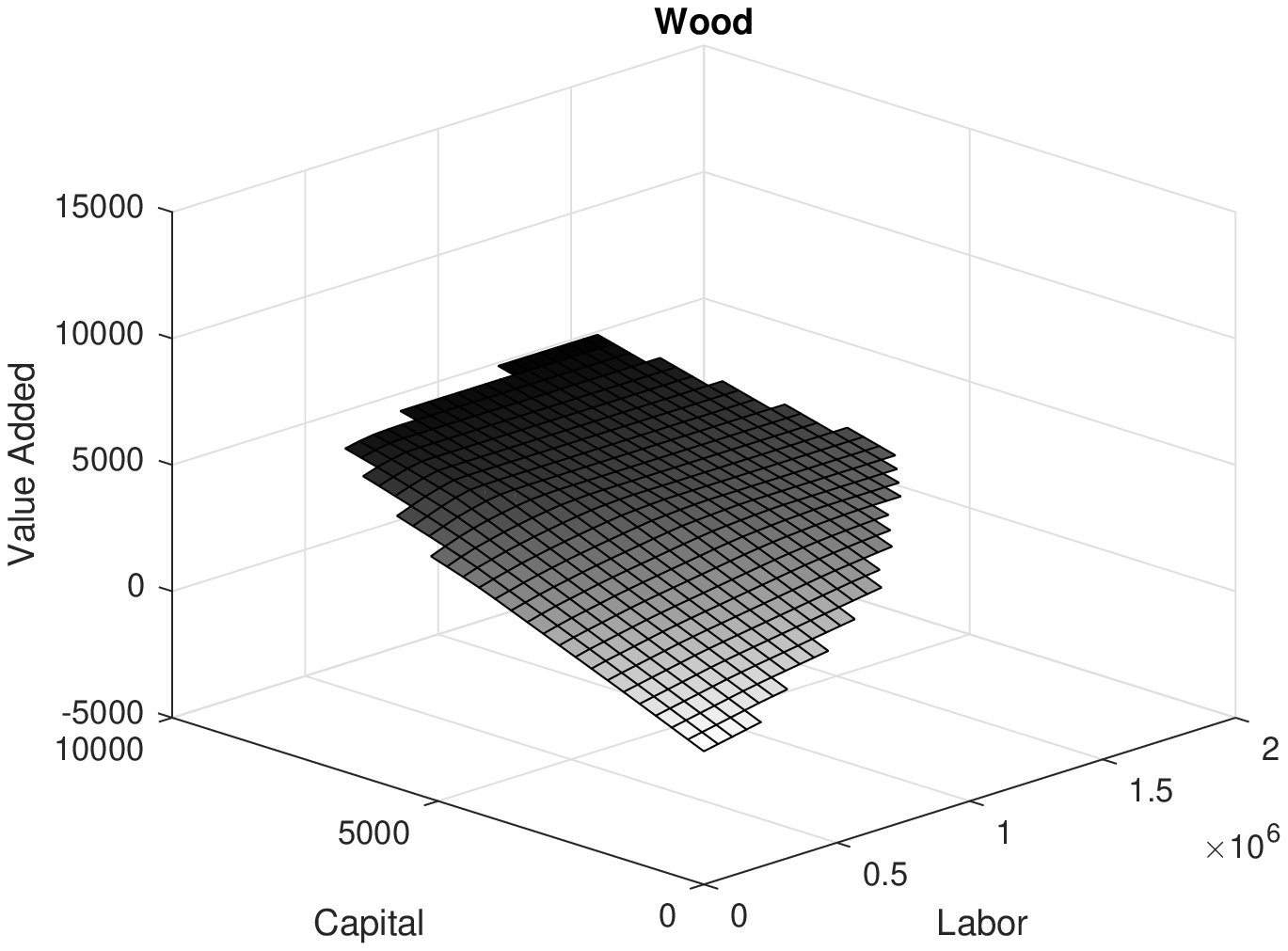}}
		\caption{Production function estimated by LL and SCKLS shape constraints for the wood industry (2010)}
		\label{fig:prod_func_wood}
	\end{figure}
	
	Table \ref{tab:14.Contextual var} reports the estimated coefficients for the exporting variables. In the plastic industry, the dummy variable for exporting is significant and positive while exports' share of sales is not. This indicates that the plants that export tend to produce more output than plants that do not export regardless of the export quantity. In contrast, the coefficient on the exports' share of sales is significant and positive in the wood industry while the dummy variable for exporting is not significant, indicating that establishments in the wood industry tend to be more productive the more they export. Thus, in both industries we find evidence of increased productivity for exporting firms.

	\begin{table}[!htbp]
		\small
		\centering
		\caption{Coefficient of contextual variables from a 2-stage model.}
		\begin{tabular}{lrrrrr}
			\toprule
			& \multicolumn{2}{c}{Plastic (2520)} & \multicolumn{1}{c}{} & \multicolumn{2}{c}{Wood (2010)} \\
			\cmidrule{2-3}
			\cmidrule{5-6}
			& \multicolumn{1}{c}{\shortstack{Dummy of\\exporting}} & \multicolumn{1}{c}{\shortstack{Share of exporting\\in sales}} & \multicolumn{1}{c}{} & \multicolumn{1}{c}{\shortstack{Dummy of\\exporting}} & \multicolumn{1}{c}{\shortstack{Share of exporting\\in sales}} \\
			\midrule
			\multicolumn{1}{l}{Point estimate} & 334.5 & 303.7 &       & -763.0 & 4114 \\
			\multicolumn{1}{l}{95\% lower bound} & 148.7 & -334.3 &       & -1944 & 2568 \\
			\multicolumn{1}{l}{95\% upper bound} & 520.3 & 941.8 &       & 417.7 & 5660 \\
			\multicolumn{1}{l}{$p$-value} & 4.70$\times10^{-4}$ & 0.3493 &       & 0.2033 & 5.64$\times10^{-7}$ \\
			\bottomrule
		\end{tabular}%
		\label{tab:14.Contextual var}%
	\end{table}%

	Table \ref{tab:15.MPSS} reports the most productive scale size for the 10, 25, 50, 75, 90 percentiles of Capital/Labor ratio distribution of observed input. In both industries, the observed value added output is the largest for establishments with high capital to labor ratios, indicating that capital-intensive establishments have increased actual output. Furthermore, labor-intensive establishments have smaller most productive scale size in both industries. This is consistent with the theory of the firm, i.e. firms grow and become more capital intensive over time by automating processes with capital and using less labor.

	\begin{table}[!htbp]
		\small
		\centering
		\caption{Most productive scale size for each capital/labor ratio.}
		\begin{tabular}{rrrr}
			\toprule
			& \multicolumn{3}{c}{Plastic (2520)} \\
			\cmidrule{2-4}
			& \multicolumn{1}{c}{\multirow{2}[0]{*}{MPSS Labor}} & \multicolumn{1}{c}{\multirow{2}[0]{*}{MPSS Capital}} & \multicolumn{1}{c}{Output} \\
			\multicolumn{1}{l}{Capital/Labor percentile} & \multicolumn{1}{c}{} & \multicolumn{1}{c}{} & \multicolumn{1}{c}{ (Value added)} \\
			\midrule
			\multicolumn{1}{l}{10th percentile} & 619580 & 519.1 & 3290 \\
			\multicolumn{1}{l}{25th percentile} & 529980 & 1344  & 3010 \\
			\multicolumn{1}{l}{50th percentile} & 529980 & 2604  & 3185 \\
			\multicolumn{1}{l}{75th percentile} & 529980 & 5617  & 3602 \\
			\multicolumn{1}{l}{90th  percentile} & 529980 & 10270 & 4248 \\
			\midrule
			\multicolumn{1}{l}{} & \multicolumn{3}{c}{Wood (2010)} \\
			\cmidrule{2-4}
			\multicolumn{1}{l}{} & \multicolumn{1}{c}{\multirow{2}[0]{*}{MPSS Labor}} & \multicolumn{1}{c}{\multirow{2}[0]{*}{MPSS Capital}} & \multicolumn{1}{c}{Output} \\
			\multicolumn{1}{l}{Capital/Labor percentile} & \multicolumn{1}{c}{} & \multicolumn{1}{c}{} & \multicolumn{1}{c}{ (Value added)} \\
			\midrule
			\multicolumn{1}{l}{10th percentile} & 2531100 & 741.6 & 1659 \\
			\multicolumn{1}{l}{25th percentile} & 1045000 & 1200  & 2142 \\
			\multicolumn{1}{l}{50th percentile} & 867250 & 2712  & 3470 \\
			\multicolumn{1}{l}{75th percentile} & 662700 & 4179  & 4682 \\
			\multicolumn{1}{l}{90th  percentile} & 458150 & 5644  & 5893 \\
			\bottomrule
		\end{tabular}%
		\label{tab:15.MPSS}%
	\end{table}%

	\section{Conclusion}
	\label{sec:7.conc}
	
	This paper proposed the SCKLS estimator that imposes shape constraints on a local polynomial estimator. We show the consistency and convergence rate of this new estimator under monotonicity and concavity constraints, as well as its relationship with CNLS and CWB. We also illustrate how to use SCKLS to validate the imposed shape constraints. 
	In applications where out-of-sample performance is less critical and the boundary behavior is of less concern, such as regulation applications, the CNLS estimator may be preferable because of its simplicity. In contrast, in cases where out-of-sample performance is important, such as survey data, the SCKLS estimator appears to be more robust. Simulation results reveal the SCKLS estimator outperforms CNLS and LL in most scenarios. We propose and validate the usefulness of several extensions, including variable bandwidth and non-uniform griding, which are important to estimate functions with non-uniform input data set which is common in manufacturing survey and census data. We also propose a test for the imposed shape constraints based on SCKLS. Finally, we demonstrate the SCKLS estimator empirically using Chilean manufacturing data. We compute marginal productivity, marginal rate of substitution, most productive scale size and the effects of exporting, and provide several economic insights.
	
	One limitation of the proposed SCKLS estimator is its computation efficiency due to the large number of constraints. The algorithm we proposed for reducing constraints performs well, and we demonstrate the ability to solve large problems instances within a reasonable time. Furthermore, our simulation results show good functional estimates even with a rough grid. Consequently, we can make use of the flexibility of the evaluation points to reduce the computational time of the estimator.
	
	Potential future research could focus on the bandwidth selection methods. Typically, optimal bandwidth selection methods without shape constraints try to trade bias and variance to find the best estimator in terms of RMSE. Since the imposed shape restrictions already constrain the variance of the estimator to some extent, we expect that the optimal bandwidth in the SCKLS estimator will be smaller than the optimal unconstrained estimator. Further, if systematic inefficiency is present in the data, deconvoluting the residuals following the stochastic frontier literature would allow the investigation of a production frontier.
	

	\bigskip
	\begin{center}
		{\large\bf SUPPLEMENTARY MATERIAL}
	\end{center}
	
	\begin{description}
		
		\newcounter{app_section}
		\setcounter{app_section}{1}
		\refstepcounter{app_section}
		
		\item[Appendix:] The document contains: 
		 \begin{inparaenum}[(A)]
			  	\item extensions and the relationship between estimators;
			  	\item technical proofs of the theoretical results;
			  	\item a test of affinity using SCKLS;	
			  	\item an algorithm for SCKLS computational performance;
			  	\item comprehensive results of existing and additional numerical experiments;
			  	\item semiparametric model to integrate contextual variable; and
			  	\item details of the application to the Chilean manufacturing data.
		 \end{inparaenum}
		
	\end{description}

\setcounter{table}{0}
\setcounter{figure}{0}
\setcounter{equation}{0}
\setcounter{theorem}{0}
\setcounter{lemma}{0}
\setcounter{assumption}{0}
\renewcommand{\theequation}{A.\arabic{equation}} 
\renewcommand{\thefigure}{A.\arabic{figure}}
\renewcommand{\thetable}{A.\arabic{table}} 
\newcolumntype{P}[1]{>{\centering\arraybackslash}p{#1}}
\newtheorem{prop}{Proposition}

\AtAppendix{\counterwithin{lemma}{section}}
\AtAppendix{\counterwithin{prop}{section}}
\AtAppendix{\counterwithin{assumption}{section}}

\setcounter{secnumdepth}{4}

    \newpage
    \appendix

\noindent{\huge\textbf{Appendix}}
	
    \vspace{1cm}
	 This appendix includes:
    \begin{itemize}
    	\item Extensions to SCKLS and a description of the relationship between SCKLS, CNLS and CWB (Appendix \ref{App:AppendixA}),
    	\item Technical proofs of the theoretical results (Appendix \ref{App:AppendixC}).
        \item A test of affinity based on SCKLS (Appendix \ref{sec:5.2test})	
    	\item An algorithm for SCKLS computational performance (Appendix \ref{App:AppendixD}).
    	\item Comprehensive results of existing and additional numerical experiments (Appendix \ref{App:CompResults}).
    	\item Description of a semiparametric partially linear model to integrate contextual variable (Appendix \ref{App:AppendixI}).
    	\item Details about the application to the Chilean manufacturing data (Appendix \ref{App:application})
    \end{itemize}
	
	\noindent%
	
	\newpage
	\spacingset{1.45} 
	
	\appendix
	
	\section{More on SCKLS,  CNLS and CWB} \label{App:AppendixA}
	In this section, we first give details on the extensions and practical considerations to SCKLS. We then mention some recently proposed estimators that are related to SCKLS, and make connections and comparisons among these methods. 
	
		\subsection{More on practical considerations and extensions to SCKLS}
	\label{App:Extensions}

	\subsubsection{SCKLS with general constraints}
    We focus on global concavity/convexity and monotonicity constraints in the main manuscript. But the SCKLS estimator can handle any types of shape constrained by imposing constraints on decision variables $\{a_i,\bm{b}_i\}_{i=1}^m$. We re-define the SCKLS estimator as
    \begin{equation}
	\begin{aligned}
	\label{eq:5.SCKLS_general}
	& \min_{\bm{a},\bm{b}}
	& & \sum_{i=1}^{m}\sum_{j=1}^{n}(y_j-a_i-(\bm{X}_j-\bm{x}_i)'\bm{b}_i)^2K\left(\frac{\bm{X}_j-\bm{x}_i}{\bm{h}}\right)\\
	& \mbox{subject to}
	& & l(\bm{x}_i)\leq\hat{g}^{(\bm{s})}(\bm{x}_i|\bm{\bm{a},\bm{b}})\leq u(\bm{x}_i), \; & i=1,\ldots,m\\
	\end{aligned}
	\end{equation}
	where $\bm{a} = (a_1,\ldots,a_m)'$ and $\bm{b} = (\bm{b}_1',\ldots,\bm{b}_m')'$. $l(\cdot)$ and $u(\cdot)$ represent lower and upper bounds at each evaluation point respectively. $\bm{s}$ denotes the order of partial derivative to each evaluation point $\bm{x}_i$.

	\subsubsection{SCKLS with Local Polynomial}
	\label{sec:3.1.SCKLSwLP}	
	With the proposed estimator in (\ref{eq:5.SCKLS_general}), we are only able to impose the constraints by using the functional estimate and/or first partial derivatives. For constraints involving a higher order of derivatives, we need to formulate SCKLS estimator with a higher order local polynomial function. For the multivariate local polynomial, we borrow the following notation from \citet{Masry1996multivariatelocal}.
	\[
	\begin{aligned}
	&\bm{r}=(r_1,\ldots,r_d),\quad &&\bm{r}!=r_1!\times \cdots r_d!,\quad &&&\bar{\bm{r}}=\sum_{k=1}^{d}r_k, \\
	&\bm{x}^{\bm{r}} = x_1^{r_1}\times\cdots x_d^{r_d},\quad &&\sum_{0\leq\bar{\bm{r}}\leq p}=\sum_{k=0}^{p}\sum_{r_1=0}^{k}\cdots\sum_{r_d=0}^{k},\quad&&&\mbox{and}\\
	\quad&\left(D^{\bm{r}} g\right)\left(\bm{x}\right)=\frac{\partial^{\bm{r}}g(\bm{x})}{\partial x_1^{r_1}\cdots \partial x_d^{r_d}}
	\end{aligned}
	\]
	With this notation, we can approximate any function $g:\mathbb{R}^d \rightarrow \mathbb{R}$ locally (around any $\bm{x}$) using a multivariate polynomial of total order $p$, given by
	\begin{equation}
	\label{eq:LP}
	g(\bm{z}):=\sum_{0\leq \bar{\bm{r}}\leq p}\frac{1}{\bm{r}!}\left(D^{\bar{\bm{r}}}g\right)(\bm{x}) \left(\bm{z}-\bm{x}\right)^{\bar{\bm{r}}}.
	\end{equation}
	We now define the SCKLS estimator with a local polynomial function of order $p$ as follows:
	\begin{equation}
	\begin{aligned}
	\label{eq:6.2.SCKLS_general_LP}
	& \min_{\bm{b_i}}
	& & \sum_{i=1}^{m}\sum_{j=1}^{n}\left(y_j-\sum_{0\leq\bar{\bm{r}}\leq p}\bm{b}_{i}'(\bm{X}_j-\bm{x}_i)^{\bar{\bm{r}}}\right)^2K\left(\frac{\bm{X}_j-\bm{x}_i}{\bm{h}}\right)\\
	& \mbox{subject to}
	& & l(\bm{x}_i)\leq\hat{g}^{(\bm{s})}(\bm{x}_i|\bm{\bm{b}})\leq u(\bm{x}_i), \; & i=1,\ldots,m\\
	\end{aligned}
	\end{equation}
	where $\bm{b}_{i}$ is the functional or derivative estimates at each evaluation points and $\bm{b} = (\bm{b}_1',\ldots,\bm{b}_m')'$. When we select $p=1$, then the problem becomes exactly same as the proposed estimator in (\ref{eq:5.SCKLS_general}). This extension allows us to make the proposed methods more general and applicable for other applications of shape restricted functional estimation in which higher order derivative restricts may be required. From a computational complexity point of view, it is still optimizing a quadratic objective function within a convex solution space, and thus, the problem is still typically solvable within polynomial time.
	
	As demonstrated in \citet{li2007nonparametric}, the rate of convergence of local polynomial estimator is the same for $p=1$ and $p=2$. From a theoretical perspective, one could attempt to select a polynomial estimator with $p \ge 3$ to improve its convergence performance (at least theoretical). But that would require much stronger assumption on the smoothness of $g_0$, and would lead to additional computational burden\footnote{While the optimization problem is still polynomial time solvable, the number of decision variables would increase and the constraint matrix would become significantly more dense, lending to computational challenges.}. Our experience suggests that SCKLS inherits these properties from the local polynomial method. Therefore, in practice, with only monotonicity and concavity/convexity constraints, we feel that it suffices to consider SCKLS with $p=1$ (i.e. local linear). 
	
	\subsubsection{SCKLS with $k$-nearest neighbor}
	Our primary application of interest is production functions estimated for census manufacturing data where the input distributions are often highly skewed meaning there are many small establishments, but relatively few large establishments\footnote{An establishment is defined as a single physical location where business is conducted or where services or industrial operations are performed.}. To address this issue, we propose to use a $k$-nearest neighbor ($k$-NN) approach in SCKLS which we will refer to as SCKLS $k$-NN which is in spirit similar to the extension to the CWB-type estimator proposed by \cite{li2016nonparametric}. The $k$-NN approach uses a smaller bandwidth for smoothing in dense data regions and a larger bandwidth when the data is sparse. For a further description of the method, see for example \cite{li2007nonparametric}. For any given $k$, the formulation of SCKLS $k$-NN with monotonicity and concavity constraints leads to a different weighting scheme in the objective function, as illustrated in the following.
	
	\begin{equation}
	\begin{aligned}
	\label{eq:7.SCKLSkNN}
	& \min_{a_i,\bm{b_i}}
	& & \sum_{i=1}^{m}\sum_{j=1}^{n}(y_j-a_i-(\bm{X_j}-\bm{x_i})'\bm{b_i})^2w\left(\frac{\|\bm{X_j}-\bm{x_i}\|}{R_{\bm{x_i}}}\right)\\
	& \mbox{subject to}
	& & a_i-a_l\geq \bm{b_i}'(\bm{x_i}-\bm{x_l}), \; & i,l=1,\ldots,m\\
	&
	& & \bm{b_i}\geq 0, \; & i=1,\ldots,m
	\end{aligned}
	\end{equation}
	where $w(\cdot)$ is a general weight function, $\|\cdot\|$ is the Euclidean norm and $R_{\bm{x_i}}$ denotes the Euclidean distance between $\bm{x_i}$ and $k$-th nearest neighbor of $\bm{x}_i$ among the set of all covariates $\{\bm{X}_j\}_{j=1}^n$. In practice, $k$ can be chosen by leave-one-out cross validation (LOOCV).

	\subsubsection{SCKLS with non-uniform grid}
	\label{App:nonunif}
		As noted in the paper, the SCKLS estimator requires the user to specify the number and locations of the evaluation points. We can also address the input skewness issue by constructing the evaluation points differently, using a non-uniform grid method. To do so, we first use kernel density estimation to estimate the density function for each input dimension. Then we take the equally spaced percentiles of the estimated density function and construct non-uniform grid. Figure~\ref{fig:2.nonunif grid} demonstrates how the non-uniform grid are constructed for the 2-dimensional case. In this example, we set the minimum and maximum of the observed inputs (with respect to each coordinate) as the edge of the grid, and compute equally spaced percentile. When the support of the covariates is non-regular (e.g. not a hyperrectangle), we shall limit ourselves to evaluation points inside the convex hull of $\{\bm{X}_j\}_{j=1}^n$.
	\begin{figure}[!ht]
		\begin{center}
			\includegraphics[width=3.8in]{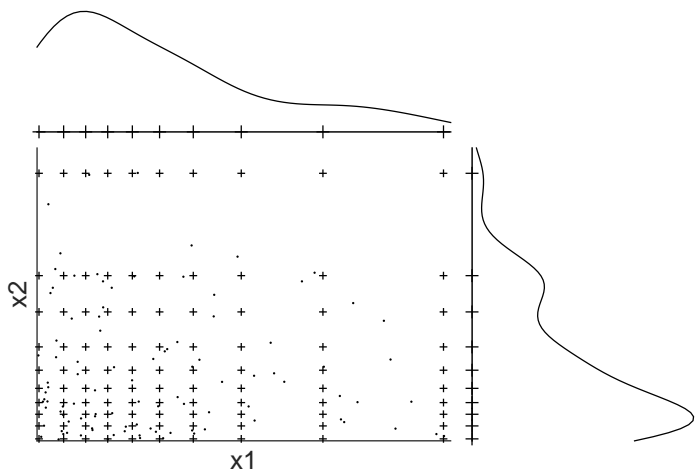}
		\end{center}
		\caption{Example of non-uniform grid with kernel density estimation. \label{fig:2.nonunif grid}}
	\end{figure}
	
	\newpage
	\subsection{Some related work}
		
	\subsubsection{Convex Nonparametric Least Squares (CNLS)}
	\label{sec:2.2.CNLS}
	
	\cite{kuosmanen2008representation} extends Hildreth's least squares approach to the multivariate setting with a multivariate input vector, and coins the term ``Convex Nonparametric Least Squares'' (CNLS)\footnote{A related maximum likelihood formulation was proposed by \cite{banker1992maximum}, with its consistency proved by \cite{sarath1997consistency}.}. CNLS builds upon the assumption that the true but unknown production function $g_0$ belongs to the class of monotonically increasing and globally concave functions, denoted by $G_2$ in this paper. Given the observations $\{\bm{X}_j,y_j\}_{j=1}^n$, a set of unique fitted values, $\hat{y}_j=\hat{\alpha}_j+\bm{\hat{\beta}_j}\bm{X}_j$, can be found by solving the quadratic programming (QP) problem
	\begin{equation}
	\begin{aligned}
	\label{eq:1.CNLS}
	& \min_{\alpha,\bm{\beta}}
	& & \sum_{j=1}^{n}(y_j-(\alpha_j+\bm{\beta}_j'\bm{X}_j))^2\\
	& \mbox{subject to}
	& & \alpha_j+\bm{\beta}_j'\bm{X}_j\leq\alpha_l+\bm{\beta}_l'\bm{X}_j, \; & j,l=1,\ldots,n\\
	&
	& & \bm{\beta_j}\geq 0, \; &j=1,\ldots,n
	\end{aligned}
	\end{equation}
	where $\alpha_j$ and $\bm{\beta}_j$ define the intercept and slope parameters that characterize the estimated set of hyperplanes. The inequality constraints in (\ref{eq:1.CNLS}) can be interpreted as a system of Afriat inequalities \citep{afriat1972efficiency, varian1984nonparametric} to impose concavity constraints. We emphasize that CNLS does not assume or restrict the domain $G_2$ to only  piece-wise affine functions.
	We also note that the functional estimates resulting from  (\ref{eq:1.CNLS}) is unique only at the observed data points. 
	In addition, when $d=1$, \cite{chen2016convex} and \cite{ghosal2016univariate} proved that the CNLS-type estimator attains $n^{-1/2}$ pointwise rate of convergence if the true function is piece-wise linear.
	
	Finally, we remark that CNLS is related to the method of sieves \citep{grenander1981abstract, ChenQiu2016} in the following way. The estimator could be rewritten as 
	\[
	\hat{g}_n \in \argmin_{g \in \mathcal{G}^n} \frac{1}{n} \sum_{j=1}^n(y_j - g(\bm{X}_j))^2,
	\]
	where 
	$
	\mathcal{G}^n = \{g:\mathbb{R}^d\rightarrow\mathbb{R}\ |\ g(\bm{x}) = \min_{j\in\{1\ldots,n\}} (\alpha_j + \bm{\beta}_j'\bm{x}), \mbox{ with } \bm{\beta}_j \ge 0 \mbox{ for } j = 1,\ldots, n\}.
	$
	However, since the sets $\mathcal{G}^1,\mathcal{G}^2, \ldots$ are not compact, most known results on sieves do not directly apply here.

	\subsubsection{Constrained Weighted Bootstrap (CWB)}
	\label{sec:2.3.CWB}
	\paragraph{Introduction}
	
	\cite{hall2001nonparametric} proposed the monotone kernel regression method in univariate function. \cite{du2013nonparametric} generalized this model to handle multiple general shape constraints for multivariate functions, which they refer to as Constrained Weighted Bootstrap (CWB). CWB estimator
	is constructed by introducing weights for each observed data point. The weights are selected to minimize the distance to unconstrained estimator while satisfying the shape constraints. The function is estimated as
	\begin{equation}
	\begin{aligned}
	\label{eq:2.def_g}
	\hat{g}(\bm{x}|\bm{p})=\sum_{j=1}^{n}p_jA_j(\bm{x})y_j
	\end{aligned}
	\end{equation}
	where $\bm{p}=(p_1,\ldots,p_n)'$, $p_j$ is the weights introduced for each observation and $A_j(\bm{x})$ is a local weighting matrix (e.g. local linear kernel weighting matrix). \cite{du2013nonparametric} relaxed the restriction imposed by \cite{hall2001nonparametric}  that $p_j$ is non-negative and propose to calculate $\bm{p}$ by minimizing its distance to unrestricted weights, $\bm{p}_u=(1/n,\ldots,1/n)'$, under derivative-based shape constraints\footnote{The use of the equality constraint $\sum_{j}p_j=1$ in \cite{du2013nonparametric} is a typo, and this condition is not used by them. In fact, it may harm the estimation procedure. Our empirical results show that this equality constraint only makes difference in very few cases and  the difference is typically small.}. The problem is formulated as follows.
	\begin{equation}
	\begin{aligned}
	\label{eq:3.CWB}
	& \min_{\bm{p}}
	& & D(\bm{p})=\sum_{j=1}^{n}(p_j-p_u)^2=\sum_{j=1}^{n}(p_j-1/n)^2\\
	& \mbox{subject to}
	& & l(\bm{x}_i)\leq\hat{g}^{(\bm{s})}(\bm{x}_i|\bm{p})\leq u(\bm{x}_i), \; & i=1,\ldots,m\\
	\end{aligned}
	\end{equation}	
	where $\bm{x_i}$ represents a set of points for evaluating constraints, the elements of $\bm{s}$ represent the order of partial derivative, and $g^{\bm{s}}(\bm{x})=[\partial^{s_1}g(\bm{x})\cdots\partial^{s_r}g(\bm{x})]/[\partial x_1^{s_1}\cdots\partial x_r^{s_r}]$ for $\bm{s}=(s_1,s_2,\ldots,s_r)$. 
	Here the shape restrictions (e.g. concavity/convexity and monotonicity constraints) are imposed at a set of evaluation points $\{\bm{x}_i\}_{i=1}^m$ through setting appropriate lower and upper bounds to the corresponding partial derivatives of the function.  
	One way to interpret the CWB estimator is as a two-step process: 1) estimate an unconstrained kernel estimator; 2) find the shape constrained function that is as close as possible (as measured by the Euclidean distance in $p$-space) to the unconstrained kernel estimator. Based on our experience, CWB tends to suffer from computational difficulties and occasionally poor estimates in small samples. We suggest  changing the objective function to minimize the distance from the estimated function to the observed data. This modification seems to improve the estimates empirically as shown in Appendix~\ref{App:CompResults}. 
	
	\paragraph{CWB estimator that minimize the distance from the observed data}\label{App:AppendixA2}
	
	We propose an extension of the CWB estimator by converting the objective function from $p$-space to $y$-space. Instead of minimizing the distance between the unconstrained estimator and the shape restricted functional estimate by minimizing the distance between the two functions in $p$-space, we propose to minimize the distance between the observed vector of $\bm{y}$ and the shape restricted functional estimates in $y$-space. The estimator,  which we shall refer to as CWB in $y$-space, is formulated as follows:
	\begin{equation}
		\begin{aligned}
		\label{eq:1.CWB}
		&  \min_{\bm{p}}
		& & D_y(\bm{p})=\sum_{j=1}^{n}(y_j-\hat{g}(\bm{X}_j|\bm{p}))^2\\
		& \mbox{subject to}
		& & l(\bm{x}_i)\leq\hat{g}^{(\bm{s})}(\bm{x}_i|\bm{p})\leq u(\bm{x}_i), \; & i=1,\ldots,m,\\
		&
		& & \sum_{j=1}^{n}p_j=1.
		\end{aligned}
	\end{equation}
	
	Since the objective function is not necessarily convex in $\bm{p}$, this problem is a general nonlinear optimization problem which is harder to solve.

	\paragraph{Calculating the first partial derivative of $\hat{g}(\bm{x}|\bm{p})$ for CWB}

	\cite{du2013nonparametric} proposed the CWB estimator which requires estimating the first partial derivatives of unconstrained functional estimates, $\hat{g}^{(1)}(\bm{x}|\bm{p})$. Here, we test two different methods of calculating the partial derivatives. The first method is to calculate the numerical derivative, $\hat{g}^{(1)}(\bm{x}|\bm{p})=\frac{\hat{g}(\bm{x}+\Delta|\bm{p})-\hat{g}(\bm{x}|\bm{p})}{\Delta}$, to obtain the approximated derivative estimate. \cite{racine2016local} shows that the numerical derivative is very close to the analytic derivative. The second method is to use the slope estimates of local linear estimator directly as a proxy for the first partial derivative. We evaluate the performance of CWB in $p$-space estimator with these two different methods. Table~\ref{tab:A1.CWB derivative} and Table~\ref{tab:A2.CWB derivative} summarize the RMSE performance against the true function on the observed points and the evaluation points respectively. The experimental setting is based on Experiment~\ref{exp:1} in Section~\ref{sec:5.simulation}.
	
	\begin{table}[ht]\
		\small
		\centering
		\caption{RMSE on observation points for different methods to obtain $\hat{g}^{(1)}(\bm{x}|\bm{p})$.}
		\begin{tabular}{llrrrrr}
			\toprule
			&       & \multicolumn{5}{c}{Average RMSE on the observation points} \\
			\multicolumn{2}{c}{Number of observations} & 100   & 200   & 300   & 400   & 500 \\
			\midrule
			\multicolumn{1}{c}{\multirow{2}[0]{*}{2-input}} & Numerical derivative & \textbf{0.260} & \textbf{0.163} & \textbf{0.143} & \textbf{0.153} & \textbf{0.164} \\
			\multicolumn{1}{c}{} & Slope estimates of LL & 0.421 & 0.357 & 0.284 & 0.306 & 0.293 \\
			\midrule
			\multicolumn{1}{c}{\multirow{2}[0]{*}{3-input}} & Numerical derivative & \textbf{0.236} & \textbf{0.256} & \textbf{0.208} & \textbf{0.246} & \textbf{0.240} \\
			\multicolumn{1}{c}{} & Slope estimates of LL & 0.356 & 0.427 & 0.336 & 0.294 & 0.279 \\
			\midrule
			\multicolumn{1}{c}{\multirow{2}[0]{*}{4-input}} & Numerical derivative & \textbf{0.259} & \textbf{0.226} & \textbf{0.222} & \textbf{0.216} & \textbf{0.210} \\
			\multicolumn{1}{c}{} & Slope estimates of LL & 0.388 & 0.397 & 0.276 & 0.261 & 0.259 \\
			\bottomrule
		\end{tabular}%
		\label{tab:A1.CWB derivative}%
	\end{table}%
	
	\begin{table}[ht]
		\small
		\centering
		\caption{RMSE on evaluation points for different methods to obtain $\hat{g}^{(1)}(\bm{x}|\bm{p})$.}
		\begin{tabular}{llrrrrr}
			\toprule
			&       & \multicolumn{5}{c}{Average RMSE on the evaluation points} \\
			\multicolumn{2}{c}{Number of observations} & 100   & 200   & 300   & 400   & 500 \\
			\midrule
			\multicolumn{1}{c}{\multirow{2}[0]{*}{2-input}} & Numerical derivative & \textbf{0.284} & \textbf{0.188} & \textbf{0.157} & \textbf{0.176} & \textbf{0.193} \\
			\multicolumn{1}{c}{} & Slope estimates of LL & 0.445 & 0.387 & 0.321 & 0.334 & 0.323 \\
			\midrule
			\multicolumn{1}{c}{\multirow{2}[0]{*}{3-input}} & Numerical derivative & \textbf{0.309} & \textbf{0.355} & \textbf{0.272} & \textbf{0.331} & \textbf{0.271} \\
			\multicolumn{1}{c}{} & Slope estimates of LL & 0.438 & 0.507 & 0.403 & 0.371 & 0.363 \\
			\midrule
			\multicolumn{1}{c}{\multirow{2}[0]{*}{4-input}} & Numerical derivative & \textbf{0.408} & \textbf{0.381} & \textbf{0.354} & \textbf{0.333} & \textbf{0.308} \\
			\multicolumn{1}{c}{} & Slope estimates of LL & 0.530 & 0.535 & 0.396 & 0.387 & 0.368 \\
			\bottomrule
		\end{tabular}%
		\label{tab:A2.CWB derivative}%
	\end{table}%

	The results show that CWB using the numerical derivative performs better than CWB using the slope estimates from the local linear kernel estimator particularly when the sample size is small.

	\subsection{A comparison between SCKLS, CNLS and CWB}\label{App:AppendixA3}
	
	Figure~\ref{fig:1.comparison} is meant to be illustrative of the relationship between the SCKLS, CNLS and CWB estimators in a two-dimensional estimated $\epsilon$-space where there are more than two observations, but for the rest of the $n-2$ observations, their estimated $\epsilon_j$s are held fix. The gray area indicates the cone of concave and monotonic functions. CNLS estimates a monotonic and concave function while minimizing the sum of squared errors, that is, minimizing the distance from the origin to the cone in the estimated $\epsilon$-space. CWB estimates a monotonic and concave function by finding the closest point, measured in $p$-space, on the cone of concave and monotonic functions to unconstrained kernel estimate. SCKLS minimizes a weighted function of estimated errors, and therefore avoids overfitting the observed data. However, as shown in \ref{App:AppendixC1}, SCKLS can be interpreted as minimizing the weighted distance from the unconstrained local linear kernel estimator to the cone of concave and monotonic functions. 
	\begin{figure}
		\begin{center}
			\includegraphics[width=4.5in]{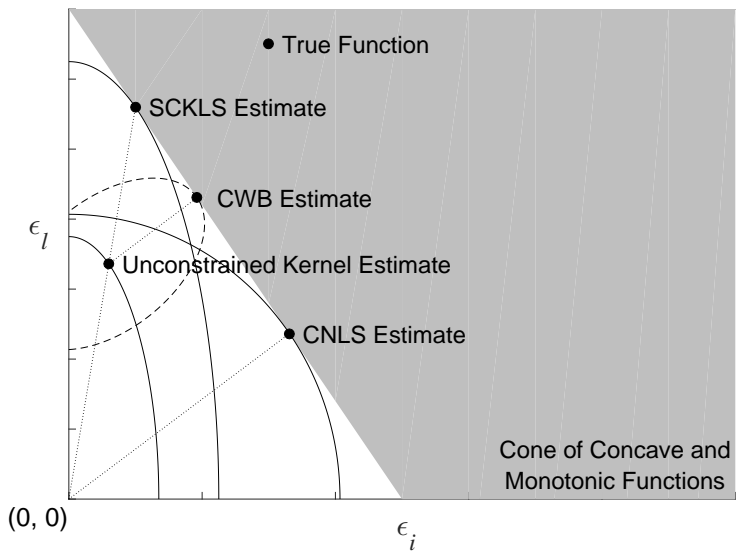}
		\end{center}
		\caption{Comparison of different estimators in the estimated-$\epsilon$-space. \label{fig:1.comparison}}
	\end{figure}
	
	\subsubsection{CNLS as a Special Cases of SCKLS}
	\label{sec:4.2.special}
	Let $\hat{g}_n$ and $\hat{g}_n^{CNLS}$ denote the SCKLS estimator and the CNLS estimator respectively. We will next examine the relationship between them. 
	\begin{assumption}
		\leavevmode
		\label{ass:4}	
		The set of evaluation points is equal to the set of sample input vectors, i.e. $m=n$ and $\bm{x}_i=\bm{X}_i$ for $i=1,\ldots,n$.
	\end{assumption}
	\begin{prop}
		\label{thm:4.SCKLS-CNLS}
		Suppose that Assumption~\ref{ass:4} holds. Then, for any $n$, when the vector of bandwidth goes to zero, i.e. $\|\bm{h}\|\rightarrow \mathbf{0}$ (where $\bm{h}=(h_1,\ldots,h_d)'$), the SCKLS estimator $\hat{g}_n$ converges to the CNLS estimator $\hat{g}_n^{CNLS}$ pointwise at $\bm{X}_1,\ldots,\bm{X}_n$.
	\end{prop}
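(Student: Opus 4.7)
The plan is to exhibit an explicit bijection between the decision variables and feasible sets of the two optimization problems, and to show that the SCKLS objective in (\ref{eq:6.SCKLS}) reduces, up to a positive multiplicative constant, to the CNLS objective in (\ref{eq:1.CNLS}) once $\|\bm{h}\|$ is small enough. Under Assumption~\ref{ass:4}, the evaluation points coincide with the observations, $\bm{x}_i = \bm{X}_i$, which makes this identification natural.

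First, I would analyze the kernel-weighted objective. By the compact-support part of Assumption~\ref{ass:1}(\ref{ass:1.5}), for every pair $(i,j)$ with $\bm{X}_i \neq \bm{X}_j$ the weight $K((\bm{X}_j - \bm{X}_i)/\bm{h})$ vanishes once $\|\bm{h}\|$ is sufficiently small, while for pairs with $\bm{X}_i = \bm{X}_j$ (in particular $i=j$) the weight equals $K(0)$ and the linear correction $(\bm{X}_j - \bm{x}_i)'\bm{b}_i$ drops out because $\bm{X}_j - \bm{x}_i = \bm{0}$. Consequently, in this regime the SCKLS objective collapses to $K(0)\sum_{j=1}^n (y_j - a_j)^2$, so only the intercepts $a_i$ enter the objective, just as only the CNLS fitted values $\hat{y}_j = \alpha_j + \bm{\beta}_j'\bm{X}_j$ enter the CNLS objective.

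Second, I would match the feasible sets via $\alpha_i := a_i - \bm{b}_i'\bm{X}_i$ and $\bm{\beta}_i := \bm{b}_i$. The SCKLS concavity constraint $a_i - a_l \geq \bm{b}_i'(\bm{X}_i - \bm{X}_l)$, imposed for every ordered pair $(i,l)$, becomes by swapping indices and rearranging $a_i \leq \alpha_l + \bm{\beta}_l'\bm{X}_i$ for all $i,l$, which is precisely the Afriat system in (\ref{eq:1.CNLS}); the monotonicity condition $\bm{b}_i \geq 0$ aligns with $\bm{\beta}_i \geq 0$. Having matched objectives (up to the positive scalar $K(0)$) and feasible sets, the SCKLS optimizer satisfies $\hat{a}_i = \hat{\alpha}_i + \hat{\bm{\beta}}_i'\bm{X}_i = \hat{g}_n^{CNLS}(\bm{X}_i)$, yielding equality (hence convergence) pointwise at $\bm{X}_1,\ldots,\bm{X}_n$.

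The main obstacle is the treatment of coincident observations $\bm{X}_i = \bm{X}_j$ with $i \neq j$, for which the kernel weights do not vanish as $\bm{h}\to\bm{0}$. Here I would argue that the SCKLS constraints at $\bm{x}_i = \bm{x}_j$ force $a_i = a_j$ through the two-sided Afriat inequality, exactly mirroring the CNLS property that fitted values at identical inputs must coincide; after this identification, the surviving cross terms $(y_j - a_i)^2$ with $\bm{X}_i = \bm{X}_j$ agree term by term with their CNLS counterparts. A secondary subtlety is that the $\bm{b}_i$'s drop out of the limiting objective but remain in the constraints, paralleling the well-known non-uniqueness of the CNLS subgradients $\bm{\beta}_i$ at the fitted values; this freedom does not affect the convergence of the regression estimates themselves, which is all the proposition claims.
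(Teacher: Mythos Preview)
Your argument is correct and follows essentially the same route as the paper: use the compact support of $K$ to reduce the SCKLS objective to $K(\bm{0})\sum_j (y_j-a_j)^2$ for small $\|\bm{h}\|$, then identify the SCKLS and CNLS feasible sets via the reparametrization $\alpha_i=a_i-\bm{b}_i'\bm{X}_i$, $\bm{\beta}_i=\bm{b}_i$. Your additional remarks on coincident observations and on the non-uniqueness of the slopes are sound and in fact go slightly beyond what the paper spells out.
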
	
		
   Proposition~\ref{thm:4.SCKLS-CNLS} essentially says that CNLS can be viewed as a special case of SCKLS. Note that in comparison to the CNLS estimator, our SCKLS estimator has tuning parameters, which to some extent control the bias--variance tradeoff (in a non-trivial way given the shape restrictions). For reasonable values of these tuning parameters, SCKLS estimator performs better than CNLS. See also Section~\ref{sec:5.simulation} of the main manuscript. 
	This is especially true for the estimates close to the boundary of the input space, where imposing the shape constraint alone could lead to severe overfitting of the data, and thus biased estimates. Indeed, in view of Theorem~\ref{thm:3.uconsistency} (from the main manuscript), we have that $\sup_{\bm{S}}\big|\hat{g}_n(\bm{x})-g_0(\bm{x})\big| = o_p(1)$, while on the other hand,  $\sup_{\bm{S}}\big|\hat{g}_n^{CNLS}(\bm{x})-g_0(\bm{x})\big|$ does not converge to zero in probability.
	
	Additional equivalence results can also be shown. Proposition~\ref{thm:5.SCKLS-OLS} shows the equivalence of linear regression subject to monotonicity constraints and the SCKLS estimator when the bandwidth vector approaches infinity.
	\begin{prop}
		\label{thm:5.SCKLS-OLS}
		Given Assumption~\ref{ass:1}(\ref{ass:1.5}). For any given $n$, when the bandwidth vector goes to infinity (i.e. $\min_{k=1,\ldots,d} h_k \rightarrow \infty$), the SCKLS estimator converges to the least squares estimator of the linear regression model subject to monotonicity constraints.
	\end{prop}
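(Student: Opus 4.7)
The plan is to analyze the limiting behavior of the SCKLS objective as $\min_k h_k \to \infty$, then identify the limit minimizer, and finally upgrade pointwise convergence of objectives to convergence of minimizers. First, by Assumption~\ref{ass:1}(\ref{ass:1.5}), $K$ is Lipschitz on its compact support, so for each fixed pair $(i,j)$, $K\bigl((\bm{X}_j - \bm{x}_i)/\bm{h}\bigr) \to K(\mathbf{0})$ as $\min_k h_k \to \infty$. Since $n$ and $m$ are fixed and $K(\mathbf{0}) > 0$ for the standard non-negative kernels under consideration, this convergence is uniform in $(i,j)$. Writing
\[
F_{\bm{h}}(\bm{a},\bm{b}) := \sum_{i=1}^{m}\sum_{j=1}^{n} \bigl(y_j - a_i - (\bm{X}_j - \bm{x}_i)'\bm{b}_i\bigr)^2 K\!\left(\frac{\bm{X}_j-\bm{x}_i}{\bm{h}}\right),
\]
I would show $F_{\bm{h}} \to F_{\infty}$ uniformly on compact subsets of $\mathbb{R}^{m(d+1)}$, where $F_{\infty}(\bm{a},\bm{b}) := K(\mathbf{0}) \sum_{i,j}\bigl(y_j - a_i - (\bm{X}_j - \bm{x}_i)'\bm{b}_i\bigr)^2$.

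Next, I would characterize the constrained minimizer of $F_{\infty}$. The key observation is that $F_{\infty}$ decouples across $i$: reparametrize $\alpha_i := a_i - \bm{x}_i'\bm{b}_i$, so each inner sum becomes $\sum_j (y_j - \alpha_i - \bm{X}_j'\bm{b}_i)^2$, which is identical across $i$. Minimizing over $\{\alpha_i \in \mathbb{R}, \bm{b}_i \ge 0\}$ for each $i$ (ignoring the concavity constraint for the moment) produces the common monotonicity-constrained OLS $(\alpha^*, \bm{\beta}^*)$, i.e., $\bm{b}_i^\star = \bm{\beta}^*$ and $a_i^\star = \alpha^* + \bm{x}_i'\bm{\beta}^*$ for every $i$. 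I would then verify that the concavity constraints are automatically satisfied (with equality) at this candidate: $a_i^\star - a_l^\star = \bm{\beta}^{*\prime}(\bm{x}_i - \bm{x}_l) = (\bm{b}_i^\star)'(\bm{x}_i - \bm{x}_l)$. Hence the unconstrained-in-concavity optimum is feasible for the full SCKLS constraint set, and is therefore the minimizer of $F_{\infty}$ on that set. Plugging into (\ref{eq:7.Extrapolation}) gives $\hat{g}_n(\bm{x}) = \min_i \{\alpha^* + \bm{x}_i'\bm{\beta}^* + (\bm{x} - \bm{x}_i)'\bm{\beta}^*\} = \alpha^* + \bm{x}'\bm{\beta}^*$, the monotonicity-constrained OLS fit.

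The remaining step is to transfer pointwise/uniform convergence of objectives into convergence of minimizers. Assuming a non-degenerate design (the matrix with rows $(1, \bm{X}_j^\prime)$ has rank $d+1$), $F_{\infty}$ is a strictly convex, coercive quadratic on each block $(a_i, \bm{b}_i)$, so its constrained minimizer is unique and lies in some bounded set $B$. Uniform convergence of $F_{\bm{h}}$ to $F_{\infty}$ on a slight enlargement of $B$, together with strict convexity, implies (by a standard argmin-continuity argument, e.g., Berge's maximum theorem or direct analysis of a convex QP whose Hessian depends continuously on $\bm{h}$) that the SCKLS minimizer $(\hat{\bm{a}}, \hat{\bm{b}})$ converges to $(\bm{a}^\star, \bm{b}^\star)$.

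The main obstacle is the final argmin-convergence step, since the SCKLS feasible set is unbounded (concavity places no a priori cap on $\bm{b}_i$). The cleanest way to handle this is to first confine attention to a bounded region using coerciveness of $F_{\infty}$ on the feasible set, noting that for all sufficiently large $\min_k h_k$, the Hessian of $F_{\bm{h}}$ is uniformly positive definite (since the weights are close to $K(\mathbf{0}) > 0$), which guarantees that the SCKLS minimizers remain in that bounded region and that the uniform objective convergence suffices. An alternative is to view the SCKLS problem as the projection, in a $\bm{h}$-dependent inner product, onto the polyhedral cone of constraints, and to invoke continuity of such projections in the defining inner product.
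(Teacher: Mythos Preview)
Your proposal is correct and follows essentially the same route as the paper: pass to the limit $K(\cdot)\to K(\mathbf{0})$, reparametrize $(a_i,\bm{b}_i)\mapsto(\alpha_i,\bm{\beta}_i)$ with $\alpha_i=a_i-\bm{x}_i'\bm{b}_i$, observe that each block then has the identical monotonicity-constrained OLS minimizer, and check that the Afriat inequalities hold with equality so the concavity constraints are redundant. Your treatment of the argmin-convergence step (via coerciveness and uniform-on-compacts convergence of strictly convex quadratics) is in fact more careful than the paper's, which simply writes down the limiting objective and analyzes its minimizer without explicitly justifying that the finite-$\bm{h}$ minimizers converge.
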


	\subsubsection{CWB in y-space as a Special Cases of SCKLS}
	\label{sec:1.1}
	Let $\hat{g}_n$ and $\hat{g}_n^{CWBY}$ denote the SCKLS estimator and the CWB y-space estimator respectively. We will next examine the relationship between them. 
	\begin{prop}
		\label{thm:1.SCKLS-CWBY}
		Suppose that Assumption~\ref{ass:4} holds. Then, for any $n$, when the vector of bandwidth goes to zero for both the SCKLS estimator and the CWB in y-space estimator, i.e. $\|\bm{h}\|\rightarrow \mathbf{0}$ (where $\bm{h}=(h_1,\ldots,h_d)'$), the SCKLS estimator $\hat{g}_n$ converges to the CWB in y-space estimator $\hat{g}_n^{CWBY}$ pointwise at $\bm{X}_1,\ldots,\bm{X}_n$.
	\end{prop}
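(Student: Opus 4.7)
My plan is to establish Proposition~\ref{thm:1.SCKLS-CWBY} as a composition result that leverages Proposition~\ref{thm:4.SCKLS-CNLS}. Since that earlier proposition already shows that, under Assumption~\ref{ass:4}, $\hat{g}_n \to \hat{g}_n^{CNLS}$ pointwise at $\bm{X}_1,\ldots,\bm{X}_n$ as $\|\bm{h}\|\to \mathbf{0}$, it suffices to prove the analogous statement $\hat{g}_n^{CWBY} \to \hat{g}_n^{CNLS}$ in the same limit; the conclusion then follows by uniqueness of pointwise limits at the sample inputs.

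First I would reduce CWBY to an optimization over fitted values. Since $\hat{g}(\bm{x}|\bm{p})=\sum_{j=1}^n p_j A_j(\bm{x})y_j$ is linear in $\bm{p}$, the map $\bm{p}\mapsto \hat{\bm{y}}(\bm{p}) := (\hat{g}(\bm{X}_1|\bm{p}),\ldots,\hat{g}(\bm{X}_n|\bm{p}))'$ is an affine function given by an $n\times n$ matrix $A_h$ whose $(i,j)$ entry is $A_j(\bm{X}_i)$. A standard calculation for the local linear kernel weighting matrix shows that, under Assumption~\ref{ass:1}(\ref{ass:1.5}), $A_h \to n I_n$ entry-wise as $\|\bm{h}\|\to \mathbf{0}$ (the kernel localizes to the datum at $\bm{X}_i$, picking out $y_i$ and assigning zero limiting weight to all other observations, while the factor $n$ arises from the normalization that makes $\bm{p}=\mathbf{1}/n$ reproduce the unconstrained estimator). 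Consequently, $\hat{g}(\bm{X}_i|\bm{p})\to n p_i y_i$, and the CWBY objective degenerates to $\sum_{j=1}^n(y_j - np_jy_j)^2$, while the equality constraint $\sum_j p_j = 1$ persists.

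Next I would translate the shape constraints in CWBY, which are derivative-based, into their CNLS counterparts in the limit. With evaluation points $\bm{x}_i = \bm{X}_i$, the monotonicity constraint reads $\hat{g}^{(1)}(\bm{X}_i|\bm{p}) \ge 0$ and the concavity/Afriat constraint reads $\hat{g}(\bm{X}_i|\bm{p})-\hat{g}(\bm{X}_l|\bm{p}) \ge \hat{g}^{(1)}(\bm{X}_i|\bm{p})'(\bm{X}_i-\bm{X}_l)$ for all $i,l$ (after being written in the equivalent Afriat form). Reparametrizing by $\hat{y}_i := \hat{g}(\bm{X}_i|\bm{p})$ and $\bm{b}_i := \hat{g}^{(1)}(\bm{X}_i|\bm{p})$, these become exactly the CNLS constraints of problem~(\ref{eq:1.CNLS}). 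In the limit $\|\bm{h}\|\to \mathbf{0}$, the objective $\sum_j(y_j-\hat{y}_j)^2$ dominates and CWBY reduces to CNLS in the $(\hat{y}_i,\bm{b}_i)$-parametrization.

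Finally, to make this rigorous, I would follow the strategy used in the proof of Proposition~\ref{thm:4.SCKLS-CNLS}: restrict to a subsequence $\|\bm{h}_k\|\to \mathbf{0}$ along which $\hat{g}_n^{CWBY}$ converges at the sample points, show any limit point must satisfy the CNLS KKT conditions (using continuity of the quadratic objective and closedness of the polyhedral Afriat constraint set), and invoke uniqueness of the CNLS fitted values $\{\hat{y}_j^{CNLS}\}_{j=1}^n$ (which holds since the CNLS objective is strictly convex in the fitted values). The main obstacle I anticipate is the joint non-uniqueness of $(\bm{p}, \bm{b})$ even when the fitted values are unique, and the possibly singular behavior of the local-linear derivative weights as $\|\bm{h}\|\to \mathbf{0}$. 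To handle this, I would avoid arguing in $\bm{p}$-space altogether after the reduction above and work entirely with the fitted-value/slope pair $(\hat{y}_i,\bm{b}_i)$, for which the constraint set and the limiting objective are both standard, and for which pointwise convergence of $\hat{y}_i$ is all that the proposition asserts.
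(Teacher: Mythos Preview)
Your proposal is correct and follows essentially the same route as the paper: both arguments rest on the observation that, under Assumption~\ref{ass:4} and the compact-support kernel assumption, the local linear weighting matrix satisfies $A_j(\bm{X}_i)\to n\,\mathbf{1}_{\{i=j\}}$ as $\|\bm{h}\|\to\mathbf{0}$, so that $\hat{g}(\bm{X}_i|\bm{p})\to np_iy_i$ and the CWBY objective collapses to $\sum_j(y_j-np_jy_j)^2$, which matches the limiting SCKLS objective $\sum_j(y_j-a_j)^2$ after the change of variables $a_j=np_jy_j$.

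The only structural difference is that you route the argument explicitly through CNLS via Proposition~\ref{thm:4.SCKLS-CNLS} (showing separately that SCKLS~$\to$~CNLS and CWBY~$\to$~CNLS), whereas the paper argues the equivalence of the two limiting optimization problems directly by exhibiting the bijection $a_j\leftrightarrow p_j$ as a system of $n$ equations in $n$ unknowns. Your version is arguably more careful, particularly in handling the derivative constraints and in invoking strict convexity of the CNLS objective in the fitted values to guarantee uniqueness of the limit; the paper's proof is terser and leaves these points implicit. One minor point you flag that the paper also glosses over is the equality constraint $\sum_j p_j=1$ in the CWBY formulation; since the proposition only asserts pointwise convergence of the fitted values at $\bm{X}_1,\ldots,\bm{X}_n$, and the paper elsewhere notes this constraint is inessential, this does not affect the conclusion.
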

	Proposition~\ref{thm:1.SCKLS-CWBY} states that SCKLS and CWB in $y$-space estimators converge to the same estimates as $\|\bm{h}\|\rightarrow \mathbf{0}$. Combining with Proposition~\ref{thm:4.SCKLS-CNLS}, CNLS can be viewed as a special case of SCKLS and CWB in $y$-space.
		
 \subsubsection{The relationship between CWB in $p$-space and SCKLS}
	\label{sec:1.2}

Again start from the SCKLS estimator, and in view of Assumption \ref{ass:1}~(\ref{ass:1.5}), for any sufficiently small $\bm{h}$, we have
\[
K\left(\frac{\bm{X}_j-\bm{x}_i}{\bm{h}}\right) = \begin{cases}
0 & \mbox{if } \bm{x}_i\neq\bm{X}_j, \\
K(\bm{0}) & \mbox{if } \bm{x}_i=\bm{X}_j,
\end{cases} \mbox{ for }  \forall i,j.	
\]
Then, the objective function of the SCKLS estimator (3) is equal to $\sum_{j=1}^{n}(y_j-a_j)^2 K(\bm{0})$, and thus 
\begin{equation*}
\argmin_{a_1,\bm{b}_1,\ldots,a_n,\bm{b}_n}\sum_{j=1}^{n}(y_j-a_j)^2 K(\bm{0}) =\argmin_{a_1,\ldots,a_n}\sum_{j=1}^{n}(y_j-a_j)^2 =\argmin_{a_1,\ldots,a_n} L(g(a_j))
\end{equation*}

where $L(\cdot) = \sum_{j=1}^{n}(\cdot)^2 $ is the squared error loss function, $g(a_j) = y_j-a_j$	the definition of the residual. 

Alternatively now consider the objective function of CWB, specifically $D(\bm{p})=\sum_{j=1}^{n}(p_u-p_j)^2=\sum_{j=1}^{n}(1/n-p_j)^2 = L(m(g(p_j)))$. And let $L(\cdot)$ continue to be defined as above as the squared error lost function and $g(p_j)$ as the definition of the residual. This implies that $m(\cdot) = \frac{\cdot}{y_j n}$. Therefore, the CWB estimator can be interpreted as a projection of a local polynomial estimator to the cone of functions which are monotonic and concave in which the direction of projection minimizes a specific weighting of the unconstrained local polynomial residuals in which the weights are defined as $\frac{1}{y_j n}$. Therefore, even if the vector of bandwidth goes to zero for the CWB in $p$-space estimator, i.e. $\|\bm{h}\|\rightarrow \mathbf{0}$ (where $\bm{h}=(h_1,\ldots,h_d)'$), the CWB estimator and CNLS are not equivalent because the $y_j$ in the denominator of the weights is not a function of the bandwidth. 
 
	\subsubsection{On the computational aspects}
	
	We also compare the computational burden of each estimators. Table \ref{tab:A4.sizeQuad} shows the size of quadratic programming problems of each estimators: SCKLS, CNLS and CWB. The size of a quadratic programming problem of the SCKLS estimator is fully controllable because the number of decision variables and constraints is a function of the number of evaluation points and independent of the number of observed points. Because of this, we can solve large-scale problems with $n>100,000$ using the SCKLS estimator while other shape constrained nonparametric estimators might face prohibitive computational difficulties without any data pre-processing.
	\begin{table}[ht]
		\begin{doublespace}
		\small
		\centering
		\caption{The size of quadratic programming problems of each estimator.}
		\begin{tabular}{p{2.7in}P{.6in}P{.6in}P{.6in}}
			\toprule
			& SCKLS & CNLS  & CWB  \\
			\midrule
			Number of decision variables  & $m(d+1)$ & $n(d+1)$ & $n$ \\
			Number of global concavity constraints & $m(m-1)$ & $n(n-1)$ & $m(m-1)$ \\ 
			\bottomrule
		\end{tabular}%
		\label{tab:A4.sizeQuad}%
		\end{doublespace}
	\end{table}%

	\section{Technical proofs}
	\label{App:AppendixC}
    \subsection{Summary of the proof strategy}	
    \label{App:AppendixC_summary}
	Theorems~\ref{thm:2.rate}-- \ref{thm:4.missconsistency} concern the consistency and convergence rate of the SCKLS estimator and serve as the primary results in our theoretical development. As such, before presenting the technical details,  we summarize our proof strategy as follows:
\begin{enumerate}
    \item We rewrite the SCKLS estimator, after some manipulations, as the projection of the local linear estimator to a convex cone of monotonic and concave functions under a certain norm. More precisely, the SCKLS estimator 
	\[
	\hat{g}_n \in \mathrm{argmin}_{g \in G_2} 	\|g - \tilde{g}_n\|^2_{n,m},
	\]
	where $\tilde{g}_n$ is the local linear estimator, $G_2$ is the set that contains all the concave and increasing functions, and $\|\cdot\|_{n,m}$ is a norm defined in detail later in Appendix B.2.
	
    \item (Theorem~\ref{thm:2.rate}). Let $\hat{g}_n$ be the SCKLS estimator and $g_0 \in G_2$ be the truth. Using the new formulation of SCKLS above, we see that 
    \[
    \|\hat{g}_n - \tilde{g}_n\|_{n,m} \le \|g_0 - \tilde{g}_n\|_{n,m}.
    \]
    Moreover,  by the triangular inequality, we have that 
    \[
    \|\hat{g}_n - g_0\|_{n,m} \le  \|\hat{g}_n - \tilde{g}_n\|_{n,m}  +  \|\tilde{g}_n - g_0\|_{n,m}  \le 2\| \tilde{g}_n - g_0\|_{n,m}.
    \]
    Using the results on the uniform consistency of the local linear estimator (e.g. \citet{fan2016}, see our Lemma~\ref{lem:1} and Lemma~\ref{lem:2}), we can bound the RHS of the triangle inequality equation by $O_p(n^{-2/(4+d)}\log n) = o_p(1)$. Consequently, $\|\hat{g}_n - g_0\|_{n,m}$ converges to zero at the same rate. To complete the proof, we show that the discrete $L_2$ distance between $\hat{g}_n$ and $g_0$ is bounded above by a constant times $\|\hat{g}_n - g_0\|_{n,m}$. 
    
    
    \item (Theorem~\ref{thm:1.consistency}). Building upon Theorem~\ref{thm:2.rate}, we then make use of the concavity of $\hat{g}_n$ and $g_0$ to establish uniform consistency. 
    Loosely speaking, this relies on the fact that the convergence in $L_2$ for a sequence of Lipschitz (and concave) functions implies the uniform convergence in the interior of the domain. See	Lemma~\ref{lem:3} and Lemma~\ref{lem:4} below for more detail. Note that we only look at $\hat{g}_n$ on the a compact subset interior of its domain, in order to make sure that $\hat{g}_n$ is Lipschitz there. That is also why we do not have consistency on the boundary from the current proof strategy.
    
    \item (Theorem~\ref{thm:3.uconsistency}). If we let the number of evaluation points, $m$, grow at a certain rate slower than $n$, we can extend the uniform consistency result to the entire support of $\bm{X}$. The assumption on the rate of growth of $m$ makes sure that the first partial derivative of SCKLS, $\frac{\partial \hat{g}_n}{\partial\bm{x}}(\bm{x})$, is bounded for some positive constant, so the SCKLS is Lipschitz over the entire domain. 
    
    \item (Theorem~\ref{thm:4.missconsistency}). This can be viewed as a generalization of Theorem~\ref{thm:1.consistency}. The main ingredient of its proof is to establish $\|\hat{g}_n - g_0^*\|_{n,m}=o_p(1)$. Then the uniform consistency follows from the concavity of $\hat{g}_n$ and $g_0^*$ via Lemma~\ref{lem:4}.

\end{enumerate}

	\subsection{Alternative definition of SCKLS}
	\label{App:AppendixC1}
	Recall that given observations $\{\bm{X}_j,y_j\}_{j=1}^n$ and evaluation points $\{\bm{x}_i\}_{i=1}^m$, the (unconstrained) local linear estimator at $\bm{x}_i$ is $(\tilde{a}_i,\tilde{\bm{b}}_i)$ for $i = 1,\ldots,m$, where $(\tilde{a}_1,\tilde{\bm{b}}_1,\ldots,\tilde{a}_m,\tilde{\bm{b}}_m)$ is the (unique) minimizer of 
	\[ 
	\sum_{i=1}^{m}\sum_{j=1}^{n}(y_j-a_i-(\bm{X}_j-\bm{x}_i)'\bm{b}_i)^2K\left(\frac{\bm{X}_j-\bm{x}_i}{\bm{h}}\right)\\.
	\]	
	For simplicity, we assume that the bandwidth is equal for all input dimensions, i.e. $\bm{h} = (h,\ldots,h)'$. Since the objective function is quadratic, for any $(a_1,\bm{b}_1,\ldots,a_m,\bm{b}_m)$, its value equals
	\[
	nh^d \sum_{i=1}^{m} \big(\tilde{a}_i-a_i, (\tilde{\bm{b}}_i-\bm{b}_i)'{h}\big) \boldsymbol{\Sigma}_i \begin{pmatrix}  \tilde{a}_i-a_i  \\ (\tilde{\bm{b}}_i-\bm{b}_i){h} \end{pmatrix}  + \mathrm{Const}
	\]	
	where 
	\[
	\boldsymbol{\Sigma}_i = \frac{1}{n h^d}\sum_{j=1}^n U\Big(\frac{\bm{X}_j-\bm{x}_i}{{h}}\Big) \Big\{U\Big(\frac{\bm{X}_j-\bm{x}_i}{{h}}\Big)\Big\}' K\left(\frac{\bm{X}_j-\bm{x}_i}{{h}}\right)  
	\]
	with $U(\bm{x})$ being the vector $(1, \bm{x}')'$ and 
	\[
	\mathrm{Const} = \sum_{i=1}^{m}\sum_{j=1}^{n}(y_j-\tilde{a}_i-(\bm{X}_j-\bm{x}_i)'\tilde{\bm{b}}_i)^2K\left(\frac{\bm{X}_j-\bm{x}_i}{{h}}\right).
	\]	
	Therefore, SCKLS can be simply viewed as a minimizer of 
	\[
	\sum_{i=1}^{m} \big(\tilde{a}_i-a_i, (\tilde{\bm{b}}_i-\bm{b}_i)'{h}\big) \boldsymbol{\Sigma}_i \begin{pmatrix}  \tilde{a}_i-a_i  \\ (\tilde{\bm{b}}_i-\bm{b}_i){h} \end{pmatrix}
	\]
	subject to the shape constraints imposed on $(a_1,\bm{b}_1,\ldots,a_m,\bm{b}_m)$. More generally, fixing $\{\bm{X}_1,\ldots,\bm{X}_n\}$, $\{\bm{x}_1,\ldots,\bm{x}_m\}$ and $h$, and define a new squared distance measure between two functions $g_1,g_2$ as
	\[
	\|g_1 - g_2\|_{n,m}^2 = \frac{1}{m}\sum_{i=1}^{m} \Big(g_1(\bm{x}_i)-g_2(\bm{x}_i), \big(\frac{\partial g_1}{\partial \bm{x}}(\bm{x}_i)-\frac{\partial g_2}{\partial \bm{x}}(\bm{x}_i)\big)'{h}\Big) \boldsymbol{\Sigma}_i \begin{pmatrix}  g_1(\bm{x}_i)-g_2(\bm{x}_i) \\\big(\frac{\partial g_1}{\partial \bm{x}}(\bm{x}_i)-\frac{\partial g_2}{\partial \bm{x}}(\bm{x}_i)\big)'{h} \end{pmatrix},
	\]
	then SCKLS belongs to\footnote{To be more precise technically,  if $g_1-g_2$ is not differentiable, then $\|g_1 - g_2\|_{n,m}$ needs to be taken as the infimum among all possible sub-gradients in the previous definition. Nevertheless, since we only consider the behavior of the functions at finitely many points, without loss of generality, here we can restrict ourselves to differentiable functions.}
	\[
	\argmin_{g \in G_2} 	\|g - \tilde{g}_n\|_{n,m}
	\]
	where $G_2$ is the set that contains all the concave and increasing functions from $\bm{S}$ to $\mathbb{R}$.
	
	Below, we list some useful results on the behaviors of $\boldsymbol{\Sigma}_i$ and $(\tilde{a}_i,\tilde{\bm{b}}_i)$. These results follow from \citet{fan2016}. 
	\begin{lemma}[Lemma 5 of \citet{fan2016}, Page 508]
		\label{lem:1}
		Suppose that Assumption 1(i)-1(vi) hold, then
		with probability one, there exists $C>1$ such that the eigenvalues of $\boldsymbol{\Sigma}_i$ are in $[1/C,C]$ for all $i=1,\ldots,m$ for sufficiently large $n$. 
	\end{lemma}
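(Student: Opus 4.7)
The plan is to view $\boldsymbol{\Sigma}_i$ as a kernel-type estimator of a deterministic matrix $\mathbf{M}(\bm{x}_i)$ that is uniformly well-conditioned on $\bm{S}$, and then invoke uniform convergence of kernel averages.

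First I would compute the expectation. A change of variables $\bm{u}=(\bm{X}-\bm{x}_i)/h$ gives
\[
\mathbb{E}[\boldsymbol{\Sigma}_i] \;=\; \int_{\{\bm{u}:\,\bm{x}_i+h\bm{u}\in\bm{S}\}} U(\bm{u})U(\bm{u})' K(\bm{u}) f(\bm{x}_i + h\bm{u}) \,d\bm{u} \;=:\; \mathbf{M}_{n}(\bm{x}_i).
\]
Continuity of $f$ on the compact set $\bm{S}$ and $h\to 0$ allow us to replace $f(\bm{x}_i+h\bm{u})$ by $f(\bm{x}_i)$ with an error that is $o(1)$ uniformly in $\bm{x}_i$. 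Thus $\mathbf{M}_n(\bm{x}_i)$ is uniformly close to
\[
\mathbf{M}(\bm{x}_i) \;=\; f(\bm{x}_i)\int_{A(\bm{x}_i,h)} U(\bm{u})U(\bm{u})' K(\bm{u}) \,d\bm{u},
\]
where $A(\bm{x}_i,h)=\{\bm{u}:\bm{x}_i+h\bm{u}\in\bm{S}\}$. Since $f$ is continuous and bounded below on $\bm{S}$ by Assumption 1(iii), $f(\bm{x}_i)\in[f_{\min},f_{\max}]$ with $0<f_{\min}\le f_{\max}<\infty$. For interior points (those at distance exceeding $hR$ from $\partial\bm{S}$, where $R$ is the support radius of $K$), $A(\bm{x}_i,h)$ equals the full support of $K$ and $\int U U' K$ is strictly positive definite because $U(\bm{u})=(1,\bm{u}')'$ cannot be orthogonal to any nonzero vector on a set of positive measure.

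Second, I would handle the boundary points. Because $\bm{S}$ is convex, non-degenerate and compact (Assumption 1(iii)), for any $\bm{x}_i\in\bm{S}$ the set $A(\bm{x}_i,h)$ contains a half-space intersection whose Lebesgue measure is bounded away from zero uniformly in $\bm{x}_i$ and $h$. On this set, $\int_{A(\bm{x}_i,h)} U U' K$ remains strictly positive definite, with minimum eigenvalue uniformly bounded below: indeed, for any unit vector $\bm{v}=(v_0,\bm{v}_1)$, $\bm{v}'\mathbf{M}(\bm{x}_i)\bm{v}/f(\bm{x}_i)=\int_{A(\bm{x}_i,h)}(v_0+\bm{v}_1'\bm{u})^2 K(\bm{u})\,d\bm{u}$, which cannot be arbitrarily small since $A(\bm{x}_i,h)$ has uniformly positive measure and contains an open ball. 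A compactness/continuity argument then yields constants $0<c_1\le c_2<\infty$ such that the eigenvalues of $\mathbf{M}(\bm{x})$ lie in $[c_1,c_2]$ uniformly over $\bm{x}\in\bm{S}$ and sufficiently small $h$.

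Third, I would pass from $\mathbf{M}_n(\bm{x}_i)$ to the empirical $\boldsymbol{\Sigma}_i$ via uniform convergence of kernel averages. The Lipschitz property of $K$ (Assumption 1(v)), compactness of $\bm{S}$, the moment condition in Assumption 1(iv), and the rate $h\sim n^{-1/(4+d)}$ (Assumption 1(vi)) give $nh^d/\log n\to\infty$, so a standard empirical-process or covering argument (e.g.\ \citet{Masry1996multivariatelocal}) yields
\[
\sup_{\bm{x}\in\bm{S}}\big\|\boldsymbol{\Sigma}(\bm{x})-\mathbb{E}\boldsymbol{\Sigma}(\bm{x})\big\| \;=\; O\!\left(\sqrt{\log n\,/\,(nh^d)}\right) \;\; \text{a.s.},
\]
where $\boldsymbol{\Sigma}(\bm{x})$ denotes the kernel average evaluated at any $\bm{x}\in\bm{S}$. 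Taking the supremum over $\{\bm{x}_1,\ldots,\bm{x}_m\}\subset\bm{S}$ is then automatic regardless of whether $m$ is fixed or grows. Combining the three steps, for all sufficiently large $n$ the eigenvalues of $\boldsymbol{\Sigma}_i$ lie in $[c_1/2,2c_2]$ uniformly in $i$, and choosing $C=\max(2c_2,2/c_1)$ completes the proof.

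The main obstacle is the uniform lower bound on the smallest eigenvalue of $\mathbf{M}(\bm{x}_i)$ at boundary points: one must exploit the convexity and non-degeneracy of $\bm{S}$ to show that the truncated kernel support $A(\bm{x}_i,h)$ is never too thin, and that the quadratic form $\int(v_0+\bm{v}_1'\bm{u})^2 K(\bm{u})\,d\bm{u}$ cannot degenerate uniformly over choices of unit $\bm{v}$ and evaluation point $\bm{x}_i$.
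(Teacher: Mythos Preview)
The paper does not supply its own proof of this lemma: it is stated as Lemma~5 of \citet{fan2016} and is simply cited as a known result, along with Lemma~\ref{lem:2}. Your sketch is therefore not comparable to an argument in the paper, but it is the standard route such a proof takes and is essentially correct.

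One small imprecision: at a boundary point $\bm{x}_i$, the set $A(\bm{x}_i,h)=(\bm{S}-\bm{x}_i)/h\cap\mathrm{supp}(K)$ need not contain a half-space intersection; what convexity and non-degeneracy of $\bm{S}$ actually give you is a cone of uniformly positive aperture. Concretely, fix a ball $B(\bm{c},r)\subset\bm{S}$; then for every $\bm{x}_i\in\bm{S}$ the convex hull of $\{\bm{x}_i\}\cup B(\bm{c},r)$ lies in $\bm{S}$, and rescaling by $1/h$ shows $A(\bm{x}_i,h)$ contains a cone whose solid angle is bounded below uniformly in $\bm{x}_i$ and $h$. This is enough for your positivity argument on $\int_{A}(v_0+\bm{v}_1'\bm{u})^2K(\bm{u})\,d\bm{u}$, since a nontrivial affine function cannot vanish on a set with nonempty interior. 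The rest of your outline (uniform convergence of the empirical kernel matrix via a covering argument, using the Lipschitz kernel and $nh^d/\log n\to\infty$) is standard and matches what one finds in \citet{Masry1996multivariatelocal} and \citet{fan2016}.
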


	\begin{lemma}[Proposition 7 of \citet{fan2016}, Page 509]
		\label{lem:2}
		Suppose that Assumption 1(i)-1(vi) hold, then as $n \rightarrow \infty$,
		\[
		\sup_{i=1,\ldots,m} \Big(|\tilde{a}_i - g_0(\bm{x}_i)|^2, \Big\|h\Big\{\tilde{\bm{b}}_i - \frac{\partial g_0}{\partial \bm{x}}(\bm{x}_i)\Big\} \Big\|^2 \Big) = O_p(n^{-4/(4+d)}\log n).
		\]
	\end{lemma}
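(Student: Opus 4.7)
The plan is to recognize Lemma~\ref{lem:2} as a standard uniform convergence statement for the local linear estimator, and to reduce it to a bias--variance decomposition that leverages Lemma~\ref{lem:1}. First, I would write the solution to the weighted normal equations in closed form, expressing
\[
\begin{pmatrix} \tilde{a}_i \\ h\,\tilde{\bm{b}}_i \end{pmatrix} = \boldsymbol{\Sigma}_i^{-1} \cdot \frac{1}{nh^d} \sum_{j=1}^n U\Big(\tfrac{\bm{X}_j-\bm{x}_i}{h}\Big)\, y_j\, K\Big(\tfrac{\bm{X}_j-\bm{x}_i}{h}\Big),
\]
with $U(\bm{z}) = (1, \bm{z}')'$, and split $y_j = g_0(\bm{X}_j) + \epsilon_j$ to isolate a deterministic (bias) piece from a zero-mean (variance) piece.

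For the deterministic piece, I would Taylor-expand $g_0$ to second order around $\bm{x}_i$, invoking Assumption~1(ii). The constant and linear terms are absorbed exactly by the local linear fit, so the residual quadratic term, integrated against the second-order kernel $K$, contributes a bias of order $h^2$ in $\tilde{a}_i$ and order $h$ in $\tilde{\bm{b}}_i$ (hence $h^2$ after the $h$-rescaling). Uniformity over $i$ is immediate because $\nabla^2 g_0$ is continuous on the compact support $\bm{S}$, and $\boldsymbol{\Sigma}_i^{-1}$ is uniformly bounded by Lemma~\ref{lem:1}.

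For the stochastic piece, I need to control $\sup_i \big\|\frac{1}{nh^d}\sum_j U\big(\tfrac{\bm{X}_j-\bm{x}_i}{h}\big)\epsilon_j K\big(\tfrac{\bm{X}_j-\bm{x}_i}{h}\big)\big\|$. Pointwise in $\bm{x}_i$, this is a mean-zero sum of bounded independent terms whose variance is of order $(nh^d)^{-1}$, so a Bernstein-type tail bound yields a deviation of order $\sqrt{\log n/(nh^d)}$. To upgrade to a uniform bound over $\bm{x}_i \in \bm{S}$ (which in particular covers the $m$ evaluation points), I would cover $\bm{S}$ by $O(h^{-d})$ balls, control oscillations inside each ball using the Lipschitz property of $K$ from Assumption~1(v), and apply a union bound; this discretization is the source of the $\log n$ factor. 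Balancing squared bias and variance with $h \asymp n^{-1/(4+d)}$ (Assumption~1(vi)) delivers the announced rate $O_p(n^{-4/(4+d)}\log n)$.

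The main obstacle is obtaining this uniform stochastic bound cleanly when the noise can be heteroscedastic with only four finite conditional moments (Assumption~1(iv)) and when evaluation points may sit near the boundary of $\bm{S}$, where the local weighted design might threaten to degenerate. What rescues both difficulties is the combination of the positive lower bound on $f$, the convex and compact geometry of $\bm{S}$ (Assumption~1(iii)), and Lemma~\ref{lem:1}, which together keep $\boldsymbol{\Sigma}_i^{-1}$ uniformly well-behaved. Notably, because the covering argument is carried out over all of $\bm{S}$ rather than the specific $\{\bm{x}_i\}_{i=1}^m$, the conclusion applies uniformly regardless of whether $m$ is fixed (Assumption~2(ii)) or growing (Assumption~2(i)).
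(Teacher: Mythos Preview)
Your sketch is the standard route to uniform rates for local polynomial estimators and is, in essence, how the cited reference establishes the result. Note, however, that the paper does not supply an independent proof of this lemma at all: it simply imports Proposition~7 of \citet{fan2016} as a black box, so there is no in-paper argument to compare against beyond the citation.

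One genuine gap in your proposal: you describe the stochastic term as ``a mean-zero sum of \emph{bounded} independent terms'' and then invoke Bernstein. But under Assumption~1(iv) the errors $\epsilon_j$ are only required to have a uniformly bounded fourth conditional moment; they are not bounded, so the summands $U\big(\tfrac{\bm{X}_j-\bm{x}_i}{h}\big)\epsilon_j K\big(\tfrac{\bm{X}_j-\bm{x}_i}{h}\big)$ are not bounded either, and a direct Bernstein bound does not apply. The standard fix is a truncation argument: split $\epsilon_j = \epsilon_j \mathbf{1}_{\{|\epsilon_j| \le \tau_n\}} + \epsilon_j \mathbf{1}_{\{|\epsilon_j| > \tau_n\}}$ with $\tau_n$ growing slowly (a polynomial in $n$ suffices), apply Bernstein to the truncated piece, and control the tail piece in $L_1$ via Markov and the fourth-moment bound. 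This is routine and is what \citet{fan2016} (and earlier \citet{Masry1996multivariatelocal}) do, but as written your ``bounded'' claim is false and the Bernstein step would not go through without this repair.
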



	\subsection{Proof of Theorems in Section~\ref{sec:4.property}}

	\subsubsection{Proof of Theorem 1}

	\begin{proof}
		
		With a sufficiently large $n$, the uniqueness of the estimates of $\hat{g}_n(\bm{x}_i)$ and $\frac{\partial \hat{g}_n}{\partial \bm{x}} (\bm{x}_i)$ for $i = 1,\ldots,m$ is established because our objective function corresponds to is a quadratic programming problem with a positive definite (strictly convex) objective function with a feasible solution. See \cite{bertsekas1995nonlinear}.
		
		Based on our characterization of SCKLS in Appendix~\ref{App:AppendixC1}, we note that the objective function at the SCKLS estimate is smaller than or equal to that at the truth, and thus
		\[
    \|\hat{g}_n - \tilde{g}_n\|_{n,m}^2 \le \|g_0 - \tilde{g}_n\|_{n,m}^2.
    \]
    Moreover,  by the triangular inequality, we have that 
    \[
    \|\hat{g}_n - g_0\|_{n,m} \le  \|\hat{g}_n - \tilde{g}_n\|_{n,m}  +  \|\tilde{g}_n - g_0\|_{n,m}  \le 2\| \tilde{g}_n - g_0\|_{n,m}.
    \]
    As such,  
    \begin{align}
    \label{eq:proofthm1.1}
      \|\hat{g}_n - g_0\|_{n,m}^2 \le 4\| \tilde{g}_n - g_0\|_{n,m}^2. 
    \end{align}
    Recall that the (unconstrained) local linear estimator at $\bm{x}_i$ is $(\tilde{a}_i,\tilde{\bm{b}}_i)$ for $i = 1,\ldots,m$. It follows from Lemma~\ref{lem:2} that 
    \[
    \| \tilde{g}_n - g_0\|_{n,m}^2 =  \frac{1}{m}	\sum_{i=1}^{m} \big(\tilde{a}_i-g_0(\bm{x}_i), \big(\tilde{\bm{b}}_i-\frac{\partial g_0}{\partial \bm{x}} (\bm{x}_i) \big)'{h}\Big) \boldsymbol{\Sigma}_i \begin{pmatrix}  \tilde{a}_i- g_0(\bm{x}_i)  \\ \big(\tilde{\bm{b}}_i-\frac{\partial g_0}{\partial \bm{x}}(\bm{x}_i) \big){h} \end{pmatrix} = O_p(n^{-4/(4+d)}\log n)
    \]
    In addition, from Lemma~\ref{lem:1}, we have that
    \begin{align}
    \notag\|\hat{g}_n - g_0\|_{n,m}^2 &= \frac{1}{m}	\sum_{i=1}^{m} \big(\hat{g}_n(\bm{x}_i)-g_0(\bm{x}_i), \big(\frac{\partial \hat{g}_n}{\partial \bm{x}} (\bm{x}_i) -\frac{\partial g_0}{\partial \bm{x}} (\bm{x}_i) \big)'{h}\Big) \boldsymbol{\Sigma}_i \begin{pmatrix}  \hat{g}_n(\bm{x}_i) - g_0(\bm{x}_i)  \\ \big(\frac{\partial \hat{g}_n}{\partial \bm{x}} (\bm{x}_i)  -\frac{\partial g_0}{\partial \bm{x}}(\bm{x}_i) \big){h} \end{pmatrix}\\
    &\ge \frac{1}{C m} \sum_{i=1}^m (\hat{g}_n(\bm{x}_i)-g_0(\bm{x}_i))^2, \label{eq:proofthm1.2}
    \end{align}
    where $C$ is the constant mentioned in the statement of Lemma~\ref{lem:1}.
    
    Plugging the above two equations into (\ref{eq:proofthm1.1}) yields
    \begin{align*}
    \frac{1}{m} \sum_{i=1}^m (\hat{g}_n(\bm{x}_i)-g_0(\bm{x}_i))^2 \le O_p(n^{-4/(4+d)}\log n) = o_p(1).
    \end{align*}
    \end{proof}

    \subsubsection{Proof of Theorem 2}
    For the sake of clarity, we have divided the proof of Theorem 2 into several parts. 
    
    \paragraph{Some useful lemmas}
	Here we list two useful lemmas on the convergence of convex functions. 
	
	\begin{lemma}
		\label{lem:3}
		Suppose that $f_0,f_1,f_2,\ldots: \bm{C}'\rightarrow \mathbb{R}$ are Lipschitz and convex functions, where $\bm{C}' \subset \mathbb{R}^d$ is a compact and convex set. In addition, assume that these functions all have the same bound and Lipschitz constant. Then 
		\[
		\lim_{n \rightarrow \infty }\int_{\bm{C}'}\{f_n(\bm{x})-f_0(\bm{x})\}^2d\bm{x} = 0
		\]
		implies that 
		\[
		\lim_{n \rightarrow \infty }\sup_{\bm{x} \in \bm{C}}|f_n(\bm{x})-f_0(\bm{x})| = 0
		\]
		for any compact $\bm{C}$ in the interior of $ \bm{C}'$.
	\end{lemma}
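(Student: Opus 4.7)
The plan is to argue by contradiction, exploiting the common Lipschitz constant to convert a pointwise violation into a lower bound on the $L^2$ distance that contradicts the hypothesis. Convexity plays essentially no role in the argument beyond guaranteeing the Lipschitz hypothesis in applications; the proof will only use the Lipschitz property and the interior condition on $\bm{C}$.

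Suppose, for contradiction, that $\sup_{\bm{x} \in \bm{C}} |f_n(\bm{x}) - f_0(\bm{x})|$ does not tend to $0$. Then there exist $\varepsilon > 0$, a subsequence $(n_k)$, and points $\bm{x}_{n_k} \in \bm{C}$ with
\[
|f_{n_k}(\bm{x}_{n_k}) - f_0(\bm{x}_{n_k})| \ge \varepsilon
\]
for all $k$. Let $L$ denote the common Lipschitz constant of $f_0, f_1, f_2, \ldots$ Then for every $\bm{x}$ with $\|\bm{x} - \bm{x}_{n_k}\| \le \varepsilon/(4L)$, the triangle inequality combined with Lipschitz continuity of both $f_{n_k}$ and $f_0$ yields
\[
|f_{n_k}(\bm{x}) - f_0(\bm{x})| \ge |f_{n_k}(\bm{x}_{n_k}) - f_0(\bm{x}_{n_k})| - 2L\|\bm{x} - \bm{x}_{n_k}\| \ge \varepsilon/2.
\]

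Next I would invoke the interior hypothesis: since $\bm{C}$ is compact and contained in the interior of $\bm{C}'$, there exists $\delta > 0$ such that the $\delta$-neighborhood of $\bm{C}$ is contained in $\bm{C}'$. Shrinking $\varepsilon$ if necessary so that $\varepsilon/(4L) \le \delta$, the Euclidean ball $B_k := \{\bm{x}: \|\bm{x} - \bm{x}_{n_k}\| \le \varepsilon/(4L)\}$ lies entirely in $\bm{C}'$. Its Lebesgue volume is a strictly positive constant $v = v(\varepsilon, L, d)$ independent of $k$. Therefore
\[
\int_{\bm{C}'} \{f_{n_k}(\bm{x}) - f_0(\bm{x})\}^2 \, d\bm{x} \ge \int_{B_k} (\varepsilon/2)^2 \, d\bm{x} = \frac{\varepsilon^2}{4}\, v > 0
\]
for all $k$, which contradicts the assumed $L^2$ convergence along $\bm{C}'$. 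Hence the supremum must converge to $0$.

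No step looks especially delicate; the only thing to watch is that the radius $\varepsilon/(4L)$ be chosen small enough (uniformly in $k$) that the balls $B_k$ remain inside $\bm{C}'$, which is exactly where the hypothesis ``$\bm{C}$ lies in the interior of $\bm{C}'$'' is used. Everything else is a routine triangle-inequality and volume-lower-bound calculation.
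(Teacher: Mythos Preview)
Your proof is correct and follows essentially the same route as the paper's: assume a pointwise violation of size $\varepsilon$ at some $\bm{x}^* \in \bm{C}$, use the common Lipschitz constant to spread the deviation over a ball of radius proportional to $\varepsilon/L$ (kept inside $\bm{C}'$ via the interior hypothesis), and derive a positive lower bound on the $L^2$ integral that contradicts convergence. Your observation that convexity is not used in the argument is also accurate; the paper's proof does not invoke it either.
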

	
	\begin{proof}
	Suppose that the common Lipschitz constant is $M >0$. Moreover, suppose that
	\[
	\sup_{\bm{x} \in \bm{C}'} \inf_{\bm{y} \in \bm{C}} \|\bm{x}-\bm{y}\| =: \delta.
	\]
	Essentially, that means that for any $\bm{x} \in \bm{C}'$, the ball of radius $\delta$ centered at $\bm{x}$ (denoted as $B_\delta(\bm{x})$) intersects with $\bm{C}$.
	
	Next, suppose that $\sup_{\bm{x} \in \bm{C}}|f_n(\bm{x})-f_0(\bm{x})|\ge \epsilon$ for some $\epsilon > 0$. Let 
	\[
	\bm{x}^* \in \mathrm{argmax}_{\bm{x} \in  \bm{C}} |f_n(\bm{x})-f_0(\bm{x})|.
	\]
	Then for any $\bm{x}$ that lies inside the ball of radius $\min\{\delta, \epsilon/(4M)\}$ centered at $\bm{x}^*$, we have that
	\begin{align*}
	|f_n(\bm{x})-f_0(\bm{x})| &= |f_n(\bm{x})-f_n(\bm{x}^*)+f_n(\bm{x}^*) -f_0(\bm{x}^*)+ f_0(\bm{x}^*)- f_0(\bm{x})|\\
	& \ge |f_n(\bm{x}^*) -f_0(\bm{x}^*)|- |f_n(\bm{x})-f_n(\bm{x}^*)|-|f_0(\bm{x}^*)- f_0(\bm{x})|\\
	& \ge \epsilon - \frac{\epsilon}{4M}M - \frac{\epsilon}{4M}M = \frac{\epsilon}{2},
	\end{align*}
	where we made use of the Lipschitz constant for $f_n$ and $f_0$ in the second last line above.
	Consequently,
	\[
	\int_{\bm{C}'}\{f_n(\bm{x})-f_0(\bm{x})\}^2d\bm{x} \ge \Big(\frac{\epsilon}{2}\Big)^2 \mathrm{Vol}(B_{\min\{\delta, \epsilon/(4M)\}}(\bm{x}^*)) = \mathrm{Const.} \times \epsilon^{d+2} 
	\]
	for any $0 < \epsilon < 4M\delta$. 
	
	But since $\epsilon >0$ is arbitrary, $\limsup_{n\rightarrow \infty}\sup_{\bm{x} \in \bm{C}}|f_n(\bm{x})-f_0(\bm{x})|\ge \epsilon$ for any sufficiently small $\epsilon$
	would imply
	\[
		\limsup_{n\rightarrow \infty}\int_{\bm{C}'}\{f_n(\bm{x})-f_0(\bm{x})\}^2d\bm{x} \ge  \mathrm{Const.} \times \epsilon^{d+2},
	\]
	violating
	\[
      \lim_{n\rightarrow \infty}\int_{\bm{C}'}\{f_n(\bm{x})-f_0(\bm{x})\}^2d\bm{x} = 0.
	\]
	Our proof is thus completed by contradiction.

	\end{proof}

	The following Lemma~\ref{lem:4} can be viewed as a small extension of Lemma~\ref{lem:3}. This is the version that we shall use in the proof of Theorem~\ref{thm:1.consistency}.
	
	\begin{lemma}
		\label{lem:4}
		Suppose that $f_0,f_1,f_2,\ldots: \bm{C}'\rightarrow \mathbb{R}$ are Lipschitz and convex functions (that could be random), where $\bm{C}' \subset \mathbb{R}^d$ is a compact and convex set. In addition, assume that these functions all have the same bound and Lipschitz constant. Furthermore, $q:\bm{C}'\rightarrow \mathbb{R}$ with $\inf_{\bm{x} \in \bm{C}'} q(\bm{x}) > 0$. Then, for any fixed compact set $\bm{C}$ in the interior of $ \bm{C}'$,
		\[
        \int_{\bm{C}'}\{f_n(\bm{x})-f_0(\bm{x})\}^2 q(\bm{x})d\bm{x}  \stackrel{p}{\rightarrow} 0
		\]
		implies that 
		\[
		\sup_{\bm{x} \in \bm{C}}|f_n(\bm{x})-f_0(\bm{x})|\stackrel{p}{\rightarrow} 0
		\]
	as $n \rightarrow \infty$.
	\end{lemma}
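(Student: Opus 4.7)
The plan is to reduce this statement to the deterministic Lemma~\ref{lem:3} via a subsequence argument for convergence in probability, after first stripping off the weight $q$. Let $c_0 := \inf_{\bm{x} \in \bm{C}'} q(\bm{x}) > 0$. Then
\[
\int_{\bm{C}'}\{f_n(\bm{x}) - f_0(\bm{x})\}^2 d\bm{x} \;\le\; \frac{1}{c_0}\int_{\bm{C}'}\{f_n(\bm{x}) - f_0(\bm{x})\}^2 q(\bm{x}) d\bm{x},
\]
so the hypothesis gives $\int_{\bm{C}'}\{f_n - f_0\}^2 d\bm{x} \stackrel{p}{\to} 0$.

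Next I would invoke the standard fact that $X_n \stackrel{p}{\to} 0$ if and only if every subsequence admits a further subsequence along which $X_n \to 0$ almost surely. Apply this to $X_n := \int_{\bm{C}'}\{f_n - f_0\}^2 d\bm{x}$: given any subsequence $(n_k)$, extract a further subsequence $(n_{k_\ell})$ along which $X_{n_{k_\ell}} \to 0$ almost surely. Fix such an $\omega$ in the (full probability) event of convergence. Then $f_0(\cdot,\omega), f_{n_{k_1}}(\cdot,\omega), f_{n_{k_2}}(\cdot,\omega),\ldots$ are deterministic convex functions on $\bm{C}'$ sharing the uniform bound and Lipschitz constant provided by hypothesis, and the unweighted $L^2$ distance from $f_0(\cdot,\omega)$ tends to zero. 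Lemma~\ref{lem:3} applies pathwise and yields $\sup_{\bm{x} \in \bm{C}}|f_{n_{k_\ell}}(\bm{x},\omega) - f_0(\bm{x},\omega)| \to 0$ as $\ell \to \infty$. Since this holds for almost every $\omega$, we obtain $\sup_{\bm{x} \in \bm{C}}|f_{n_{k_\ell}} - f_0| \to 0$ almost surely along the further subsequence.

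Having shown that every subsequence of $\sup_{\bm{x} \in \bm{C}}|f_n - f_0|$ contains a further subsequence converging to zero almost surely, the same characterization of convergence in probability now gives $\sup_{\bm{x} \in \bm{C}}|f_n - f_0| \stackrel{p}{\to} 0$, completing the proof. The main (minor) care point is the pathwise application of Lemma~\ref{lem:3}: one must verify that the deterministic hypotheses (common bound, common Lipschitz constant, convexity) transfer to each sample path $\omega$ in a single full-probability event, which is immediate here because these constants are assumed to hold uniformly in $n$ and $\omega$ by assumption. No delicate measurability issue arises since the suprema are taken over a fixed compact set $\bm{C}$ of continuous functions.
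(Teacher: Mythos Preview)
Your proof is correct, but it takes a different route from the paper. The paper does not use a subsequence argument; instead it re-runs the Lipschitz-ball estimate from Lemma~\ref{lem:3} directly with the weight $q$ present, obtaining the quantitative implication
\[
\sup_{\bm{x}\in\bm{C}}|f_n(\bm{x})-f_0(\bm{x})|\ge\epsilon\quad\Longrightarrow\quad \int_{\bm{C}'}\{f_n-f_0\}^2 q(\bm{x})\,d\bm{x}\ \ge\ \mathrm{Const.}\times \epsilon^{d+2},
\]
from which $P(\sup\ge\epsilon)\le P(\int\ge \mathrm{Const.}\times\epsilon^{d+2})\to 0$ follows in one line. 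Your approach is more modular (it reuses Lemma~\ref{lem:3} as a black box via the subsequence characterization of convergence in probability), while the paper's approach is more direct and yields an explicit deterministic inequality linking the sup-norm and the weighted $L^2$ norm; that inequality would, for instance, transfer any rate information from the integral to the supremum, which the subsequence reduction does not.
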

	
	\begin{proof}
	Following the arguments in the proof of Lemma~\ref{lem:3}, we see that $\sup_{\bm{x} \in \bm{C}}|f_n(\bm{x})-f_0(\bm{x})|\ge \epsilon$ would entail 
	\[
	\int_{\bm{C}'}\{f_n(\bm{x})-f_0(\bm{x})\}^2q(\bm{x}) d\bm{x} \ge \Big(\frac{\epsilon}{2}\Big)^2 \mathrm{Vol}(B_{\min\{\delta, \epsilon/(4M)\}}(\bm{x}^*)) \inf_{\bm{x} \in \bm{C}} q(\bm{x}) = \mathrm{Const.} \times \epsilon^{d+2} 
	\]
	for any sufficiently small $\epsilon$. Consequently, $
        \int_{\bm{C}'}\{f_n(\bm{x})-f_0(\bm{x})\}^2 q(\bm{x})d\bm{x}  \stackrel{p}{\rightarrow} 0$
		implies that 
		$\sup_{\bm{x} \in \bm{C}}|f_n(\bm{x})-f_0(\bm{x})|\stackrel{p}{\rightarrow} 0$.
	\end{proof}		
	
	\paragraph{Lipschitz continuity of SCKLS}
   	For the reasons that will become clearer later, it is useful to investigate the Lipschitz continuity of SCKLS before we present our proof of Theorem 2. Our finding is summarized in the following lemma. Its proof is similar to that of Proposition 4 of \citet[Page 201--202]{lim2012consistency}, or that of Theorem 1 of \citet[online supplementary material, Page 2--6]{ChenSamworth2016}.  We provide a concise version of the proof for the sake of completeness. To better illustrate its main idea and intuition, below we focus on the scenario of $d=1$.

	\begin{lemma}
	\label{lem:5}
    Under the assumptions of the first part of Theorem~2 (in the case where $m$ increases with $n$), for any convex and compact set $\bm{C}\subset \mathrm{int}(\bm{S})$ (where $\mathrm{int}(\cdot)$ denotes the interior of a set), there exists some constants $B > 0$ and $M > 0$ such that  $\hat{g}_n$ is $B$-bounded and  $M$-Lipschitz over $\bm{C}$ with probability one as $n\rightarrow \infty$.
    \end{lemma}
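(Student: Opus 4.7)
The plan is to reduce the problem to establishing uniform boundedness of $\hat{g}_n$ on a slightly inflated compact set $C^*$ with $C \subset \operatorname{int}(C^*) \subset C^* \subset \operatorname{int}(\bm{S})$, and then invoke the classical fact that a concave function uniformly bounded on $C^*$ is Lipschitz on any compact subset of its interior, with Lipschitz constant controlled by the bound and the separation $\operatorname{dist}(C,\partial C^*)$. Two features of the SCKLS estimator are used for free: the representation (\ref{eq:7.Extrapolation}) makes $\hat{g}_n$ concave as a pointwise minimum of affine functions, and the constraint $\hat{\bm{b}}_i \ge 0$ in (\ref{eq:6.SCKLS}) makes $\hat{g}_n$ monotone increasing. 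All that remains is the uniform bound.

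To extract a bound, I would first combine Theorem~\ref{thm:2.rate} with Markov's inequality. Since $\frac{1}{m} \sum_i (\hat{a}_i - g_0(\bm{x}_i))^2 = O_p(n^{-4/(4+d)} \log n) = o_p(1)$ and $g_0$ is continuous (hence bounded by some $B_0$) on the compact set $\bm{S}$, the cardinality of $A_n := \{i : |\hat{a}_i| > B_0 + 1\}$ satisfies $|A_n|/m \stackrel{p}{\to} 0$. Thus all but a vanishing fraction of the coefficients $\hat{a}_i = \hat{g}_n(\bm{x}_i)$ lie in $[-(B_0+1), B_0+1]$, giving ``many good evaluation points'' to work with.

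Focusing on $d = 1$, fix $\rho_0 > 0$ small enough that $[x + \rho_0, x + 2\rho_0] \subset \bm{S}$ for every $x \in C^*$, which is possible because $C^*$ is compact in $\operatorname{int}(\bm{S})$. Under Assumption~\ref{ass:2}(\ref{ass:2.1}) with $\min_{\bm{S}} q \ge q_{\min} > 0$, a standard Glivenko--Cantelli argument on the VC class of length-$\rho_0$ intervals ensures that each such window contains at least $c_1 m$ evaluation points (e.g.\ with $c_1 = q_{\min} \rho_0 /2$), uniformly in $x \in C^*$, with probability tending to one. Combined with $|A_n|/m = o_p(1)$, at least one such point $x_{i^*}$ has $|\hat{a}_{i^*}| \le B_0 + 1$, and monotonicity of $\hat{g}_n$ then yields $\hat{g}_n(x) \le \hat{a}_{i^*} \le B_0 + 1$. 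A symmetric argument using an interval $[x - 2\rho_0, x - \rho_0] \subset \bm{S}$ on the left supplies the matching lower bound $\hat{g}_n(x) \ge -(B_0 + 1)$, so $\sup_{x \in C^*} |\hat{g}_n(x)| \le B$ for a deterministic $B$, with probability tending to one. The extension to $d \ge 2$ simply replaces the translated interval by a small ball centered at $x \pm r \mathbf{1}/\sqrt{d}$, with radius chosen so that every point in the ball lies componentwise above (respectively below) $x$.

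The main subtlety is the uniformity in $x \in C^*$: the ``at least $c_1 m$ evaluation points in every shifted window'' claim must hold simultaneously for all $x$, which is delivered by the VC/Glivenko--Cantelli step rather than a pointwise empirical-measure statement. Once the uniform boundedness is established, the classical Lipschitz estimate for concave functions produces an explicit Lipschitz constant $M$ on $C$ depending only on $B$ and $\operatorname{dist}(C, \partial C^*)$, completing the argument.
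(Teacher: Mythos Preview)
Your proposal is correct and follows essentially the same route as the paper: both arguments use Theorem~\ref{thm:2.rate} together with monotonicity and Assumption~\ref{ass:2}(\ref{ass:2.1}) to obtain a uniform bound on a slightly enlarged compact set, and then invoke the standard fact that a bounded concave function on $C^*$ is Lipschitz on any compact $C \subset \operatorname{int}(C^*)$. The only organizational difference is that the paper exploits monotonicity to reduce the boundedness check to the extreme points of $C$ (so it only needs the empirical measure of the evaluation points on finitely many fixed sets), whereas you run a per-point argument with a sliding window and appeal to a uniform Glivenko--Cantelli step; the latter is slightly heavier machinery and, strictly speaking, requires more than the weak convergence stated in Assumption~\ref{ass:2}(\ref{ass:2.1}) when $d\ge 2$, but a finite-cover argument (or simply mimicking the paper's endpoint reduction) closes that minor gap immediately.
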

    
    \begin{proof}
    As explained before, here we focus on the scenario of $d=1$. Without loss of generality, we can take $\bm{S}=[0,1]$ and $\bm{C}=[\delta,1-\delta]$ for some $\delta \in (0,1/2)$.
    
    Let $B_0 = \sup_{[0,1]}|g_0(x)|$. First, we show that the event
    \[
    \sup_{x \in [\delta,1-\delta]}|\hat{g}_n(x)| \le 2B_0+1 =:B
    \]
    happens with probability one as $n \rightarrow \infty$. 
    
    Since $\hat{g}_n$ is increasing, $\sup_{x \in [\delta,1-\delta]}|\hat{g}_n(x)| = \max\Big(|\hat{g}_n(\delta)|,|\hat{g}_n(1-\delta)|\Big)$. In addition, due to the monotonicity of $\hat{g}_n$, suppose that $\hat{g}_n(\delta) \le 0$, then $|\hat{g}_n(x)| \ge |\hat{g}_n(\delta)|$ for $x \in [0, \delta]$; otherwise, if $\hat{g}_n(\delta) > 0$,  $|\hat{g}_n(x)| \ge |\hat{g}_n(\delta)|$ for $x \in [\delta, 2\delta]$ (actually, this statement is true for $x \in [\delta, 1]$; but for our purpose, it suffices to only consider $x \in [\delta, 2\delta]$). As such, $|\hat{g}_n(\delta)| > 2B_0+1$ would imply that
    \begin{align*}
    \frac{1}{m}  \sum_{i=1}^m (\hat{g}_n(\bm{x}_i)-g_0(\bm{x}_i))^2 &\ge \frac{ \mathbf{1}_{\{\hat{g}_n(\delta) \le 0\}}}{m} \sum_{i=1}^m (\hat{g}_n(x_i)-g_0(x_i))^2 \mathbf{1}_{\{x_i \in [0,\delta]\}} \\
    &\quad +  \frac{\mathbf{1}_{\{\hat{g}_n(\delta) > 0\}}}{m}  \sum_{i=1}^m (\hat{g}_n(x_i)-g_0(x_i))^2 \mathbf{1}_{\{x_i \in [\delta,2\delta]\}}\\
    & \ge (2B_0+1-B_0)^2 \bigg( \frac{\mathbf{1}_{\{\hat{g}_n(\delta) \le 0\}}}{m}  \sum_{i=1}^m\mathbf{1}_{\{x_i \in [0,\delta]\}} +  \frac{\mathbf{1}_{\{\hat{g}_n(\delta) > 0\}}}{m}  \sum_{i=1}^m \mathbf{1}_{\{x_i \in [\delta,2\delta]\}}\bigg)\\
    &\ge (B_0+1)^2 \min\bigg(\frac{1}{m}  \sum_{i=1}^m \mathbf{1}_{\{x_i \in [0,\delta]\}}, \frac{1}{m}  \sum_{i=1}^m \mathbf{1}_{\{x_i \in [\delta,2\delta]\}}\bigg)\\
    &\stackrel{n \rightarrow \infty}{\ge}  B_0^2  \delta \min_{[0,1]} q(x) > 0.
    \end{align*}
    where $q(\cdot)$ is the density function with respect to what the empirical distribution of $\{\bm{x}_1,\ldots,\bm{x}_m\}$ converges to (see Assumption~\ref{ass:2}(\ref{ass:2.1})). Here the  last line also follows from Assumption~\ref{ass:2}(\ref{ass:2.1}). Note that Theorem~1 says that $\frac{1}{m}  \sum_{i=1}^m (\hat{g}_n(\bm{x}_i)-g_0(\bm{x}_i))^2 = o_p(1)$, which would result in a contradiction. Therefore, $|\hat{g}_n(\delta)| \le 2B_0+1$. 
    
    Furthermore, we can reapply the above argument to show that  $|\hat{g}_n(1-\delta)| \le 2B_0+1$. Consequently,  
    \[
    \sup_{x \in [\delta,1-\delta]}|\hat{g}_n(x)| \le 2B_0+1 =B
    \]
    happens with probability one as $n \rightarrow \infty$. 
    
    Second, note that the above proof works for any $\delta \in (0,1/2)$. Therefore, we also have that 
    \[
    \sup_{x \in [\delta/2,1-\delta/2]}|\hat{g}_n(x)| \le 2B_0+1
    \]
    with probability one  as $n \rightarrow \infty$. 
    
    Finally, since $\hat{g}_n$ is concave, we note that the Lipschitz constant over $[\delta,1-\delta]$ is bounded above by
    \[
    \max \Big(\frac{|\hat{g}_n(\delta/2)-\hat{g}_n(\delta)|}{\delta/2},  \frac{|\hat{g}_n(1-\delta/2)-\hat{g}_n(1-\delta)|}{\delta/2}\Big) \le 4(2B_0+1)/\delta =:M.
    \]
    In other words, intuitively speaking, in terms of the Lipschitz constant, the most extreme case for concave functions always occurs on the boundary. For general cases (i.e. $d > 1$), see for instance, \citet[Page 165, Problem 7]{van1996weak}.
    \end{proof}
	
    \paragraph{Putting things together to prove Theorem 2}
	\begin{proof} $\; $
	
	\noindent \textbf{First claim: when $m$ increases with $n$.}
	
    Let $C'$ be a compact and convex set such that $\bm{C} \subset \mathrm{int}(\bm{C}')$ and $\bm{C}' \subset \mathrm{int}(\bm{S})$, where $\mathrm{int}(\cdot)$ denotes the interior of a set. 
    
    By Lemma~\ref{lem:5}, we have that $\hat{g}_n$ is $B$-bounded and $M$-Lipschitz over $\bm{C}'$ with probability one as $n\rightarrow \infty$. Therefore, 
    $\{\hat{g}_n(\bm{x})-g_0(\bm{x})\}^2 \mathbf{1}_{\{\bm{x} \in  \bm{C}'\}}$  belongs to the class of functions that is bounded and equicontinuous over $\bm{C}'$.
    By Theorem~3.1 of \cite[Page 662]{Rao1962} (which can also be viewed as a generalization of the Uniform Law of Large Numbers; see also Chapter 2.4 of \citet{van1996weak}), we have that 
    \[
    \left|\frac{1}{m} \sum_{i=1}^m (\hat{g}_n(\bm{x}_i)-g_0(\bm{x}_i))^2 \mathbf{1}_{\{\bm{x}_i \in  \bm{C}'\}} - \int_{\bm{C}'} \{\hat{g}_n(\bm{x})-g_0(\bm{x})\}^2 q(\bm{x})d\bm{x}\right| \stackrel{p}{\rightarrow}0.
    \]
    In addition, it follows from Theorem~\ref{thm:2.rate} that
    \[
     o_p(1) = \frac{1}{m} \sum_{i=1}^m (\hat{g}_n(\bm{x}_i)-g_0(\bm{x}_i))^2 \ge \frac{1}{m} \sum_{i=1}^m (\hat{g}_n(\bm{x}_i)-g_0(\bm{x}_i))^2 \mathbf{1}_{\{\bm{x}_i \in  \bm{C}'\}}.
    \]
    Combining the above two equations together yields
    \[
    \int_{\bm{C}'} \{\hat{g}_n(\bm{x})-g_0(\bm{x})\}^2 q(\bm{x})d\bm{x} = o_p(1).
    \]
    It then follows immediately from Lemma~\ref{lem:4} that as $ n \rightarrow \infty$,
    \[
    \sup_{\bm{x} \in \bm{C}}|\hat{g}_n(\bm{x})-g_0(\bm{x})|\stackrel{p}{\rightarrow} 0.
    \]
    
	\vspace{3mm}
	\noindent\textbf{Second claim: when $m$ is fixed.}
    
    In views of Lemma~\ref{lem:1} and  Theorem \ref{thm:2.rate},
	    \[
		 \frac{1}{C}\sum_{i=1}^{m}\left[|\hat{g}_n(\bm{x}_i)-g_0(\bm{x}_i)|^2 + \Big\|\Big(\frac{\partial \hat{g}_n}{\partial \bm{x}}(\bm{x}_i)-\frac{\partial g_0}{\partial \bm{x}}(\bm{x}_i)\Big)h\Big\|^2 \right] \le  \|\hat{g}_n - g_0\|_{n,m}^2 = O_p(n^{-4/(4+d)}\log n)
	    \]
	    where the first inequality is from Lemma~\ref{lem:1}, and the last equality is from Theorem \ref{thm:2.rate}.

	 Since $m$ is fixed and $h = O(n^{-1/(4+d)})$, it follows from that $| \hat{g}_n(\bm{x}_i)-g_0(\bm{x}_i)| = O_p(n^{-2/(4+d)}\log n) \stackrel{p}{\rightarrow} 0$ and $\|\frac{\partial \hat{g}_n}{\partial \bm{x}}(\bm{x}_i) - \frac{\partial {g}_0}{\partial \bm{x}}(\bm{x}_i) \| = O_p(n^{-1/(4+d)}\log n) \stackrel{p}{\rightarrow} 0$ for every $i=1,\ldots,m$. 
	\end{proof}

	\subsubsection{Proof of Theorem 3}
	\begin{proof}
	    Using Equation (\ref{eq:proofthm1.2}) but focusing on the difference between the derivatives instead, we have that
	    \begin{align*}
	     \frac{h^2}{C m} \sum_{i=1}^m \Big\|\Big(\frac{\partial \hat{g}_n}{\partial \bm{x}}(\bm{x}_i)-\frac{\partial g_0}{\partial \bm{x}}(\bm{x}_i)\Big)\Big\|^2 &\le \|\hat{g}_n - g_0\|_{n,m}^2 = O_p(n^{-4/(4+d)}\log n)
	    \end{align*}	  
	    as $n \rightarrow \infty$.
	    It then follows from $h=O(n^{-1/(4+d)})$ and Assumption~\ref{ass:3} that 
	    \begin{align*}
\sum_{i=1}^m \Big\|\frac{\partial \hat{g}_n}{\partial\bm{x}}(\bm{x}_i)-\frac{\partial g_0}{\partial\bm{x}}(\bm{x}_i)\Big\|^2 = O_p( h^{-2} m n^{-4/(4+d)} \log n) = o_p(1).
	  	    \end{align*}
  	    This implies that $\max_{i=1,\ldots,m} \Big\|\frac{\partial \hat{g}_n}{\partial\bm{x}}(\bm{x}_i)\Big\|_{\infty} \le \sup_{\bm{x} \in \bm{S}} \Big\|\frac{\partial {g}_0}{\partial\bm{x}}(\bm{x})\Big\|_{\infty} + o_p(1)$.
		Now since 
		\[
		\hat{g}_n(\bm{x})=\min_{i\in\{1,\ldots,m\}}\Big\{\hat{g}_n(\bm{x}_i)+(\bm{x}-\bm{x_i})' \frac{\partial \hat{g}_n}{\partial\bm{x}}(\bm{x}_i)	\Big\},\]
		we have that with probability one,
		\[
    	\sup_{\bm{x} \in \bm{S}} \Big\|\frac{\partial \hat{g}_n}{\partial\bm{x}}(\bm{x})\Big\|_{\infty} \le M
		\]
		for some $M > 0$, as $n \rightarrow \infty$.
		
		For any $\epsilon > 0$, we can always find a compact set $\bm{C}_\epsilon \subset \bm{S}$ such that $\sup_{\bm{x} \in \bm{S}} \inf_{\bm{y} \in \bm{C}_\epsilon} \|\bm{x} - \bm{y} \| < \frac{\epsilon}{2(M+M_{g_0})}$, where $M_{g_0}$ is the Lipschitz constant of $g_0$.  In view of Theorem 2, $\sup_{\bm{x} \in \bm{C}_\epsilon}|\hat{g}_n(\bm{x})-g_0(\bm{x})|\rightarrow 0$ in probability. Therefore,
		\begin{align*}
		   \sup_{\bm{x} \in \bm{S}} |\hat{g}_n(\bm{x}) - g_0(\bm{x})| \le \sup_{\bm{x} \in \bm{C}_\epsilon} |\hat{g}_n(\bm{x}) - g_0(\bm{x})| + (M+M_{g_0}) \Big \{\sup_{\bm{x} \in \bm{S}} \inf_{\bm{y} \in \bm{C}_\epsilon} \|\bm{x} - \bm{y} \| \Big \}  \le \epsilon
		\end{align*}
		as $n \rightarrow \infty$. Since $\epsilon$ is picked arbitrarily, we have shown the consistency of $\hat{g}_n$ over $\bm{S}$.		
		
	\end{proof}

	\subsection{Proof of Theorems in Section~\ref{sec:5.test}}

    \subsubsection{Proof of Theorem 4}

	\begin{proof}
	Using the definition of SCKLS in Appendix~\ref{App:AppendixC1} and the notation in the proofs of Theorem 1 and Theorem 2, we have that 
    \begin{align*}
    &\sum_{i=1}^{m} \Big(\tilde{a}_i-g_0^*(\bm{x}_i), \big(\tilde{\bm{b}}_i-\frac{\partial g_0^*}{\partial \bm{x}}(\bm{x}_i) \big)'{h}\Big) \boldsymbol{\Sigma}_i \begin{pmatrix}  \tilde{a}_i- g_0^*(\bm{x}_i)  \\ \big(\tilde{\bm{b}}_i-\frac{\partial g_0^*}{\partial \bm{x}}(\bm{x}_i) \big){h} \end{pmatrix} \\
    & \ge 
    \sum_{i=1}^{m} \big(\tilde{a}_i-\hat{a}_i, (\tilde{\bm{b}}_i-\hat{\bm{b}}_i)'h\big) \boldsymbol{\Sigma}_i \begin{pmatrix}  \tilde{a}_i-\hat{a}_i \\ 
	    	\big(\tilde{\bm{b}}_i-\hat{\bm{b}}_i\big){h} \end{pmatrix}\\
	& = \sum_{i=1}^{m} \Big(\tilde{a}_i-g_0^*(\bm{x}_i), \big(\tilde{\bm{b}}_i-\frac{\partial g_0^*}{\partial \bm{x}}(\bm{x}_i) \big)'{h}\Big) \boldsymbol{\Sigma}_i \begin{pmatrix}  \tilde{a}_i- g_0^*(\bm{x}_i)  \\ \big(\tilde{\bm{b}}_i-\frac{\partial g_0^*}{\partial \bm{x}}(\bm{x}_i) \big){h} \end{pmatrix} \\
	& \qquad + 2 \sum_{i=1}^m \big(\tilde{a}_i-g_0^*(\bm{x}_i), (\tilde{\bm{b}}_i-\frac{\partial g_0^*}{\partial \bm{x}}(\bm{x}_i))'h\big) \boldsymbol{\Sigma}_i \begin{pmatrix}  g_0^*(\bm{x}_i) - \hat{a}_i \\ 
	    	(\frac{\partial g_0^*}{\partial \bm{x}}(\bm{x}_i)-\hat{\bm{b}}_i\big){h} \end{pmatrix}\\
	& \qquad + \sum_{i=1}^{m} \big(g_0^*(\bm{x}_i) - \hat{a}_i, (\frac{\partial g_0^*}{\partial \bm{x}}(\bm{x}_i)-\hat{\bm{b}}_i)'h\big) \boldsymbol{\Sigma}_i \begin{pmatrix} g_0^*(\bm{x}_i) - \hat{a}_i \\ 
	    	\big(\frac{\partial g_0^*}{\partial \bm{x}}(\bm{x}_i)-\hat{\bm{b}}_i\big){h} \end{pmatrix}
	\end{align*}
	where we recall that $\hat{a}_i$ and $\hat{\bm{b}}_i$ are respectively the estimated value and its gradient from SCKLS at evaluation point $\bm{x}_i$, i.e., $\hat{a}_i = \hat{g}_n(\bm{x}_i)$ and $\hat{\bm{b}}_i = \frac{\partial\hat{g}_n}{\partial\bm{x}}(\bm{x}_i)$.
	
	Therefore, in view of Lemma~\ref{lem:2}, with probability one, for sufficiently large $n$,
	\begin{align}
	 \label{eq:proofthm4.1} &\frac{2}{m}\sum_{i=1}^m \big(\tilde{a}_i-g_0^*(\bm{x}_i), (\tilde{\bm{b}}_i-\frac{\partial g_0^*}{\partial \bm{x}}(\bm{x}_i))'h\big) \boldsymbol{\Sigma}_i \begin{pmatrix}  \hat{a}_i - g_0^*(\bm{x}_i) \\ (\hat{\bm{b}}_i- \frac{\partial g_0^*}{\partial \bm{x}}(\bm{x}_i)){h} \end{pmatrix} \\
	  \label{eq:proofthm4.2} &\ge \frac{1}{m} \sum_{i=1}^{m} \big(g_0^*(\bm{x}_i) - \hat{a}_i, (\frac{\partial g_0^*}{\partial \bm{x}}(\bm{x}_i)-\hat{\bm{b}}_i)'h\big) \boldsymbol{\Sigma}_i \begin{pmatrix} g_0^*(\bm{x}_i) - \hat{a}_i \\ \big(\frac{\partial g_0^*}{\partial \bm{x}}(\bm{x}_i)-\hat{\bm{b}}_i\big){h} \end{pmatrix} 
	    & \ge \frac{1}{mC} \sum_{i=1}^{m} (g_0^*(\bm{x_i}) - \hat{a}_i)^2
	\end{align}
	
	Next, we show that the quantity in (\ref{eq:proofthm4.1}) converges to zero in probability as $n \rightarrow \infty$. The proof can be divided into six steps:
	\begin{enumerate}[1.]
	    \item The contribution to (\ref{eq:proofthm4.1}) from evaluation points lying outside a carefully pre-chosen compact subset $\bm{S}'$ of the interior of $\bm{S}$ (denoted as $\mathrm{int}(\bm{S})$) can be made arbitrarily small. This follows from the Cauchy--Schwarz inequality that
	    \begin{align}
	    & \notag \frac{1}{m}\sum_{i=1}^m \big(\tilde{a}_i-g_0^*(\bm{x}_i), (\tilde{\bm{b}}_i-\frac{\partial g_0^*}{\partial \bm{x}}(\bm{x}_i))'h\big) \boldsymbol{\Sigma}_i \begin{pmatrix}  \hat{a}_i - g_0^*(\bm{x}_i) \\ (\hat{\bm{b}}_i- \frac{\partial g_0^*}{\partial \bm{x}}(\bm{x}_i)){h} \end{pmatrix} \mathbf{1}_{\{\bm{x} \notin \bm{S}'\}} \\
	    &  \label{eq:proofthm4.3} \le \sqrt{ \frac{1}{m}\sum_{i=1}^m \big(\tilde{a}_i-g_0^*(\bm{x}_i), (\tilde{\bm{b}}_i-\frac{\partial g_0^*}{\partial \bm{x}}(\bm{x}_i))'h\big) \boldsymbol{\Sigma}_i \begin{pmatrix}  \tilde{a}_i - g_0^*(\bm{x}_i) \\ (\tilde{\bm{b}}_i- \frac{\partial g_0^*}{\partial \bm{x}}(\bm{x}_i)){h} \end{pmatrix} \mathbf{1}_{\{\bm{x} \notin \bm{S}'\}}}\\
	    &  \label{eq:proofthm4.4} \qquad \times \sqrt{ \frac{1}{m}\sum_{i=1}^m \big(\hat{a}_i-g_0^*(\bm{x}_i), (\hat{\bm{b}}_i-\frac{\partial g_0^*}{\partial \bm{x}}(\bm{x}_i))'h\big) \boldsymbol{\Sigma}_i \begin{pmatrix}  \hat{a}_i - g_0^*(\bm{x}_i) \\ (\hat{\bm{b}}_i- \frac{\partial g_0^*}{\partial \bm{x}}(\bm{x}_i)){h} \end{pmatrix} }.
	   \end{align}
	   Because of Lemma~\ref{lem:1} and Assumption~\ref{ass:2}(\ref{ass:2.1}), the quantity in (\ref{eq:proofthm4.3}) can be made arbitrarily small by choosing $\bm{S}'$ sufficiently close to $\bm{S}$. In addition, applying the Cauchy--Schwarz inequality to (\ref{eq:proofthm4.1}) and comparing it to  (\ref{eq:proofthm4.2}) yields
	   \begin{align*}
	   	 &2\sqrt{\frac{1}{m}\sum_{i=1}^m \big(\tilde{a}_i-g_0^*(\bm{x}_i), (\tilde{\bm{b}}_i-\frac{\partial g_0^*}{\partial \bm{x}}(\bm{x}_i))'h\big) \boldsymbol{\Sigma}_i \begin{pmatrix}  \tilde{a}_i - g_0^*(\bm{x}_i) \\ (\tilde{\bm{b}}_i- \frac{\partial g_0^*}{\partial \bm{x}}(\bm{x}_i)){h} \end{pmatrix} }\\
	   	 & \qquad \times \sqrt{\frac{1}{m}\sum_{i=1}^m \big(\hat{a}_i-g_0^*(\bm{x}_i), (\hat{\bm{b}}_i-\frac{\partial g_0^*}{\partial \bm{x}}(\bm{x}_i))'h\big) \boldsymbol{\Sigma}_i \begin{pmatrix}  \hat{a}_i - g_0^*(\bm{x}_i) \\ (\hat{\bm{b}}_i- \frac{\partial g_0^*}{\partial \bm{x}}(\bm{x}_i)){h} \end{pmatrix} } \\  
	   	  &\ge \frac{1}{m} \sum_{i=1}^{m} \big(g_0^*(\bm{x}_i) - \hat{a}_i, (\frac{\partial g_0^*}{\partial \bm{x}}(\bm{x}_i)-\hat{\bm{b}}_i)'h\big) \boldsymbol{\Sigma}_i \begin{pmatrix} g_0^*(\bm{x}_i) - \hat{a}_i \\ \big(\frac{\partial g_0^*}{\partial \bm{x}}(\bm{x}_i)-\hat{\bm{b}}_i\big){h} \end{pmatrix},
	   \end{align*}
	   so (\ref{eq:proofthm4.4}) is no greater than
	   \begin{align*}
	   &2\sqrt{\frac{1}{m}\sum_{i=1}^m \big(\tilde{a}_i-g_0^*(\bm{x}_i), (\tilde{\bm{b}}_i-\frac{\partial g_0^*}{\partial \bm{x}}(\bm{x}_i))'h\big) \boldsymbol{\Sigma}_i \begin{pmatrix}  \tilde{a}_i - g_0^*(\bm{x}_i) \\ (\tilde{\bm{b}}_i- \frac{\partial g_0^*}{\partial \bm{x}}(\bm{x}_i)){h} \end{pmatrix} } \\
	   &\rightarrow 2 \Big\{\int_{\bm{S}} (g_0(\bm{x}) - g_0^*(\bm{x}))^2 Q(d\bm{x})\Big\}^{1/2} \le 2 \Big\{\int_{\bm{S}} g_0^2(\bm{x}) Q(d\bm{x})\Big\}^{1/2}.
	   \end{align*}
	   Consequently, the claim in this step is proved.
	   \item We now investigate the contribution to (\ref{eq:proofthm4.1}) from evaluation points lying inside $\bm{S}'$. Using Lemma~\ref{lem:5}, we have that $\hat{g}_n$ is bounded (i.e. from both below and above) and $M$-Lipschitz over $\bm{S'}$ in probability. 
	   
	   Combining this with Lemma~\ref{lem:1} implies that 
	   \begin{align*}
	   &\Bigg|\frac{1}{m}\sum_{i=1}^m \big(\tilde{a}_i-g_0^*(\bm{x}_i), (\tilde{\bm{b}}_i-\frac{\partial g_0^*}{\partial \bm{x}}(\bm{x}_i))'h\big) \boldsymbol{\Sigma}_i \begin{pmatrix}  \hat{a}_i - g_0^*(\bm{x}_i) \\ \big(\hat{\bm{b}}_i- \frac{\partial g_0^*}{\partial \bm{x}}(\bm{x}_i)\big){h} \end{pmatrix} \mathbf{1}_{\{\bm{x} \in \bm{S}'\}} \\
	   &\quad -  \frac{1}{m}\sum_{i=1}^m \bigg((g_0-g_0^*)(\bm{x}_i), (\frac{\partial (g_0 - g_0^*)}{\partial \bm{x}}(\bm{x}_i))'h\bigg) \boldsymbol{\Sigma}_i \begin{pmatrix}  \hat{a}_i - g_0^*(\bm{x}_i) \\ \big(\hat{\bm{b}}_i- \frac{\partial g_0^*}{\partial \bm{x}}(\bm{x}_i)\big){h} \end{pmatrix} \mathbf{1}_{\{\bm{x} \in \bm{S}'\}} \Bigg| \rightarrow 0
	   \end{align*}
	   in probability. As such, we can instead work on 
	   \begin{align}
	   \label{eq:proofthm4.5}
	         \frac{1}{m}\sum_{i=1}^m \bigg((g_0-g_0^*)(\bm{x}_i), (\frac{\partial (g_0 - g_0^*)}{\partial \bm{x}}(\bm{x}_i))'h\bigg) \boldsymbol{\Sigma}_i \begin{pmatrix}  \hat{a}_i - g_0^*(\bm{x}_i) \\ \big(\hat{\bm{b}}_i- \frac{\partial g_0^*}{\partial \bm{x}}(\bm{x}_i)\big){h} \end{pmatrix} \mathbf{1}_{\{\bm{x} \in \bm{S}'\}}
	   \end{align}
	   \item Next, we bound (and eliminate) the influence from the parts involving partial derivatives of $g_0$, $g_0^*$ and $\hat{g}_n$ in (\ref{eq:proofthm4.5}). Since $\hat{g}_n$ is bounded and $M$-Lipschitz over $\bm{S'}$ in probability, together with Lemma~\ref{lem:2}, we could bound (\ref{eq:proofthm4.5}) from above by 
	   \begin{align*}
	   \frac{1}{m}\sum_{i=1}^m (g_0(\bm{x}_i)-g_0^*(\bm{x}_i))( \hat{g}_n(\bm{x}_i)) - g_0^*(\bm{x}_i)) \mathbf{1}_{\{\bm{x} \in \bm{S}'\}} + O(h) + O(h^2), 
	   \end{align*}
	   which is arbitrarily close to $\frac{1}{m}\sum_{i=1}^m (g_0(\bm{x}_i)-g_0^*(\bm{x}_i))(\hat{g}_n(\bm{x}_i) - g_0^*(\bm{x}_i)) \mathbf{1}_{\{\bm{x} \in \bm{S}'\}} $ as $n \rightarrow \infty$ (i.e. $h \rightarrow$ 0). Here we also used the fact that $\sup_{i =1,\ldots,m} |\boldsymbol{\Sigma}_i^{(11)}-1| \rightarrow 0$, where $\boldsymbol{\Sigma}_i^{(11)}$ is the first diagonal entry of the matrix $\boldsymbol{\Sigma}_i$. 
	   \item  Now we re-expand $\hat{g}_n$ from $\bm{S'}$ to $\bm{S}$ as 
	   \[
	   \hat{g}^{\bm{S}'}_n(\bm{x}) = \min_{i\in\{1,\ldots,m | \bm{x}_i \in \bm{S}'\}, }\Big\{\hat{g}_n(\bm{x}_i)+(\bm{x}-\bm{x_i})' \frac{\partial \hat{g}_n}{\partial\bm{x}}(\bm{x}_i)	\Big\}.
	   \]
	   Three useful facts about $\hat{g}^{\bm{S}'}_n$ are listed below:
	   \begin{itemize}
	   \item $\hat{g}_n^{\bm{S}'} \ge \hat{g}_n$, with $\hat{g}_n^{\bm{S}'}(\bm{x}_i) = \hat{g}_n(\bm{x}_i)$ for any $\bm{x}_i \in \bm{S}'$.
	   \item  there exists some $B>0$ such that $\sup_{\bm{x} \in \bm{S}}\hat{g}_n^{\bm{S}'}(\bm{x}) \le B$ in probability. Importantly, given that there is a common compact and convex set $\bm{C}$ such that $\bm{C} \subset \bm{S}'$ for all the $\bm{S}'$ to be considered, the constant $B$ does not depend on the choice of $\bm{S}'$. To see this, we note that $\hat{g}_n^{\bm{C}} = \hat{g}_n$ over $\bm{C}$, which is also $B'$-bounded and $M'$-Lipschitz over $\bm{C}$ in probability via Lemma~\ref{lem:5}. Then it follows that
	   \[
	   \hat{g}_n^{\bm{S}'} \le \hat{g}_n^{\bm{C}} \le B'+M' \sup_{\bm{y}_1,\bm{y_2} \in \bm{S}} \|\bm{y}_1 - \bm{y}_2\| =: B
	   \]
	   in probability as $n \rightarrow \infty$.
	   
       \item The function $\{(g_0-g_0^*)(\hat{g}_n - g_0^*)\}(\cdot)$ is bounded and Lipschitz over $\bm{S}'$ in probability (where the constants do not depend on $n$). So is $\{(g_0-g_0^*)(\hat{g}^{\bm{S}'}_n - g_0^*)\}(\cdot)$ over $\bm{S}$. This also means that $\{(g_0-g_0^*)(\hat{g}^{\bm{S}'}_n - g_0^*)\}(\cdot)$ is equicontinuous over $\bm{S}$.
	   \end{itemize}

	   \item 
	    Returning to the quantity we mentioned at the end of Step 3, we note that 
	   \begin{align*}
	   &\frac{1}{m}\sum_{i=1}^m (g_0(\bm{x}_i)-g_0^*(\bm{x}_i))( \hat{a}_i - g_0^*(\bm{x}_i) \mathbf{1}_{\{\bm{x}_i \in \bm{S}'\}} \\ 
	   &= 	   \frac{1}{m}\sum_{i=1}^m \Big((g_0-g_0^*)(\hat{g}_n^{\bm{S}'} - g_0^*)\Big)(\bm{x}_i) - 	   \frac{1}{m}\sum_{i=1}^m \Big((g_0-g_0^*)(\hat{g}_n^{\bm{S}'} - g_0^*)\Big)(\bm{x}_i) \mathbf{1}_{\{\bm{x}_i \notin \bm{S}'\}} \\
	   &=    \frac{1}{m}\sum_{i=1}^m \Big((g_0-g_0^*)(\hat{g}_n^{\bm{S}'} - g_0^*)\Big)(\bm{x}_i) - \frac{1}{m}\sum_{i=1}^m \Big((g_0-g_0^*)(\hat{g}_n - g_0^*)\Big)(\bm{x}_i) \mathbf{1}_{\{\bm{x}_i \notin \bm{S}'\}}\\
	   & \qquad -  \frac{1}{m}\sum_{i=1}^m \Big((g_0-g_0^*)(\hat{g}_n - \hat{g}_n^{\bm{S}'})\Big)(\bm{x}_i) \mathbf{1}_{\{\bm{x}_i \notin \bm{S}'\}}\\
	   &= (\mathrm{I}) + (\mathrm{II}) + (\mathrm{III}).
	   \end{align*}
	  We deal with each of these items separately.
	  \begin{itemize}
	      \item By the third fact listed in the above Step~4 and Theorem 3.1 of \citet{Rao1962}, (I) in the limit (i.e. as $n \rightarrow \infty$) is at most
	      \[
	      \sup_{g \in G_2}\int_{\bm{S}}\{(g_0(\bm{x})-g_0^*(\bm{x})\}\{g(\bm{x})-g_0^*(\bm{x})\}q(\bm{x})d\bm{x} \le 0.
	      \]
	      Note that $g_0^*$ minimizes 
	      \[
	      \mathcal{G}(g):= \int_{\bm{S}}(g_0(\bm{x})-g(\bm{x}))^2q(\bm{x})d\bm{x}
	      \]
	      over all $g \in G_2$. The previous inequality thus follows by studying the functional derivative for the function $\mathcal{G}(\cdot)$ at $g_0^*$ in the direction of $g-g_0^*$ (N.B. $g_0^* + \epsilon (g-g_0^*) \in G_2$ for  $\epsilon \rightarrow 0$) for all $g \in G_2$.
	      \item Both $|(\mathrm{II})|$ and $|(\mathrm{III})|$ in the limit can be arbitrarily small for $\bm{S}'$ sufficiently close to $\bm{S}$. This follows from Cauchy--Schwarz inequality and an argument similar to that in Step~1.
	      
	  \end{itemize}
	   \item We now put things together by noting that in light of Steps 1 to 5, for any $\epsilon$, we can find some $\bm{S}'$ such that the quantity in (\ref{eq:proofthm4.1}) is no bigger than $\epsilon$ in probability as $n \rightarrow \infty$. Since the quantity in (\ref{eq:proofthm4.1}) is also non-negative, our claim that (\ref{eq:proofthm4.1}) converges to zero in probability is verified.
	\end{enumerate}
	
	Finally, uniform consistency over any $\bm{C}$ can be shown using exactly the same approach we demonstrated in the final stage of proving the first part of Theorem~\ref{thm:1.consistency} via Lemma~\ref{lem:4}.

	\end{proof}
	
	\subsubsection{Proof of Theorem 5}

	\begin{proof}
	
	Our proof can be divided into three parts.
	
	\noindent \textbf{1. The case of $g_0 = 0$.}
	
	Using the definition of SCKLS in Appendix~\ref{App:AppendixC1}, it is easy to verify that $T_n = \|\hat{g}_n - \tilde{g}_n \|_{n,m}$. For reasons that will become clear later, we denote $\hat{g}_{n}^\circ$ and $ \tilde{g}_{n}^\circ$ the SCKLS and LL estimators based on the same covariates, evaluation points and bandwidth used in calculating $T_n$, but with the response vector $(\epsilon_1,\ldots,\epsilon_n)'$ (instead of $\bm{y}_n$) and set $T_n^\circ = \|\tilde{g}^\circ_{n}   - \hat{g}^\circ_{n}\|_{n,m}$. Obviously, when $g_0 = 0$ (which is the case here), $\hat{g}_{n}^\circ = \hat{g}_{n}, \tilde{g}_{n}^\circ = \tilde{g}_{n}$ and $T_n^\circ = T_n$. 
	
	Now, for $k = 1,\ldots,B$,  $T_{nk} = \|\hat{g}_{nk} - \tilde{g}_{nk} \|_{n,m}$, where $\hat{g}_{nk}$ and $ \tilde{g}_{nk}$ are respectively the SCKLS and LL estimators based on the same covariates, evaluation points and bandwidth used in calculating $T_n$, but with the response vector $(u_{1k} \tilde{\epsilon}_1,\ldots,u_{nk}\tilde{\epsilon}_n)'$. Further, we define a slightly modified bootstrap version of the test statistic as $T_{nk}^\circ = \|\hat{g}_{nk}^\circ - \tilde{g}_{nk}^\circ \|_{n,m}$, where $\hat{g}_{nk}^\circ$ and $ \tilde{g}_{nk}^\circ$ are the SCKLS and LL estimators based on the same covariates, evaluation points and bandwidth used in calculating $T_n$, but with the response $(u_{1k} \epsilon_1,\ldots,u_{nk}\epsilon_n)'$. Let $\bm{e} = (|\epsilon_1|,\ldots,|\epsilon_n|)'$ and denote $p_n^\circ = \frac{1}{B}\sum_{i=1}^B \mathbf{1}_{\{T_n^\circ \le T_{nk}^\circ\}}$. Then, it follows from the symmetry of the error distribution that conditioning on the values of the absolute errors (i.e. $(|\epsilon_1|,\ldots,|\epsilon_n|)' = \bm{e}$), the quantities
	\[
	T_n^\circ, T_{n1}^\circ,\ldots,T_{nB}^\circ
	\]
	are exchangeable. Consequently, as $B \rightarrow \infty$,
	\[
	P(p_n^\circ \le \alpha) = E\Big\{P\Big(p_n^\circ \le \alpha \Big|(|\epsilon_1|,\ldots,|\epsilon_n|)' = \bm{e}\Big) \Big\} \le \frac{\lfloor B \alpha \rfloor+1}{1+B} \rightarrow \alpha.
	\]
	Back to the elements in the quantity $p_n$, our aim is to show that $\mathbf{1}_{\{T_n \le T_{nk}^\circ\}} \le \mathbf{1}_{\{T_n \le T_{nk} + \Delta_n\}}$ for large $n$. Note that 
	\[
	T_{nk} - T_{nk}^\circ  = \|\tilde{g}_{nk} - \hat{g}_{nk}\|_{n,m} - \|\tilde{g}^\circ_{nk} - \hat{g}^\circ_{nk}\|_{n,m}  \le \|\tilde{g}_{nk} - \hat{g}^\circ_{nk}\|_{n,m} - \|\tilde{g}^\circ_{nk} - \hat{g}^\circ_{nk}\|_{n,m}\le \|\tilde{g}_{nk} - \tilde{g}^\circ_{nk}\|_{n,m}
	\]
	Because we estimated the error vector in Step 1 using LL (without any shape restrictions), it follows from Proposition~7 of \citet{fan2016} that $\sup_{j}|\tilde{\epsilon_j}-\epsilon_j| \le O_p(n^{-2/(4+d)} \log^{1/2} n)$. By the linearity of the LL estimator (w.r.t. the response vector), we have that $\sup_k\|\tilde{g}_{nk} - \tilde{g}^\circ_{nk}\|^2_{n,m} = O_p(n^{-4/(4+d)} \log n)$. Consequently, with arbitrarily high probability,
	\begin{align*}
    \inf_{k=1,\ldots,B} (T_{nk}+ \Delta_n - T_{nk}^\circ) > 0
	\end{align*}
	for sufficiently large $n$. This yields $\mathbf{1}_{\{T_n^\circ \le T_{nk}^\circ\}} \le \mathbf{1}_{\{T_n \le T_{nk} + \Delta_n\}}$ and thus $p_n \ge p_n^\circ$. As a result, $P(p_n \le \alpha) \le P(p_n^\circ \le \alpha) \le \alpha$, as required.
	\newline
	
	\noindent \textbf{2. The general case of  $g_0 \in G_2$.}
	
	To relate $T_n$ to what we investigated before (i.e. $g_0 = 0$), we recall the definitions of $\hat{g}_{n}^\circ$ and $ \tilde{g}_{n}^\circ$  from the previous case, and define an additional quantity $\tilde{g}_{n}^{\dagger}$ to be the LL estimator in exactly the same setting, but is obtained using the response vector $(g_0(\bm{X}_1),\ldots,g_0(\bm{X}_n))'$. By the linearity of the LL, $\tilde{g}_{n} = \tilde{g}^\circ_{n} + \tilde{g}^{\dagger}_{n}$.  Since $g_0$ is continuously twice-differentiable, we have that 
	\[
	T_n = \|\tilde{g}_{n} - \hat{g}_{n}\|_{n,m} \le \|\tilde{g}^\circ_{n} + \tilde{g}^{\dagger}_{n}  - \hat{g}^\circ_{n} - g_0\|_{n,m} \le  \|\tilde{g}^\circ_{n}   - \hat{g}^\circ_{n} \|_{n,m} + \|\tilde{g}^{\dagger}_{n} - g_0\|_{n,m} = T_n^\circ + O_p(h^2).
	\]
	As a result, with arbitrarily high probability, for every $k = 1, \ldots, B$,
	\[
	T_{nk}+\Delta_n - T_n = T_{nk}^\circ - T_n^\circ + (T_{nk} -  T_{nk}^\circ) - (T_n - T_n^\circ) + \Delta_n \ge  T_{nk}^\circ - T_n^\circ
	\]
	for sufficiently large $n$. This also leads to $\mathbf{1}_{\{T_n^\circ \le T_{nk}^\circ\}} \le \mathbf{1}_{\{T_n \le T_{nk} + \Delta_n\}}$. We could then directly apply the argument from the previous case to conclude that  $P(p_n \le \alpha) \le \alpha$.
	\newline
	
	\noindent \textbf{3. The case of  $g_0 \notin G_2$}
	
	Here $g_0$ is assumed to be fixed and continuously twice-differentiable. 
	
	First, two situations are considered.
	\begin{itemize}
	    \item Under Assumption~\ref{ass:2}(\ref{ass:2.1}), we recall that
	\[
	g_0^* := \argmin_{g \in G_2} \int_{\bm{S}}\{g(\bm{x})-g_0(\bm{x})\}^2 Q(d\bm{x}).
	\]
    Since $g_0 \notin G_2$, there must exists some compact set $\bm{S}' \subset \mathrm{int}(\bm{S})$ such that $Q(\bm{S}') > 0$ and 
    \[
    \inf_{\bm{x} \in \bm{S}'} |g_0^*(\bm{x}) - g_0(\bm{x})| > \delta.
    \]
    Note that
    \[
    T_n^2 =  \|\hat{g}_n - \tilde{g}_n\|^2_{n,m} \ge  \frac{1}{m}\sum_{i=1}^{m} \Big(\hat{g}_n(\bm{x}_i)-\tilde{g}_n(\bm{x}_i), \big(\frac{\partial(g_1-g_2)}{\partial \bm{x}}(\bm{x}_i)\big)'{h}\Big) \boldsymbol{\Sigma}_i \begin{pmatrix}  \hat{g}_n(\bm{x}_i)-\tilde{g}_n(\bm{x}_i) \\\frac{\partial(\hat{g}_n-\tilde{g}_n)}{\partial \bm{x}}(\bm{x}_i)h \end{pmatrix}\mathbf{1}_{\{\bm{x}_i \in \bm{S}'\}}.
    \]
    Here we have that $\tilde{g}_n \rightarrow g_0$ by \citet{fan2016} and $\hat{g}_n \rightarrow g_0^*$ over $\bm{S}'$ by our Theorem~\ref{thm:4.missconsistency}. Since $\tilde{g}_n - \hat{g}_n$ is Lipschitz over $\bm{S}'$, it is easy to verify (see also Step 3 of the proof of Theorem~\ref{thm:4.missconsistency}) that the righthand side of the above display equation is bounded below by $\delta^2 Q(\bm{S}')$ in the limit as $n \rightarrow \infty$ (also $h \rightarrow 0$). Consequently, $T_n \ge c'$ in probability for some $c' >0$.
    \item Now under Assumption~\ref{ass:2}(\ref{ass:2.2}), since $g_0 \notin G_2$ and the evaluation points are reasonably well spread across $\bm{S}$ (i.e. Assumption~\ref{ass:2}(\ref{ass:2.2})), for sufficiently large and fixed $m$, we can always find some evaluation points where the imposed shape constraint is violated. This means that
    \[
	\inf_{g \in G_2} \|g - g_0\|_{n,m} \ge c
	\]
	in probability for some $c > 0$. So we still have that
	\[
	T_n =  \|\hat{g}_n - \tilde{g}_n\|_{n,m} \ge  \|\hat{g}_n - g_0\|_{n,m} -  \|\tilde{g}_n- g_0\|_{n,m} \ge 	\inf_{g \in G_2} \|g - g_0\|_{n,m} - o_p(1) \ge c'
	\]
	in probability for some $c' > 0$.	
    
	\end{itemize}

	Second, it follows from the proof for the case of $g_0=0$ that 
	\[
	T_{nk}  = T_{nk}^\circ + T_{nk} - T_{nk}^\circ \le \|\tilde{g}_{nk}^\circ\|_{n,m} + \|\tilde{g}_{nk} - \tilde{g}_{nk}^\circ\|_{n,m} =o_p(1).
	\]
	
    Finally, write $W_{nk} =  \mathbf{1}_{\{T_{nk} + \Delta_n > c'/2\}}$. We note that $W_{n1}, \ldots, W_{nB}$ are exchangeable. Thus, for any $\alpha \in (0,1)$, as $n \rightarrow \infty$,
	\begin{align*}
	    P(\mbox{Do not reject } H_0) &= P\Bigg(\frac{1}{B}\sum_{k=1}^B \mathbf{1}_{\{T_n \le T_{nk}+ \Delta_n\}} \ge \alpha\Bigg) \\
	    &\le P(T_n \le  c'/2)+ P\Bigg(T_n > c'/2,\, \frac{1}{B}\sum_{k=1}^B \mathbf{1}_{\{T_n \le T_{nk}+ \Delta_n\}} \ge \alpha\Bigg) \\
	    &\le P(T_n \le c'/2) + P\Bigg(\frac{1}{B}\sum_{k=1}^B W_{nk} \ge \alpha\Bigg) \\
	    &\le P(T_n \le c'/2) + \frac{E(W_{n1})}{\alpha} \rightarrow 0,
	\end{align*}
	where we used Markov's inequality in the final line above. So the Type II error at the alternative indeed converges to 0.

	\end{proof}

	\subsection{Proof of Propositions in Appendix~\ref{App:AppendixA3}}
	\subsubsection{Proof of Proposition \ref{thm:4.SCKLS-CNLS}}	
	

		\begin{proof}
		In view of Assumption \ref{ass:1}~(\ref{ass:1.5}), for any sufficiently small $\bm{h}$, we have
		\[
		K\left(\frac{\bm{X}_j-\bm{x}_i}{\bm{h}}\right) = \begin{cases}
		0 & \mbox{if } \bm{x}_i\neq\bm{X}_j, \\
		K(\bm{0}) & \mbox{if } \bm{x}_i=\bm{X}_j,
		\end{cases} \mbox{ for }  \forall i,j.	
		\]
		Then, the objective function of (\ref{eq:6.SCKLS}) is equal to $\sum_{j=1}^{n}(y_j-a_j)^2 K(\bm{0})$, and thus 
		\begin{equation*}
		\argmin_{a_1,\bm{b}_1,\ldots,a_n,\bm{b}_n}\sum_{j=1}^{n}(y_j-a_j)^2 K(\bm{0}) =\argmin_{a_1,\ldots,a_n}\sum_{j=1}^{n}(y_j-a_j)^2
		\end{equation*}
		Writing $a_j=\alpha_j+\bm{\beta}_j'\bm{X}_j$ and $\bm{b}_j=\bm{\beta}_j$ for $j=1,\ldots,n$ by definition. Then, quadratic programming problem (\ref{eq:6.SCKLS}) can be rewritten as follows:
		\begin{equation*}
		\begin{aligned}
		& \min_{\alpha,\bm{\beta}}
		& & \sum_{j=1}^{n}(y_j-(\alpha_j+\bm{\beta}_j'\bm{X}_j))^2\\
		& \mbox{subject to}
		& & \alpha_j+\bm{\beta}_j'\bm{X}_j\leq\alpha_l+\bm{\beta}_l'\bm{X}_j, \; & j,l=1,\ldots,n\\
		&
		& & \bm{\beta}_j\geq 0, \; &j=1,\ldots,n
		\end{aligned}
		\end{equation*}
		which is equivalent to the formulation of the CNLS estimator (\ref{eq:1.CNLS}).
	\end{proof}	

\subsubsection{Proof of Proposition \ref{thm:5.SCKLS-OLS}}	
	\begin{proof}
		When $\min_{k=1,\ldots,d} h_k\rightarrow\infty$, we have
		\begin{equation}
		\label{eq:12.K}
		K\left(\frac{\bm{X}_j-\bm{x}_i}{\bm{h}}\right) = K(\bm{0}) \; \quad \mbox{for } \forall i,j.
		\end{equation}
		By substituting (\ref{eq:12.K}) into the objective function of (\ref{eq:6.SCKLS}) converges to
		\[
		\sum_{i=1}^{m}\sum_{j=1}^{n}(y_j-a_i-(\bm{X}_j-\bm{x}_i)'\bm{b}_i)^2 K(\bm{0}).
		\]
		
		Next, we derive the minimum of the objective function in the limit. Let's consider 
		\begin{equation}
		\label{eq:13.objcnv}
    	\argmin_{a_1,\bm{b}_1,\ldots,a_m,\bm{b}_m}\sum_{i=1}^{m}\sum_{j=1}^{n}(y_j-a_i-(\bm{X}_j-\bm{x}_i)'\bm{b}_i)^2
		\end{equation}
		subject to constraints. Rewrite $a_i+(\bm{X}_j-\bm{x}_i)'\bm{b_i}=\alpha_i+\bm{\beta}_i'\bm{X}_j$ for  $i=1,\ldots,m$ and $j=1,\ldots,n$. Then the objective function of (\ref{eq:6.SCKLS}) can be rewritten as follows with (\ref{eq:13.objcnv}).	
		\begin{equation*}
		\begin{aligned}
		& \min_{\alpha_1,\bm{\beta}_1,\ldots,\alpha_m,\bm{\beta}_m}
		& & \sum_{i=1}^{m}\sum_{j=1}^{n}(y_j-(\alpha_i+\bm{\beta}_i'\bm{X}_j))^2\\
		& \mbox{subject to}
		& & \alpha_i+\bm{\beta}_i'\bm{x}_i\leq\alpha_l+\bm{\beta}_l'\bm{x}_i
		& i,l=1,\ldots,m\\
		& 
		& & \bm{\beta}_i\geq 0
		& i=1,\ldots,m
		\end{aligned}
		\end{equation*}	
		Here, since we do not impose any weight on the objective function, it is easy to see that $\alpha_1 = \cdots = \alpha_m$ and $\bm{\beta}_1 = \cdots = \bm{\beta}_m$. Then the Afriat constraints become redundant, resulting in
		\begin{equation*}
		\begin{aligned}
		& \min_{\alpha,\bm{\beta}}
		& & \sum_{j=1}^{n}(y_j-(\alpha+\bm{\beta}'\bm{X_j}))^2\\
		& \mbox{subject to}
		& & \bm{\beta}\geq 0.
		\end{aligned}
		\end{equation*}		
		

	\end{proof}

	\subsubsection{Proof of Proposition \ref{thm:1.SCKLS-CWBY}}	
		\begin{proof}
		In view of Assumption \ref{ass:1}~(\ref{ass:1.5}), for any sufficiently small $\bm{h}$, we have
		\[
		K\left(\frac{\bm{X}_j-\bm{x}_i}{\bm{h}}\right) = \begin{cases}
		0 & \mbox{if } \bm{x}_i\neq\bm{X}_j, \\
		K(\bm{0}) & \mbox{if } \bm{x}_i=\bm{X}_j,
		\end{cases} \mbox{ for }  \forall i,j.	
		\]
		Then, the objective function of the SCKLS estimator (3) is equal to $\sum_{j=1}^{n}(y_j-a_j)^2 K(\bm{0})$, and thus 
		\begin{equation*}
		\argmin_{a_1,\bm{b}_1,\ldots,a_n,\bm{b}_n}\sum_{j=1}^{n}(y_j-a_j)^2 K(\bm{0}) =\argmin_{a_1,\ldots,a_n}\sum_{j=1}^{n}(y_j-a_j)^2
		\end{equation*}
		
		Also consider Assumption A1 (i) from \cite{du2013nonparametric}, we can say something similar for CWB in y-space. For any sufficiently small $\bm{h}$, we have
		

        \[ A_{j}(\bm{x_i}) = \begin{cases}
		0 & \mbox{if } \bm{x}_i\neq\bm{X}_j, \\
		n & \mbox{if } \bm{x}_i=\bm{X}_j,
		\end{cases} \mbox{ for }  \forall i,j.
		\]
		
		and thus
		
		\begin{equation}
	    \begin{aligned}
	    \label{eq:1.ndef_g}
	    \hat{g}(\bm{x}_i|\bm{p})=\sum_{j=1}^{n}p_jA_{j}(\bm{X}_i)y_{j}=n p_i y_i~~ \forall i=1,\ldots,n.
	    \end{aligned}
	    \end{equation}
		
		Then we can rewrite the CWB in $y$-space estimator as follows:
	\begin{equation}
		\begin{aligned}
		\label{eq:1.CWB_rev}
		&  \min_{\bm{p}}
		& & D_y(\bm{p})=\sum_{i=1}^{n}(y_i-n p_i y_{i})^2\\
		& \mbox{subject to}
		& & l(\bm{x}_i)\leq\hat{g}^{(\bm{s})}(\bm{x}_i|\bm{p})\leq u(\bm{x}_i), \; & i=1,\ldots,n.
		\end{aligned}
	\end{equation}
		
		Recognize that if $\hat{g}_n = n p_{i}y_{i}$ is true, then SCKLS and CWB in y-space are equivalent. Take $\hat{g}_n$ as the solution to SCKLS estimator and let $p_i$ be a set of decision variables, we see $\hat{g}_n = n p_{i}y_{i}$ is simply a system of $n$ equations and $n$ unknowns.   
	\end{proof}

	\section{Testing for affinity using SCKLS}
	\label{sec:5.2test}
	
	\subsection{The procedure}
	To further illustrate the usefulness of SCKLS for testing other shapes, we study the problem of testing 
	\[
	H_0:\;\; g_0: \bm{S} \rightarrow \mathbb{R} \mbox{ is affine } \quad \mbox{against} \quad H_1:\;\; g_0: \bm{S} \rightarrow \mathbb{R} \mbox{ is not affine}. 
	\]
	The main idea of our test is motivated by \cite{sen2016testing}. 
	The critical value of the test can be easily computed using Monte Carlo or bootstrap methods.
	
	To start of with, we define $\hat{g}_{n}^\mathrm{V}$, the SCKLS estimator with only a set of convexity constraints as
	\begin{equation*}
	\begin{aligned}
	& \min_{a_i,\bm{b_i}}
	& & \sum_{i=1}^{m}\sum_{j=1}^{n}(y_j-a_i-(\bm{X}_j-\bm{x}_i)'\bm{b}_i)^2K\left(\frac{\bm{X}_j-\bm{x}_i}{\bm{h}}\right)\\
	& \mbox{subject to}
	& & a_i-a_l\leq \bm{b}_i'(\bm{x}_i-\bm{x}_l), \; & i,l=1,\ldots,m
	\end{aligned}
	\end{equation*}	
    Furthermore, $\hat{g}_{n}^\Lambda$, the SCKLS estimator using only a set of  concavity constraints is defined as
	\begin{equation*}
	\begin{aligned}
	& \min_{a_i,\bm{b_i}}
	& & \sum_{i=1}^{m}\sum_{j=1}^{n}(y_j-a_i-(\bm{X}_j-\bm{x}_i)'\bm{b}_i)^2K\left(\frac{\bm{X}_j-\bm{x}_i}{\bm{h}}\right)\\
	& \mbox{subject to}
	& & a_i-a_l\geq \bm{b}_i'(\bm{x}_i-\bm{x}_l), \; & i,l=1,\ldots,m
	\end{aligned}
	\end{equation*}	
	
	We now describe our testing procedure as follows.  
	\begin{enumerate}
		\item	First, we run linear regression on the response against the covariates and call the least squares fit $g_{n}^L$. Next, we fit the data using SCKLS (with evaluation points at $\bm{x}_1,\ldots,\bm{x}_m$ and bandwidth $\bm{h}_n$). The resulting estimators are denoted by $\hat{g}_{n}^\mathrm{V}$  and  $\hat{g}_{n}^\Lambda$,  where $\hat{g}_{n}^\mathrm{V}$ is the SCKLS estimator using only a set of convexity constraints, while   $\hat{g}_{n}^\Lambda$ is the SCKLS estimator using only a set of  concavity constraints, all based on $\{\bm{X}_j,y_j\}_{j=1}^n$.
		We then define the test statistics to be 
	\[
	T_n = \max\bigg[\frac{1}{m}\sum_{i=1}^m \{\hat{g}_{n}^\mathrm{V}(\bm{x}_i)-g_{n}^L(\bm{x}_i)\}^2,\frac{1}{m}\sum_{i=1}^m \{\hat{g}_{n}^\Lambda(\bm{x}_i)-g_{n}^L(\bm{x}_i)\}^2\bigg].
	\]

	\item We simulate the distributional behavior of the test statistics $B$ times under $H_0$. For $k = 1,\ldots, B$, we set the observations to be $\{\bm{X}_j,y_{jk}\}_{j=1}^n$ (i.e. no change in the values of the covariates), where $\bm{y}_{nk} = (y_{1k},\ldots,y_{nk})'$ is drawn using the wild bootstrap procedure as described in Section~\ref{sec:5.1test} (or the ordinary bootstrap procedure if we know that the errors are homogeneous). Then we run linear regression on $\bm{y}_{nk}$  against the covariates and denote the least squares fit by $g_{nk}^L$. Fitting the data using SCKLS (with the same set of evaluation points and the same bandwidth as before) leads to the resulting estimators $\hat{g}_{nk}^\mathrm{V}$  and  $\hat{g}_{nk}^\Lambda$,  where $\hat{g}_{nk}^\mathrm{V}$ is the SCKLS estimator using only the convexity constraint, while   $\hat{g}_{nk}^\Lambda$ is the SCKLS estimator using only the concavity constraint, all based on $\{\bm{X}_j,y_{jk}\}_{j=1}^n$. So
	\[
		T_{nk} = \max\bigg[\frac{1}{m}\sum_{i=1}^m \{\hat{g}_{nk}^\mathrm{V}(\bm{x}_i)-g_{nk}^L(\bm{x}_i)\}^2,\frac{1}{m}\sum_{i=1}^m \{\hat{g}_{nk}^\Lambda(\bm{x}_i)-g_{nk}^L(\bm{x}_i)\}^2\bigg].
	\]

	\item The Monte Carlo $p$-value is defined as 
\[
p_n = \frac{1}{B} \sum_{k=1}^B \mathbf{1}_{\{T_n \le T_{nk} \}}.
\]
For a test of size $\alpha \in (0,1)$, we reject $H_0$ if $p_n < \alpha$. 
	\end{enumerate}
	
	The intuition of the test is as follows. First, an affine function is both convex and concave. Therefore under $H_0$, both SCKLS estimates, $\hat{g}_{n}^\mathrm{V}$  and  $\hat{g}_{n}^\Lambda$, should be close to the linear fit $g_{n}^L$, so the value of $T_n$ should be small. Second, a function is both convex and concave only if it is affine. So given enough observations, we should be able to reject the null hypothesis under $H_1$. Third, we used the fact that $T_n$ based on $\{\bm{X}_j,y_j\}_{j=1}^n$ and $ \{\bm{X}_j,\epsilon_j\}_{j=1}^n$ are exactly the same under $H_0$ when simulating the distributional behavior of $T_n$.  
	
	Finally, we remark that in case we know that $g_0$ is monotonically increasing a priori, we could test $H_0':$ $g_0$ is monotonically increasing and affine using essentially the same procedure with only minor modifications described in the following: we instead run linear regression with signed constraints in both Step 1 and Step 2, replace $\hat{g}_{n}^\mathrm{V}$ by the SCKLS with both the convexity and monotonicity constraints, and replace $\hat{g}_{n}^\Lambda$ by the SCKLS with both the concavity and monotonicity constraints.

	\subsection{A simulation study}
	
	 We now examine the finite-sample performance of the affinity test using data generated from the following DGP:
    \begin{equation}
	    g_0(\bm{x}) = \frac{1}{d}\sum_{k=1}^{d}x_k^p
    \end{equation}
     where $\bm{x} = (x_1,\ldots,x_d)'$. With $n$ observations, for each pair $(\bm{X}_j,y_j)$, each component of the input, $\bm{X}_{jk}$, is randomly and independently drawn from uniform distribution $unif[0,1]$, and the additive noise, $\epsilon_j$, is randomly and independently sampled from a normal distribution, $N(0,0.1)$. 
    
    We considered different sample sizes $n \in \{100,300,500\}$ and vary the number of inputs $d \in \{1,2\}$, and perform 100 simulations to compute the rejection rate for each scenario. We used the 
    ordinary bootstrap method with $B=500$.
    
      In the scenarios we considered $g_0$ is affine if $p=1.0$, and is non-linear if $p \in \{0.2, 0.5,2,5\}$. Table~\ref{tab:App:test2} show the rejection rate for each scenario with one-input and two-input at $\alpha = 0.05$.  We conclude that the proposed test works well with a moderate sample size. 	 
 	 \begin{table}[htbp]
 	 	\centering
 	 	\caption{Rejection rate of the affinity test using SCKLS at $\alpha = 0.05$}
 	 	\begin{tabular}{cc|c|c}
 	 		\toprule
 	 		\multirow{2}[0]{*}{Sample size ($n$)} & \multicolumn{1}{c|}{\multirow{2}[0]{*}{Shape Parameter ($p$)}} & \multicolumn{2}{c}{Power of the Test} \\
 	 		&       &$d=1$ &$d=2$ \\
 	 		\midrule
 	 		\multirow{5}[0]{*}{100} 
 	 		& 0.2   & 0.99    & 0.74  \\
 	 		& 0.5   & 0.97    & 0.79  \\
 	 		& 1.0   & 0.05    & 0.02  \\
 	 		& 2.0   & 1.00    & 1.00  \\
 	 		& 5.0   & 1.00    & 1.00  \\   		
 	 		\midrule
 	 		\multirow{5}[0]{*}{300} 
 	 		& 0.2   & 1.00  &  1.00   \\
 	 		& 0.5   & 1.00  & 0.99  \\
 	 		& 1.0   & 0.05  &  0.01  \\
 	 		& 2.0   & 1.00  &  1.00   \\
 	 		& 5.0   & 1.00  &  1.00   \\
 	 		\midrule
 	 		\multirow{5}[0]{*}{500}
 	 		& 0.2   & 1.00    & 1.00  \\
 	 		& 0.5   & 1.00    & 1.00   \\
 	 		& 1.0   & 0.08   & 0.01   \\
 	 		& 2.0   & 1.00    & 1.00  \\
 	 		& 5.0   & 1.00    & 1.00   \\
 	 		\bottomrule
 	 	\end{tabular}%
 	 	\label{tab:App:test2}%
 	 \end{table}%
	
	
	

	\section{An algorithm for SCKLS computational performance} \label{App:AppendixD}	
	
	For a given number of evaluation points, $m$, SCKLS requires $m(m-1)$ concavity constraints. Larger values of $m$ provide a more flexible functional estimate, but also increase the number of constraints quadratically, thus, the amount of time needed to solve the quadratic program also increases quadratically. Since one can select the number of evaluation points in SCKLS,  by selecting $m$ the computational complexity can be potentially reduced relative to CNLS or estimates on denser grids, i.e. with $m(m-1)\ll n(n-1)$. 
	
	Further, \cite{dantzig1954solution,dantzig1959linear} proposed an iterative approach that reduces the size of large-scale problems by relaxing a subset of the constraints and solving the relaxed model with only a subset $V$ of constraints, checking which of the excluded constraints are violated, and iteratively adding violated constraints to the relaxed model until an optimal solution satisfies all constraints. \cite{lee2013more}, who applied the approach to CNLS, found a significant reduction in computational time. Computational performances also improves if a subset of the constraints can be identified which are likely to be needed in the model. \cite{lee2013more} find the concavity constraints corresponding to pairs of observations that are close in terms of the $\ell_2$ norm measured over input vectors and more likely to be binding than those corresponding to the distant observations. We use this insight to develop a strategy for identifying constraints to include in the initial subset $V$, when solving SCKLS as described below.
	
	Given a grid to evaluate the constraints of the SCKLS estimator, we define the initial subset of constraints $V$ as those constraints constructed by adjacent grid points as shown in Figure~\ref{fig:A2.grid}. Further, we summarize our implementation of the algorithm proposed in \cite{lee2013more} below and label it as Algorithm 1.

	\begin{figure}[ht]
		\begin{center}
			\includegraphics[width=3in]{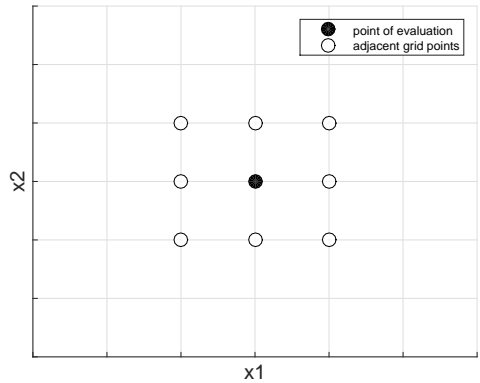}
		\end{center}
		\caption{Definition of adjacent grid in two-dimensional case. \label{fig:A2.grid}}
	\end{figure}

	\begin{algorithm}[ht]   
		\footnotesize          
		\caption{Iterative approach for SCKLS computational speedup}         
		\label{alg1}                          
		\begin{algorithmic}                  
			\STATE $t \Leftarrow 0$
			\STATE $V \Leftarrow \{(i,l):\bm{x}_i \mbox{ and } \bm{x}_l \mbox{ are adjacent, }i<l\}$
			\STATE 	Solve relaxed SCKLS with $V$ to find initial solution $\{a_i^{(0)},\bm{b}_i^{(0)}\}_{i=1}^m$
			\WHILE{$\{a_i^{(t)},\bm{b}_i^{(t)}\}_{i=1}^m$ satisfies all constraints in (\ref{eq:6.SCKLS})}
			\STATE $t \Leftarrow t+1$
			\STATE $U \Leftarrow \{(i,l): \bm{x_i} \mbox{ and } \bm{x}_l \mbox{ do not satisfy constraints in (\ref{eq:6.SCKLS})\}}$
			\STATE $V \Leftarrow V \cup U$
			\STATE Solve relaxed SCKLS with $V$ to find solution $\{a_i^{(t)},\bm{b}_i^{(t)}\}_{i=1}^m$
			\ENDWHILE
			\RETURN $\{a_i^{(t)},\bm{b}_i^{(t)}\}_{i=1}^m$
		\end{algorithmic}
	\end{algorithm}

	\clearpage
	\section{Comprehensive results of existing and additional numerical experiments}
	\label{App:CompResults}
	We show the comprehensive results of experiments in Section~\ref{sec:5.simulation} and additional experiments to show the performance of the SCKLS estimator and its extensions. For the CWB estimator, we use the convex optimization solver \texttt{SeDuMi} because \texttt{quadprog} was not able to solve CWB\footnote{For CWB,  \texttt{SeDuMi} provides a better solution than \texttt{quadprog}, while both \texttt{SeDuMi} and \texttt{quadprog} give exactly the same solution for SCKLS.}. 
	
	For CWB estimator, we use a local linear estimator to obtain the weighting matrix $A_j(\bm{x})$ in (\ref{eq:2.def_g}). The first partial derivative of $\hat{g}(\bm{x}|\bm{p})$ is obtained by approximating the derivatives through numerical differentiation $\hat{g}^{(1)}(\bm{x}|\bm{p})=\frac{\hat{g}(\bm{x}+\Delta|\bm{p})-\hat{g}(\bm{x}|\bm{p})}{\Delta}$, where $\Delta$ is a small positive constant\footnote{\cite{du2013nonparametric} proposes to use an analytical derivative for the first partial derivative of $\hat{g}(\bm{x}|\bm{p})$; however, the analytical derivative performs similarly to numerical differentiation as shown in \cite{racine2016local}. We propose two alternative methods to compute the first partial derivative, and compared them in Appendix~\ref{sec:2.3.CWB}.}.
	
	\subsection{Uniform input -- high signal-to-noise ratio (Experiment~\ref{exp:1})} \label{App:AppendixE}	
	
	We compare the following seven estimators: SCKLS with fixed bandwidth, SCKLS with variable bandwidth, CNLS, CWB in $p$-space and CWB in $y$-space, LL, and parametric Cobb--Douglas function estimated via ordinary least squares (OLS). Table~\ref{tab:A4.Exp1obs} and Table~\ref{tab:A5.exp1grd} show the RMSE of Experiment~\ref{exp:1} on observation points and evaluation points respectively.
		
	Table~\ref{tab:A6.exp1time} shows the computational time of Experiment~\ref{exp:1} for each estimator.
	
	\begin{table}[ht]
		\footnotesize
		\centering
		\caption{RMSE on observation points for Experiment~\ref{exp:1}}
		\begin{tabular}{llrrrrr}
			\toprule
			\multicolumn{2}{c}{} & \multicolumn{5}{c}{Average of RMSE on observation points} \\
			\multicolumn{2}{c}{Number of observations} & 100   & 200   & 300   & 400   & 500 \\
			\midrule
			\multirow{7}[0]{*}{2-input} & SCKLS fixed bandwidth & 0.193 & 0.171 & 0.141 & 0.132 & 0.118 \\
			& SCKLS variable bandwidth & \textbf{0.183} & 0.158 & \textbf{0.116} & \textbf{0.118} & \textbf{0.098} \\
			& CNLS  & 0.229 & 0.163 & 0.137 & 0.138 & 0.116 \\
			& CWB in $p$-space & 0.189 & 0.167 & 0.158 & 0.140 & 0.129 \\
			& CWB in $y$-space & 0.205 & \textbf{0.136} & 0.173 & 0.141 & 0.120 \\
			& LL & 0.212 & 0.166 & 0.149 & 0.152 & 0.140 \\
			\cmidrule{2-7}
			& Cobb--Douglas  & 0.078 & 0.075 & 0.048 & 0.039 & 0.043 \\
			\midrule
			\multirow{7}[0]{*}{3-input} & SCKLS fixed bandwidth & 0.230 & 0.187 & 0.183 & 0.152 & 0.165 \\
			& SCKLS variable bandwidth & 0.216 & \textbf{0.183} & \textbf{0.175} & \textbf{0.143} & \textbf{0.142} \\
			& CNLS  & 0.294 & 0.202 & 0.189 & 0.173 & 0.168 \\
			& CWB in $p$-space & 0.228 & 0.221 & 0.210 & 0.183 & 0.172 \\
			& CWB in $y$-space & \textbf{0.209} & 0.362 & 0.218 & 0.154 & 0.160 \\
			& LL & 0.250 & 0.230 & 0.235 & 0.203 & 0.181 \\
			\cmidrule{2-7}
			& Cobb--Douglas  & 0.104 & 0.089 & 0.070 & 0.047 & 0.041 \\
			\midrule
			\multirow{7}[0]{*}{4-input} & SCKLS fixed bandwidth & 0.225 & 0.248 & 0.228 & 0.203 & 0.198 \\
			& SCKLS variable bandwidth & \textbf{0.217} & \textbf{0.219} & \textbf{0.210} & \textbf{0.180} & \textbf{0.179} \\
			& CNLS  & 0.315 & 0.294 & 0.246 & 0.235 & 0.214 \\
			& CWB in $p$-space & 0.238 & 0.262 & 0.231 & 0.234 & 0.198 \\
			& CWB in $y$-space & 0.222 & 0.240 & 0.248 & 0.303 & 0.332 \\
			& LL & 0.256 & 0.297 & 0.252 & 0.240 & 0.226 \\
			\cmidrule{2-7}
			& Cobb--Douglas  & 0.120 & 0.073 & 0.091 & 0.067 & 0.063 \\
			\bottomrule
		\end{tabular}%
		\label{tab:A4.Exp1obs}%
	\end{table}%
	
	\begin{table}[ht]
		\footnotesize
		\centering
		\caption{RMSE on evaluation points for Experiment~\ref{exp:1}}
		\begin{tabular}{llrrrrr}
			\toprule
			\multicolumn{2}{c}{} & \multicolumn{5}{c}{Average of RMSE on evaluation points} \\
			\multicolumn{2}{c}{Number of observations} & 100   & 200   & 300   & 400   & 500 \\
			\midrule
			\multirow{7}[0]{*}{2-input} & SCKLS fixed bandwidth & 0.219 & 0.189 & 0.150 & 0.147 & 0.128 \\
			& SCKLS variable bandwidth & 0.212 & \textbf{0.176} & \textbf{0.125} & \textbf{0.132} & \textbf{0.103} \\
			& CNLS  & 0.350 & 0.299 & 0.260 & 0.284 & 0.265 \\
			& CWB in $p$-space & \textbf{0.206} & 0.186 & 0.174 & 0.154 & 0.143 \\
			& CWB in $y$-space & 0.259 & 0.228 & 0.228 & 0.172 & 0.167 \\
			& LL & 0.247 & 0.182 & 0.167 & 0.171 & 0.156 \\
			\cmidrule{2-7}
			& Cobb--Douglas  & 0.076 & 0.076 & 0.049 & 0.040 & 0.043 \\
			\midrule
			\multirow{7}[0]{*}{3-input} & SCKLS fixed bandwidth & \textbf{0.283} & \textbf{0.231} & 0.238 & 0.213 & 0.215 \\
			& SCKLS variable bandwidth & 0.292 & 0.237 & \textbf{0.235} & \textbf{0.196} & \textbf{0.187} \\
			& CNLS  & 0.529 & 0.587 & 0.540 & 0.589 & 0.598 \\
			& CWB in $p$-space & 0.291 & 0.289 & 0.269 & 0.252 & 0.233 \\
			& CWB in $y$-space & 0.314 & 0.474 & 0.265 & 0.346 & 0.261 \\
			& LL & 0.336 & 0.340 & 0.360 & 0.326 & 0.264 \\
			\cmidrule{2-7}
			& Cobb--Douglas  & 0.116 & 0.098 & 0.080 & 0.052 & 0.046 \\
			\midrule
			\multirow{7}[0]{*}{4-input} & SCKLS fixed bandwidth & \textbf{0.321} & 0.357 & \textbf{0.329} & \textbf{0.308} & \textbf{0.290} \\
			& SCKLS variable bandwidth & 0.378 & \textbf{0.348} & 0.363 & 0.320 & 0.301 \\
			& CNLS  & 0.845 & 0.873 & 0.901 & 0.827 & 0.792 \\
			& CWB in $p$-space & 0.360 & 0.385 & 0.358 & 0.361 & 0.325 \\
			& CWB in $y$-space & 0.355 & 0.470 & 0.338 & 0.410 & 0.602 \\
			& LL & 0.482 & 0.527 & 0.483 & 0.495 & 0.445 \\
			\cmidrule{2-7}
			& Cobb--Douglas  & 0.146 & 0.091 & 0.115 & 0.081 & 0.080 \\
			\bottomrule
		\end{tabular}%
		\label{tab:A5.exp1grd}%
	\end{table}%

	\begin{table}[ht]
		\footnotesize
		\centering
		\caption{Computational time for Experiment~\ref{exp:1}}
		\begin{tabular}{llrrrrr}
			\toprule
			\multicolumn{2}{c}{} & \multicolumn{5}{c}{\shortstack{Average of computational time in seconds;\\(percentage of Afriat constraints included\\in the final optimization problem)}} \\
			\multicolumn{2}{c}{Number of observations} & 100   & 200   & 300   & 400   & 500 \\
			\midrule
			\multirow{10}[0]{*}{2-input} & \multirow{2}[0]{*}{SCKLS fixed bandwidth} & 14.1  & 13.3  & 42.2  & 34.7  & 77.4 \\
			&       & \multicolumn{1}{c}{(6.14\%)} & \multicolumn{1}{c}{(5.28\%)} & \multicolumn{1}{c}{(8.86\%)} & \multicolumn{1}{c}{(7.80\%)} & \multicolumn{1}{c}{(8.31\%)} \\
			& \multirow{2}[0]{*}{SCKLS variable bandwidth} & 16.4  & 33.9  & 27.6  & 36.0  & \textbf{50.6} \\
			&       & \multicolumn{1}{c}{(3.47\%)} & \multicolumn{1}{c}{(3.44\%)} & \multicolumn{1}{c}{(3.34\%)} & \multicolumn{1}{c}{(3.22\%)} & \multicolumn{1}{c}{(3.53\%)} \\
			& \multirow{2}[0]{*}{CNLS} & \textbf{2.0} & \textbf{6.1} & \textbf{16.5} & \textbf{26.5} & 55.3 \\
			&       & \multicolumn{1}{c}{(100\%)} & \multicolumn{1}{c}{(100\%)} & \multicolumn{1}{c}{(100\%)} & \multicolumn{1}{c}{(100\%)} & \multicolumn{1}{c}{(100\%)} \\
			& \multirow{2}[0]{*}{CWB in $p$-space} & 24.1  & 33.2  & 76.6  & 82.3  & 130 \\
			&       & \multicolumn{1}{c}{(2.39\%)} & \multicolumn{1}{c}{(2.35\%)} & \multicolumn{1}{c}{(2.35\%)} & \multicolumn{1}{c}{(2.35\%)} & \multicolumn{1}{c}{(2.35\%)} \\
			& \multirow{2}[0]{*}{CWB in $y$-space} & 39.3  & 92.7  & 111   & 190   & 233 \\
			&       & \multicolumn{1}{c}{(2.35\%)} & \multicolumn{1}{c}{(2.35\%)} & \multicolumn{1}{c}{(2.35\%)} & \multicolumn{1}{c}{(2.35\%)} & \multicolumn{1}{c}{(2.36\%)} \\
			\midrule
			\multirow{10}[0]{*}{3-input} & \multirow{2}[0]{*}{SCKLS fixed bandwidth} & 26.9  & 40.4  & 45.5  & 67.3  & 136 \\
			&       & \multicolumn{1}{c}{(16.0\%)} & \multicolumn{1}{c}{(16.6\%)} & \multicolumn{1}{c}{(16.3\%)} & \multicolumn{1}{c}{(16.4\%)} & \multicolumn{1}{c}{(16.2\%)} \\
			& \multirow{2}[0]{*}{SCKLS variable bandwidth} & 20.0  & 42.0  & 37.4  & \textbf{47.1} & \textbf{58.2} \\
			&       & \multicolumn{1}{c}{(15.7\%)} & \multicolumn{1}{c}{(15.9\%)} & \multicolumn{1}{c}{(15.8\%)} & \multicolumn{1}{c}{(15.8\%)} & \multicolumn{1}{c}{(15.9\%)} \\
			& \multirow{2}[0]{*}{CNLS} & \textbf{3.8} & \textbf{16.4} & \textbf{37.0} & 82.9  & 161 \\
			&       & \multicolumn{1}{c}{(100\%)} & \multicolumn{1}{c}{(100\%)} & \multicolumn{1}{c}{(100\%)} & \multicolumn{1}{c}{(100\%)} & \multicolumn{1}{c}{(100\%)} \\
			& \multirow{2}[0]{*}{CWB in $p$-space} & 47.6  & 71.5  & 100   & 202   & 255 \\
			&       & \multicolumn{1}{c}{(15.5\%)} & \multicolumn{1}{c}{(15.5\%)} & \multicolumn{1}{c}{(15.5\%)} & \multicolumn{1}{c}{(15.5\%)} & \multicolumn{1}{c}{(15.5\%)} \\
			& \multirow{2}[0]{*}{CWB in $y$-space} & 120   & 357   & 443   & 529   & 424 \\
			&       & \multicolumn{1}{c}{(15.5\%)} & \multicolumn{1}{c}{(15.5\%)} & \multicolumn{1}{c}{(15.5\%)} & \multicolumn{1}{c}{(15.5\%)} & \multicolumn{1}{c}{(15.5\%)} \\
			\midrule
			\multirow{10}[0]{*}{4-input} & \multirow{2}[0]{*}{SCKLS fixed bandwidth} & 47.5  & 71.6  & 77.4  & 166   & 235 \\
			&       & \multicolumn{1}{c}{(40.1\%)} & \multicolumn{1}{c}{(39.9\%)} & \multicolumn{1}{c}{(39.9\%)} & \multicolumn{1}{c}{(40.0\%)} & \multicolumn{1}{c}{(39.8\%)} \\
			& \multirow{2}[0]{*}{SCKLS variable bandwidth} & 26.8  & 45.6  & \textbf{46.8} & \textbf{60.5} & \textbf{74.8} \\
			&       & \multicolumn{1}{c}{(39.9\%)} & \multicolumn{1}{c}{(40.0\%)} & \multicolumn{1}{c}{(39.8\%)} & \multicolumn{1}{c}{(39.9\%)} & \multicolumn{1}{c}{(39.8\%)} \\
			& \multirow{2}[0]{*}{CNLS} & \textbf{5.8} & \textbf{22.4} & 79.1  & 139.8 & 287.8 \\
			&       & \multicolumn{1}{c}{(100\%)} & \multicolumn{1}{c}{(100\%)} & \multicolumn{1}{c}{(100\%)} & \multicolumn{1}{c}{(100\%)} & \multicolumn{1}{c}{(100\%)} \\
			& \multirow{2}[0]{*}{CWB in $p$-space} & 68.8  & 136   & 196   & 327   & 442 \\
			&       & \multicolumn{1}{c}{(39.8\%)} & \multicolumn{1}{c}{(39.8\%)} & \multicolumn{1}{c}{(39.8\%)} & \multicolumn{1}{c}{(39.8\%)} & \multicolumn{1}{c}{(39.8\%)} \\
			& \multirow{2}[0]{*}{CWB in $y$-space} & 91.3  & 175   & 195   & 535   & 545 \\
			&       & \multicolumn{1}{c}{(39.8\%)} & \multicolumn{1}{c}{(39.8\%)} & \multicolumn{1}{c}{(39.8\%)} & \multicolumn{1}{c}{(39.8\%)} & \multicolumn{1}{c}{(39.8\%)} \\
			\bottomrule
		\end{tabular}%
		\label{tab:A6.exp1time}%
	\end{table}%
	
	We also conduct simulations with different bandwidths to analyze the sensitivity of each estimator to bandwidths. We estimate SCKLS with fixed bandwidth, CWB in $p$-space and local linear with bandwidth $h\in[0,10]$ with an increment by 0.01 for 1-input setting, and we use bandwidth $\bm{h}\in[0,5] \times [0,5]$ with an increment by 0.25 for 2-input setting. We perform 100 simulations for each bandwidth, and compute the optimal bandwidth with LOOCV for each simulation. Figure \ref{fig:A1.RMSE} displays the average RMSE of each estimator. The distribution of bandwidths selected by LOOCV are shown in the histogram. The instances when SCKLS, CWB-$p$, and local linear provide the lowest RMSE are shown in light gray, gray and dark gray respectively on the histogram. For one-input scenario, the SCKLS and CWB estimator perform similar for bandwidth between 0.25 - 2.25 as shown by the closeness of the light gray and gray curves in (a). In contrast, for two-input scenario, the SCKLS estimator performs better for most of the LOOCV values as shown by the majority of the histogram colored in light gray. This indicates that LOOCV calculate for unconstrained estimator provide bandwidths that work well for the SCKLS estimator. 
	
	\begin{figure}
		\centering
		\subfloat[One-input]{\includegraphics[width=0.5\textwidth]{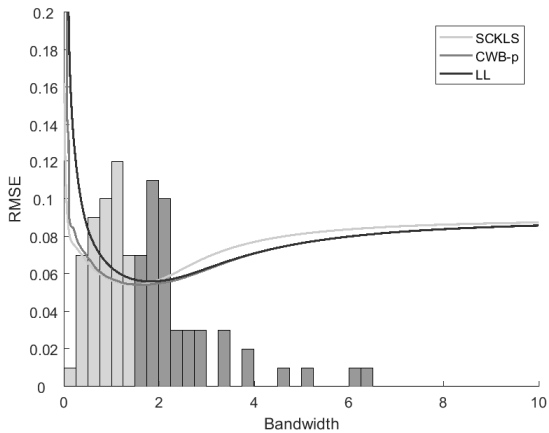}\label{fig:RMSE1wCWB}}
		\hfill
		\subfloat[Two-input]{\includegraphics[width=0.5\textwidth]{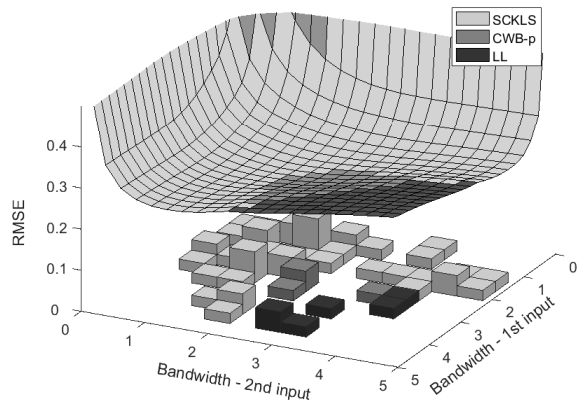}\label{fig:RMSE2wCWB}}
		\caption{The histogram shows the distribution of bandwidths selected by LOOCV. The curves show the relative performance of each estimator.}
		\label{fig:A1.RMSEwCWB}
	\end{figure}

	\clearpage
	\subsection{Uniform input -- low signal-to-noise ratio} \label{App:AppendixF}	
	
		
	We consider a Cobb--Douglas production function with $d$-inputs and one-output, 
	\[
	g_0(x_1,\ldots,x_d)=\prod_{k=1}^{d}x_k^{\frac{0.8}{d}}. 
	\]
	For each pair $(\bm{X}_j,y_j)$, each component of the input, $\bm{X}_{jk}$, is randomly and independently drawn from uniform distribution $unif[1,10]$, and the additive noise, $\epsilon_j$, is randomly and independently sampled from a normal distribution, $N(0,1.3^2)$. We consider 15 different scenarios with different numbers of observations (100, 200, 300, 400 and 500) and input dimension (2, 3 and 4). The number of evaluation points is fixed at 400, and set as a uniform grid.
			This experiment has a higher noise level in the data generation process relative to Experiment~\ref{exp:1}.	
	
	We compare following seven estimators: SCKLS with fixed bandwidth, SCKLS with variable bandwidth, CNLS, CWB in $p$-space, CWB in $y$-space, LL, and parametric Cobb--Douglas function estimated via ordinary least squares (OLS). Table~\ref{tab:A7.exp2obs} and Table~\ref{tab:A8.exp2grd} show the RMSE of this experiment on observation points and evaluation points respectively.
	
	\begin{table}[ht]
		\footnotesize
		\centering
		\caption{RMSE on observation points for Experiment: uniform input with low signal-to-noise ratio}
		\begin{tabular}{llrrrrr}
			\toprule
			\multicolumn{2}{c}{} & \multicolumn{5}{c}{Average of RMSE on observation points} \\
			\multicolumn{2}{c}{Number of observations} & 100   & 200   & 300   & 400   & 500 \\
			\midrule
			\multirow{7}[0]{*}{2-input} & SCKLS fixed bandwidth & \textbf{0.239} & 0.203 & 0.203 & 0.155 & 0.140 \\
			& SCKLS variable bandwidth & 0.240 & \textbf{0.185} & \textbf{0.168} & \textbf{0.139} & \textbf{0.119} \\
			& CNLS  & 0.279 & 0.231 & 0.194 & 0.168 & 0.151 \\
			& CWB in $p$-space & 0.314 & 0.215 & 0.237 & 0.275 & 0.151 \\
			& CWB in $y$-space & 0.241 & 0.229 & 0.173 & 0.178 & 0.206 \\
			& LL & 0.287 & 0.244 & 0.230 & 0.214 & 0.161 \\
			\cmidrule{2-7}
			& Cobb--Douglas  & 0.109 & 0.108 & 0.081 & 0.042 & 0.048 \\
			\midrule
			\multirow{7}[0]{*}{3-input} & SCKLS fixed bandwidth & 0.292 & 0.263 & 0.221 & 0.204 & 0.184 \\
			& SCKLS variable bandwidth & \textbf{0.281} & \textbf{0.242} & \textbf{0.198} & \textbf{0.180} & \textbf{0.175} \\
			& CNLS  & 0.379 & 0.303 & 0.275 & 0.224 & 0.214 \\
			& CWB in $p$-space & 0.318 & 0.306 & 0.308 & 0.244 & 0.214 \\
			& CWB in $y$-space & 0.281 & 0.273 & 0.225 & 0.320 & 0.271 \\
			& LL & 0.333 & 0.306 & 0.288 & 0.259 & 0.214 \\
			\cmidrule{2-7}
			& Cobb--Douglas  & 0.176 & 0.118 & 0.101 & 0.084 & 0.072 \\
			\midrule
			\multirow{7}[0]{*}{4-input} & SCKLS fixed bandwidth & 0.317 & 0.291 & 0.249 & 0.241 & 0.254 \\
			& SCKLS variable bandwidth & \textbf{0.290} & \textbf{0.254} & \textbf{0.236} & \textbf{0.222} & \textbf{0.215} \\
			& CNLS  & 0.491 & 0.356 & 0.311 & 0.293 & 0.313 \\
			& CWB in $p$-space & 0.400 & 0.318 & 0.273 & 0.260 & 0.289 \\
			& CWB in $y$-space & 0.312 & 0.338 & 0.262 & 0.365 & 0.453 \\
			& LL & 0.335 & 0.342 & 0.257 & 0.274 & 0.283 \\
			\cmidrule{2-7}
			& Cobb--Douglas  & 0.157 & 0.150 & 0.112 & 0.075 & 0.077 \\
			\bottomrule
		\end{tabular}%
		\label{tab:A7.exp2obs}%
	\end{table}%
	
	\begin{table}[p]
		\footnotesize
		\centering
		\caption{RMSE on evaluation points for Experiment: uniform input with low signal-to-noise ratio}
		\begin{tabular}{llrrrrr}
			\toprule
			\multicolumn{2}{c}{} & \multicolumn{5}{c}{Average of RMSE on evaluation points} \\
			\multicolumn{2}{c}{Number of observations} & 100   & 200   & 300   & 400   & 500 \\
			\midrule
			\multirow{7}[0]{*}{2-input} & SCKLS fixed bandwidth & \textbf{0.253} & 0.225 & 0.222 & 0.172 & 0.160 \\
			& SCKLS variable bandwidth & 0.255 & \textbf{0.205} & \textbf{0.179} & \textbf{0.149} & \textbf{0.135} \\
			& CNLS  & 0.319 & 0.355 & 0.334 & 0.255 & 0.267 \\
			& CWB in $p$-space & 0.329 & 0.239 & 0.262 & 0.305 & 0.177 \\
			& CWB in $y$-space & 0.263 & 0.241 & 0.198 & 0.228 & 0.180 \\
			& LL & 0.330 & 0.272 & 0.257 & 0.239 & 0.194 \\
			\cmidrule{2-7}
			& Cobb--Douglas  & 0.112 & 0.112 & 0.083 & 0.044 & 0.049 \\
			\midrule
			\multirow{7}[0]{*}{3-input} & SCKLS fixed bandwidth & 0.367 & 0.339 & 0.302 & 0.268 & 0.231 \\
			& SCKLS variable bandwidth & \textbf{0.364} & \textbf{0.303} & \textbf{0.256} & \textbf{0.230} & \textbf{0.224} \\
			& CNLS  & 0.743 & 0.778 & 0.744 & 0.696 & 0.620 \\
			& CWB in $p$-space & 0.398 & 0.392 & 0.434 & 0.336 & 0.274 \\
			& CWB in $y$-space & 0.401 & 0.473 & 0.385 & 0.450 & 0.525 \\
			& LL & 0.452 & 0.444 & 0.438 & 0.398 & 0.302 \\
			\cmidrule{2-7}
			& Cobb--Douglas  & 0.202 & 0.130 & 0.110 & 0.093 & 0.079 \\
			\midrule
			\multirow{7}[0]{*}{4-input} & SCKLS fixed bandwidth & \textbf{0.405} & 0.460 & \textbf{0.349} & \textbf{0.350} & 0.347 \\
			& SCKLS variable bandwidth & 0.419 & \textbf{0.434} & 0.375 & 0.354 & \textbf{0.315} \\
			& CNLS  & 1.019 & 0.950 & 0.985 & 1.043 & 1.106 \\
			& CWB in $p$-space & 0.514 & 0.520 & 0.393 & 0.390 & 0.452 \\
			& CWB in $y$-space & 0.514 & 0.513 & 0.425 & 0.501 & 0.708 \\
			& LL & 0.524 & 0.626 & 0.451 & 0.491 & 0.550 \\
			\cmidrule{2-7}
			& Cobb--Douglas  & 0.187 & 0.194 & 0.134 & 0.092 & 0.091 \\
			\bottomrule
		\end{tabular}%
		\label{tab:A8.exp2grd}%
	\end{table}%
	
	\clearpage
    \subsection{Different numbers of evaluation points (Experiment~\ref{exp:4})} \label{App:AppendixH}	
	
	We compare following four estimators: SCKLS with fixed bandwidth, SCKLS with variable bandwidth, CWB in $p$-space and CWB in $y$-space. Table~\ref{tab:A11.exp4obs} and Table~\ref{tab:A12.exp4grd} show the RMSEs of Experiment~\ref{exp:4} on observation points and evaluation points respectively. In addition, Table~\ref{tab:A13.exp4time} shows the computational time of Experiment~\ref{exp:4} for each estimator.

	\begin{table}[ht]
		\footnotesize
		\centering
		\caption{RMSE on observation points for Experiment~\ref{exp:4}}
		\begin{tabular}{llP{.8in}P{.8in}P{.8in}}
			\toprule
			\multicolumn{2}{c}{} & \multicolumn{3}{c}{Average of RMSE on observation points} \\
			\multicolumn{2}{c}{Number of evaluation points} & 100   & 300   & 500 \\
			\midrule
			\multirow{4}[0]{*}{2-input} & SCKLS fixed bandwidth & 0.142 & 0.141 & 0.141 \\
			& SCKLS variable bandwidth & \textbf{0.113} & \textbf{0.112} & \textbf{0.112} \\
			& CWB in $p$-space & 0.149 & 0.151 & 0.156 \\
			& CWB in $y$-space & 0.225 & 0.122 & 0.129 \\
			\midrule
			\multirow{4}[0]{*}{3-input} & SCKLS fixed bandwidth & 0.198 & 0.203 & 0.197 \\
			& SCKLS variable bandwidth & \textbf{0.169} & \textbf{0.167} & \textbf{0.166} \\
			& CWB in $p$-space & 0.218 & 0.234 & 0.231 \\
			& CWB in $y$-space & 0.345 & 0.241 & 0.222 \\
			\midrule
			\multirow{4}[0]{*}{4-input} & SCKLS fixed bandwidth & 0.239 & 0.207 & 0.206 \\
			& SCKLS variable bandwidth & \textbf{0.195} & \textbf{0.192} & \textbf{0.191} \\
			& CWB in $p$-space & 0.219 & 0.227 & 0.296 \\
			& CWB in $y$-space & 0.466 & 0.290 & 0.292 \\
			\bottomrule
		\end{tabular}%
		\label{tab:A11.exp4obs}%
	\end{table}%
	
	\begin{table}[ht]
		\footnotesize
		\centering
		\caption{RMSE on evaluation points for Experiment~\ref{exp:4}}
		\begin{tabular}{llP{.8in}P{.8in}P{.8in}}
			\toprule
			\multicolumn{2}{c}{} & \multicolumn{3}{c}{Average of RMSE on evaluation points} \\
			\multicolumn{2}{c}{Number of evaluation points} & 100   & 300   & 500 \\
			\midrule
			\multirow{4}[0]{*}{2-input} & SCKLS fixed bandwidth & 0.181 & 0.164 & 0.158 \\
			& SCKLS variable bandwidth & \textbf{0.140} & \textbf{0.128} & \textbf{0.124} \\
			& CWB in $p$-space & 0.195 & 0.180 & 0.179 \\
			& CWB in $y$-space & 0.262 & 0.162 & 0.169 \\
			\midrule
			\multirow{4}[0]{*}{3-input} & SCKLS fixed bandwidth & 0.304 & 0.267 & 0.257 \\
			& SCKLS variable bandwidth & \textbf{0.242} & \textbf{0.213} & \textbf{0.205} \\
			& CWB in $p$-space & 0.332 & 0.329 & 0.302 \\
			& CWB in $y$-space & 0.792 & 0.582 & 0.559 \\
			\midrule
			\multirow{4}[0]{*}{4-input} & SCKLS fixed bandwidth & \textbf{0.383} & \textbf{0.296} & 0.270 \\
			& SCKLS variable bandwidth & 0.386 & 0.304 & \textbf{0.265} \\
			& CWB in $p$-space & 0.403 & 0.359 & 0.415 \\
			& CWB in $y$-space & 1.040 & 0.352 & 0.381 \\
			\bottomrule
		\end{tabular}%
		\label{tab:A12.exp4grd}%
	\end{table}%

	\begin{table}[ht]
		\footnotesize
		\centering
		\caption{Computational time for Experiment~\ref{exp:4}}
		\begin{tabular}{llP{.8in}P{.8in}P{.8in}}
			\toprule
			\multicolumn{2}{c}{} & \multicolumn{3}{c}{\shortstack{Average of computational time in seconds;\\(percentage of Afriat constraints included\\in the final optimization)}} \\
			\multicolumn{2}{c}{Number of evaluation points} & 100   & 300   & 500 \\
			\midrule
			\multirow{8}[0]{*}{2-input} & \multirow{2}[0]{*}{SCKLS fixed bandwidth} & \multicolumn{1}{c}{26.6} & \multicolumn{1}{c}{28.3} & \multicolumn{1}{c}{34} \\
			&       & \multicolumn{1}{c}{(11.7\%)} & \multicolumn{1}{c}{(6.6\%)} & \multicolumn{1}{c}{(5.4\%)} \\
			& \multirow{2}[0]{*}{SCKLS variable bandwidth} & \multicolumn{1}{c}{\textbf{21.3}} & \multicolumn{1}{c}{\textbf{21.6}} & \multicolumn{1}{c}{\textbf{24.9}} \\
			&       & \multicolumn{1}{c}{(9.9\%)} & \multicolumn{1}{c}{(4.4\%)} & \multicolumn{1}{c}{(3.2\%)} \\
			& \multirow{2}[0]{*}{CWB in $p$-space} & \multicolumn{1}{c}{41} & \multicolumn{1}{c}{56.5} & \multicolumn{1}{c}{74.2} \\
			&       & \multicolumn{1}{c}{(8.8\%)} & \multicolumn{1}{c}{(3.2\%)} & \multicolumn{1}{c}{(2.0\%)} \\
			& \multirow{2}[0]{*}{CWB in $y$-space} & \multicolumn{1}{c}{52.8} & \multicolumn{1}{c}{103} & \multicolumn{1}{c}{146} \\
			&       & \multicolumn{1}{c}{(8.8\%)} & \multicolumn{1}{c}{(3.2\%)} & \multicolumn{1}{c}{(2.0\%)} \\
			\midrule
			\multirow{8}[0]{*}{3-input} & \multirow{2}[0]{*}{SCKLS fixed bandwidth} & \multicolumn{1}{c}{84.8} & \multicolumn{1}{c}{112} & \multicolumn{1}{c}{134} \\
			&       & \multicolumn{1}{c}{(29.1\%)} & \multicolumn{1}{c}{(16.7\%)} & \multicolumn{1}{c}{(13.3\%)} \\
			& \multirow{2}[0]{*}{SCKLS variable bandwidth} & \multicolumn{1}{c}{\textbf{21.1}} & \multicolumn{1}{c}{\textbf{37.2}} & \multicolumn{1}{c}{\textbf{59.1}} \\
			&       & \multicolumn{1}{c}{(28.5\%)} & \multicolumn{1}{c}{(15.8\%)} & \multicolumn{1}{c}{(12.4\%)} \\
			& \multirow{2}[0]{*}{CWB in $p$-space} & \multicolumn{1}{c}{121} & \multicolumn{1}{c}{221} & \multicolumn{1}{c}{310} \\
			&       & \multicolumn{1}{c}{(28.2\%)} & \multicolumn{1}{c}{(15.5\%)} & \multicolumn{1}{c}{(12.2\%)} \\
			& \multirow{2}[0]{*}{CWB in $y$-space} & \multicolumn{1}{c}{181} & \multicolumn{1}{c}{625} & \multicolumn{1}{c}{948} \\
			&       & \multicolumn{1}{c}{(28.2\%)} & \multicolumn{1}{c}{(15.5\%)} & \multicolumn{1}{c}{(12.2\%)} \\
			\midrule
			\multirow{8}[0]{*}{4-input} & \multirow{2}[0]{*}{SCKLS fixed bandwidth} & \multicolumn{1}{c}{149} & \multicolumn{1}{c}{170} & \multicolumn{1}{c}{597} \\
			&       & \multicolumn{1}{c}{(62.3\%)} & \multicolumn{1}{c}{(40.0\%)} & \multicolumn{1}{c}{(27.7\%)} \\
			& \multirow{2}[0]{*}{SCKLS variable bandwidth} & \multicolumn{1}{c}{\textbf{24.6}} & \multicolumn{1}{c}{\textbf{52.7}} & \multicolumn{1}{c}{\textbf{468}} \\
			&       & \multicolumn{1}{c}{(62.1\%)} & \multicolumn{1}{c}{(39.9\%)} & \multicolumn{1}{c}{(27.5\%)} \\
			& \multirow{2}[0]{*}{CWB in $p$-space} & \multicolumn{1}{c}{175} & \multicolumn{1}{c}{275} & \multicolumn{1}{c}{729} \\
			&       & \multicolumn{1}{c}{(61.9\%)} & \multicolumn{1}{c}{(39.8\%)} & \multicolumn{1}{c}{(27.4\%)} \\
			& \multirow{2}[0]{*}{CWB in $y$-space} & \multicolumn{1}{c}{189} & \multicolumn{1}{c}{288} & \multicolumn{1}{c}{579} \\
			&       & \multicolumn{1}{c}{(61.9\%)} & \multicolumn{1}{c}{(39.8\%)} & \multicolumn{1}{c}{(27.4\%)} \\
			\bottomrule
		\end{tabular}%
		\label{tab:A13.exp4time}%
	\end{table}%

	\clearpage
	\subsection{Non-uniform input} \label{App:AppendixG}	

	\begin{experiment}
		\label{exp:3}
		We consider a Cobb--Douglas production function with $d$-inputs and one-output, 
		\[
		g_0(x_1,\ldots,x_d)=\prod_{k=1}^{d}x_k^{\frac{0.8}{d}}.
		\]
		For each pair $(\bm{X}_j,y_j)$, each component of the input, $\bm{X}_{jk}$, is randomly and independently drawn from a truncated exponential distribution with density function 
		\[
		f(x)= \frac{3}{e^{-3}-e^{-30}}e^{-3x} \mathbf{1}_{\{x \in [1,10]\}},
		\]
		and the additive noise, $\epsilon_j$, is randomly sampled from a normal distribution, $N(0,0.7^2)$.
		 We consider 15 different scenarios with different numbers of observations (100, 200, 300, 400 and 500) and input dimension (2, 3 and 4). The number of evaluation point is fixed at 400. Note that this experiment only differs from Experiment~\ref{exp:1} in that the distribution of inputs is skewed and thus non-uniform.
	\end{experiment}
	
	We compare following seven estimators: SCKLS with fixed bandwidth with uniform/non-uniform grid, SCKLS with variable bandwidth with uniform/non-uniform grid, CNLS, CWB in $p$-space with uniform/non-uniform grid. These extension of SCKLS were presented in detail in Appendix~\ref{App:Extensions}. Table~\ref{tab:A9.exp3obs} and Table~\ref{tab:A10.exp3grd} show the RMSEs of Experiment~\ref{exp:3} on observation points and evaluation points respectively. A uniform grid is used like in Experiment~\ref{exp:1}. As the dimension of input space and the number of observations increase, SCKLS with variable bandwidth performs better than the fixed bandwidth estimator. SCKLS with non-uniform grid performs better than SCKLS with uniform grid for almost all scenarios, largely due to the fact that the DGP has non-uniform input. Consequently, we conclude that variable bandwidth methods, such as $k$-NN approach, and non-uniform grid could be useful to handle skewed input data which is a common feature of census manufacturing data which is the type of data we considered in the application of the main manuscript.

	\begin{table}[ht]
		\footnotesize
		\centering
		\caption{RMSE on observation points for Experiment: non-uniform input}
		\begin{tabular}{llrrrrr}
			\toprule
			\multicolumn{2}{c}{} & \multicolumn{5}{c}{Average of RMSE on observation points} \\
			\multicolumn{2}{c}{Number of observations} & 100   & 200   & 300   & 400   & 500 \\
			\midrule
			\multirow{7}[0]{*}{2-input} & SCKLS fixed/uniform & 0.179 & 0.151 & 0.144 & 0.121 & 0.108 \\
			& SCKLS fixed/non-uniform & 0.185 & 0.153 & 0.159 & 0.123 & 0.107 \\
			& SCKLS variable/uniform & 0.183 & 0.156 & 0.142 & 0.125 & 0.104 \\
			& SCKLS variable/non-uniform & \textbf{0.176} & \textbf{0.144} & \textbf{0.132} & \textbf{0.114} & \textbf{0.093} \\
			& CNLS  & 0.193 & 0.160 & 0.140 & 0.130 & 0.117 \\
			& CWB $p$-space/uniform & 0.256 & 0.162 & 0.180 & 0.139 & 0.125 \\
			& CWB $p$-space/non-uniform & 0.243 & 0.160 & 0.174 & 0.135 & 0.125 \\
			\midrule
			\multirow{7}[0]{*}{3-input} & SCKLS fixed/uniform & \textbf{0.197} & 0.184 & 0.172 & 0.164 & 0.167 \\
			& SCKLS fixed/non-uniform & 0.200 & 0.181 & 0.173 & 0.161 & 0.172 \\
			& SCKLS variable/uniform & 0.212 & 0.187 & 0.170 & 0.175 & 0.170 \\
			& SCKLS variable/non-uniform & 0.210 & \textbf{0.180} & \textbf{0.162} & \textbf{0.160} & \textbf{0.155} \\
			& CNLS  & 0.303 & 0.246 & 0.201 & 0.185 & 0.166 \\
			& CWB $p$-space/uniform & 0.243 & 0.436 & 0.173 & 0.174 & 0.184 \\
			& CWB $p$-space/non-uniform & 0.233 & 0.194 & 0.176 & 0.165 & 0.173 \\
			\midrule
			\multirow{7}[0]{*}{4-input} & SCKLS fixed/uniform & 0.219 & 0.211 & 0.196 & 0.209 & 0.187 \\
			& SCKLS fixed/non-uniform & 0.210 & 0.206 & 0.181 & 0.197 & 0.180 \\
			& SCKLS variable/uniform & 0.208 & \textbf{0.193} & 0.167 & 0.171 & 0.170 \\
			& SCKLS variable/non-uniform & \textbf{0.206} & 0.193 & \textbf{0.164} & \textbf{0.169} & \textbf{0.168} \\
			& CNLS  & 0.347 & 0.292 & 0.250 & 0.228 & 0.218 \\
			& CWB $p$-space/uniform & 0.219 & 0.205 & 0.205 & 0.184 & 0.218 \\
			& CWB $p$-space/non-uniform & 0.221 & 0.205 & 0.182 & 0.170 & 0.170 \\
			\bottomrule
		\end{tabular}%
		\label{tab:A9.exp3obs}%
	\end{table}%

	\begin{table}[ht]
		\footnotesize
		\centering
		\caption{RMSE on evaluation points for Experiment: non-uniform input}
		\begin{tabular}{llrrrrr}
			\toprule
			\multicolumn{2}{c}{} & \multicolumn{5}{c}{Average of RMSE on evaluation points} \\
			\multicolumn{2}{c}{Number of observations} & 100   & 200   & 300   & 400   & 500 \\
			\midrule
			\multirow{7}[0]{*}{2-input} & SCKLS fixed/uniform & 0.262 & 0.220 & 0.244 & 0.157 & 0.196 \\
			& SCKLS fixed/non-uniform & 0.212 & 0.174 & 0.195 & 0.138 & 0.131 \\
			& SCKLS variable/uniform & 0.246 & 0.204 & 0.192 & 0.142 & 0.136 \\
			& SCKLS variable/non-uniform & \textbf{0.193} & \textbf{0.160} & \textbf{0.145} & \textbf{0.120} & \textbf{0.100} \\
			& CNLS  & 0.435 & 0.402 & 0.404 & 0.379 & 0.381 \\
			& CWB $p$-space/uniform & 0.422 & 0.287 & 0.376 & 0.246 & 0.264 \\
			& CWB $p$-space/non-uniform & 0.283 & 0.186 & 0.215 & 0.159 & 0.162 \\
			\midrule
			\multirow{7}[0]{*}{3-input} & SCKLS fixed/uniform & 0.323 & 0.308 & 0.311 & 0.286 & 0.293 \\
			& SCKLS fixed/non-uniform & \textbf{0.268} & 0.254 & 0.259 & 0.235 & 0.249 \\
			& SCKLS variable/uniform & 0.335 & 0.303 & 0.281 & 0.262 & 0.254 \\
			& SCKLS variable/non-uniform & 0.278 & \textbf{0.243} & \textbf{0.219} & \textbf{0.212} & \textbf{0.196} \\
			& CNLS  & 0.828 & 0.824 & 0.828 & 0.786 & 0.782 \\
			& CWB $p$-space/uniform & 0.438 & 0.684 & 0.357 & 0.363 & 0.350 \\
			& CWB $p$-space/non-uniform & 0.315 & 0.265 & 0.257 & 0.235 & 0.242 \\
			\midrule
			\multirow{7}[0]{*}{4-input} & SCKLS fixed/uniform & 0.406 & 0.398 & 0.397 & 0.404 & 0.400 \\
			& SCKLS fixed/non-uniform & \textbf{0.339} & \textbf{0.343} & 0.333 & 0.371 & 0.331 \\
			& SCKLS variable/uniform & 0.417 & 0.423 & 0.368 & 0.364 & 0.356 \\
			& SCKLS variable/non-uniform & 0.359 & 0.359 & 0.313 & 0.302 & 0.280 \\
			& CNLS  & 1.129 & 1.107 & 1.220 & 1.196 & 1.223 \\
			& CWB $p$-space/uniform & 0.421 & 0.442 & 0.435 & 0.418 & 0.487 \\
			& CWB $p$-space/non-uniform & 0.354 & 0.344 & \textbf{0.308} & \textbf{0.286} & \textbf{0.280} \\
			\bottomrule
		\end{tabular}%
		\label{tab:A10.exp3grd}%
	\end{table}%

	\clearpage

	\subsection{Estimation with a misspecified shape} \label{App:AppendixJ2}	
	We use the DGP proposed by \cite{olesen2014maintaining} that is consistent with the regular ultra passum law \citep{frisch1964theory}, which appears to have an ``S''-shape.
	\begin{equation*}
	g_0(x_1,x_2) = F(h(x_1,x_2))
	\end{equation*}	
	where the scaling function is: $F(w)=\frac{15}{1+e^{-5\log (w)}}$, and the linear homogeneous core function is
	\[
	h(x_1,x_2)=\left(\beta x_1^{\frac{\sigma-1}{\sigma}}+(1-\beta)x_2^{\frac{\sigma-1}{\sigma}}\right)^{\frac{\sigma}{\sigma-1}}
	\] 
	with $\beta =0.45$ and $\sigma =1.51$.
	For $j=1,\ldots,n$, input, $\bm{X}_j=(X_{j1},X_{j2})'$, is generated in polar coordinates with angles $\eta$ and modulus $\omega$ independently uniformly distributed on $[0.05,{\pi}/{2}-0.05]$ and $[0,2.5]$, respectively. The additive noise, $\epsilon_j$, is randomly sampled from $N(0,0.7^2)$.
	
	Note that this DGP is not concave. Here we run this experiment to assess the performance of each estimator in case of shape misspecification. 
	Table~\ref{tab:S-shapeObs} and Table~\ref{tab:S-shapeGrd} show the RMSEs of this experiment on observation points and evaluation points. 
	Figure~\ref{fig:S-shape1d} shows the estimation results with 1-input S-shape function from a typical run of SCKLS. The figure shows that the SCKLS estimator results in a linear estimates for areas where concavity is violated. Here the CWB estimator performs slightly worse when the function is misspecified.
	We speculate that the main reason for this is that the optimization problem becomes too complicated to solve since intuitively there are many binding constraints when the data is generated by the misspecified functional form, and thus, it becomes hard for the solver to find a feasible solution and an improving direction.
	
	\begin{table}[!htbp]
		\footnotesize
		\centering
		\caption{RMSE on observation points for Experiment: misspecified shape}
		\begin{tabular}{lrrrrr}
			\toprule
			& \multicolumn{5}{c}{Average of RMSE on observation points} \\
			Number of observations & 100   & 200   & 300   & 400   & 500 \\
			\midrule
			SCKLS fixed bandwidth & 1.424 & 1.435 & 1.405 & 1.392 & 1.421 \\
			CNLS  & \textbf{1.326} & \textbf{1.346} & \textbf{1.337} & \textbf{1.316} & \textbf{1.353} \\
			CWB in $p$-space & 6.310 & 6.731 & 6.602 & 5.909 & 6.110 \\
			\bottomrule
		\end{tabular}%
		\label{tab:S-shapeObs}%
	\end{table}%
	
	\begin{table}[!htbp]
		\footnotesize
		\centering
		\caption{RMSE on evaluation points for Experiment: misspecified shape}
		\begin{tabular}{lrrrrr}
			\toprule
			& \multicolumn{5}{c}{Average of RMSE on evaluation points} \\
			Number of observations & 100   & 200   & 300   & 400   & 500 \\
			\midrule
			SCKLS fixed bandwidth & \textbf{1.337} & \textbf{1.162} & \textbf{1.149} & \textbf{1.140} & \textbf{1.123} \\
			CNLS  & 1.375 & 1.424 & 1.404 & 1.403 & 1.385 \\
			CWB in $p$-space & 9.100 & 9.483 & 9.599 & 8.435 & 8.719 \\
			\bottomrule
		\end{tabular}%
		\label{tab:S-shapeGrd}%
	\end{table}%
	\begin{figure}[!htbp]
		\begin{center}
			\includegraphics[width=3.5in]{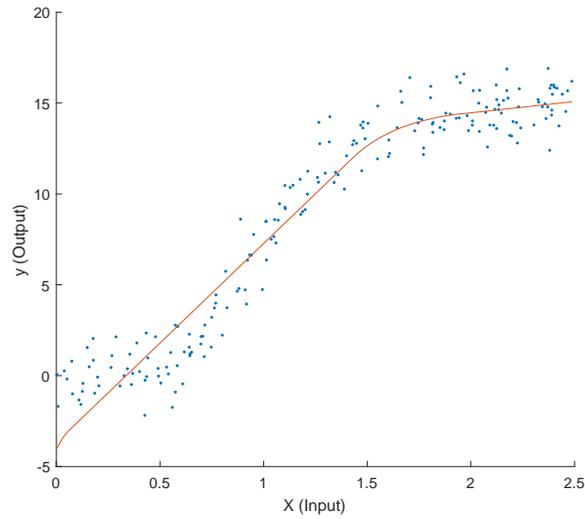}
		\end{center}
		\caption{A typical run of SCKLS when the truth is S-shaped. \label{fig:S-shape1d}}
	\end{figure}

	\clearpage
	\section{Semiparametric partially linear model} \label{App:AppendixI}	
	
	\subsection{The procedure}
	
	We develop a semiparametric partially linear model including the SCKLS estimator and a linear function of contextual variables. The partially linear model is often used in practice. The model estimated is represented as follows:
	\begin{equation*}
	y_j=\bm{Z}_j'\bm{\gamma}+g_0(\bm{X}_j)+\epsilon_j
	\end{equation*}
	where $\bm{Z}_j=(Z_{j1},Z_{j2},\ldots,Z_{jl})'$ denotes contextual variables and $\bm{\gamma}=(\gamma_1,\gamma_2,\ldots,\gamma_l)'$ is the coefficient of contextual variables, see \cite{johnson2011one,johnson2012one}. Then, we estimate the coefficient of contextual variable:
	\begin{equation*}
	\bm{\hat{\gamma}}=\left(\sum_{j=1}^{n}\tilde{\bm{Z}}_j\tilde{\bm{Z}}_j'\right)^{-1}\left(\sum_{j=1}^{n}\tilde{\bm{Z}}_j\tilde{y}_j\right)
	\end{equation*}
	where $\tilde{\bm{Z}}_j=\bm{Z}_j-\hat{E}[\bm{Z}_j|\bm{X}_j]$ and $\tilde{y}_j=y_j-\hat{E}[y_j|\bm{X}_j]$ respectively, and each conditional expectation is estimated by kernel estimation method such as local linear. Finally, we apply the SCKLS estimator to the data $\{\bm{X}_j,y_j-\bm{Z}_j'\hat{\bm{\gamma}}\}_{j=1}^n$. \cite{robinson1988root} proved that $\bm{\hat{\gamma}}$ is $n^{1/2}$-consistent for $\bm{\gamma}$ and asymptotically normal under regularity conditions. For details of the partially linear model, see \cite{li2007nonparametric}.

	\subsection{A simulation study}

    We show the effect of adding contextual variables $\bm{Z}_j$ to the estimation performance by comparing SCKLS with and without contextual variables. We use two different Cobb--Douglas production functions as the true DGP:
	\begin{equation}
	\label{eq:modelwithZ}
	g_0(\bm{x},z)=\prod_{k=1}^{d}x_k^{\frac{0.8}{d}}+z\gamma,
	\end{equation}
	
    \begin{equation}
	\label{eq:modelwithoutZ}
	g_0(\bm{x})=\prod_{k=1}^{d}x_k^{\frac{0.8}{d}},
	\end{equation}
	\noindent where for each $(\bm{X}_j, Z_j, y_j)$, the contextual variable $Z_j$ is a scalar value independent of $\bm{X}_j$ drawn randomly and independently from $uinf[0,1]$, the coefficient of the contextual variable $\gamma=5$, and other parameters follow DGP from Experiment \ref{exp:1}. We apply SCKLS with and without contextual variables to the data generated by the true production function (\ref{eq:modelwithZ}) and (\ref{eq:modelwithoutZ}), respectively.
	
	Table~\ref{tab:withZobs} and Table~\ref{tab:withZgrd} show the RMSEs of this experiment on observation points and evaluation points respectively. The RMSE is obtained by comparing estimates of production function and the true production function. We see that having extra contextual variables does not deteriorate the performance of SCKLS significantly, especially when the  input dimension is small and the number of observations is large. Our findings are consistent with the work of \cite{robinson1988root}. Since our application data in Section~\ref{sec:6.application} has only two-input, we expect that SCKLS with $Z$-variables tends not to deteriorate the estimator performance in our application.

	\begin{table}[htbp]
		\centering
		\caption{RMSE on observation points for experiments with/without $Z$-variable}
		\begin{tabular}{rlrrrrr}
			\toprule
			\multicolumn{2}{c}{} & \multicolumn{5}{c}{Average of RMSE on observation points} \\
			\multicolumn{2}{c}{Number of observations} & 100   & 200   & 300   & 400   & 500 \\
			\midrule
			\multicolumn{1}{l}{2-input} & SCKLS-Z & 0.224 & 0.212 & 0.239 & 0.160 & 0.146 \\
			& SCKLS & 0.210 & 0.188 & 0.170 & 0.139 & 0.140 \\
			\midrule
			\multicolumn{1}{l}{3-input} & SCKLS-Z & 0.404 & 0.235 & 0.261 & 0.197 & 0.196 \\
			& SCKLS & {0.242} & {0.206} & {0.215} & {0.202} & {0.188} \\
			\midrule
			\multicolumn{1}{l}{4-input} & SCKLS-Z & 0.462 & 0.376 & 0.332 & 0.217 & 0.239 \\
			& SCKLS & {0.247} & {0.231} & {0.202} & {0.202} & {0.198} \\
			\bottomrule
		\end{tabular}%
		\label{tab:withZobs}%
	\end{table}%

	\begin{table}[htbp]
		\centering
		\caption{RMSE on evaluation points for experiments with/without $Z$-variable}
		\begin{tabular}{rlrrrrr}
			\toprule
			\multicolumn{2}{c}{} & \multicolumn{5}{c}{Average of RMSE on evaluation points} \\
			\multicolumn{2}{c}{Number of observations} & 100   & 200   & 300   & 400   & 500 \\
			\midrule
			\multicolumn{1}{l}{2-input} & SCKLS-Z & 0.245 & 0.234 & 0.256 & 0.172 & 0.166 \\
			& SCKLS & {0.230} & {0.205} & {0.194} & {0.154} & {0.157} \\
			\midrule
			\multicolumn{1}{l}{3-input} & SCKLS-Z & 0.496 & 0.348 & 0.377 & 0.271 & 0.286 \\
			& SCKLS & {0.316} & {0.296} & {0.309} & {0.271} & {0.261} \\
			\midrule
			\multicolumn{1}{l}{4-input} & SCKLS-Z & 0.648 & 0.599 & 0.498 & 0.397 & 0.435 \\
			& SCKLS & {0.385} & {0.381} & {0.341} & {0.350} & {0.336} \\
			\bottomrule
		\end{tabular}%
		\label{tab:withZgrd}%
	\end{table}%

	
	\section{Details on  the application to the Chilean manufacturing data}
	\label{App:application}
    In section~\ref{sec:6.application}, we applied the SCKLS estimator to the Chilean manufacturing data to estimate a production function for plastic (2520) and wood (2010) industries. Here we provide the detailed specification of the SCKLS estimator applied to the real data. Since the application data is skewed as shown in Table \ref{tab:11.sumstat}, we use non-uniform grid of evaluation points and limit evaluation points to be inside the convex hull of $\{\bm{X}_j\}_{j=1}^n$. Figure~\ref{fig:eval_plastic} and Figure~\ref{fig:eval_wood} show how we set the evaluation points in our application. Originally we set the number of evaluation points is $m=400$, but after deleting ones which lie outside of the convex hull of $\{\bm{X}_j\}_{j=1}^n$, the number is $m\approx 270$ for both industries.

	\begin{figure}
		\centering
		\subfloat[Before deletion]{\includegraphics[width=0.5\textwidth]{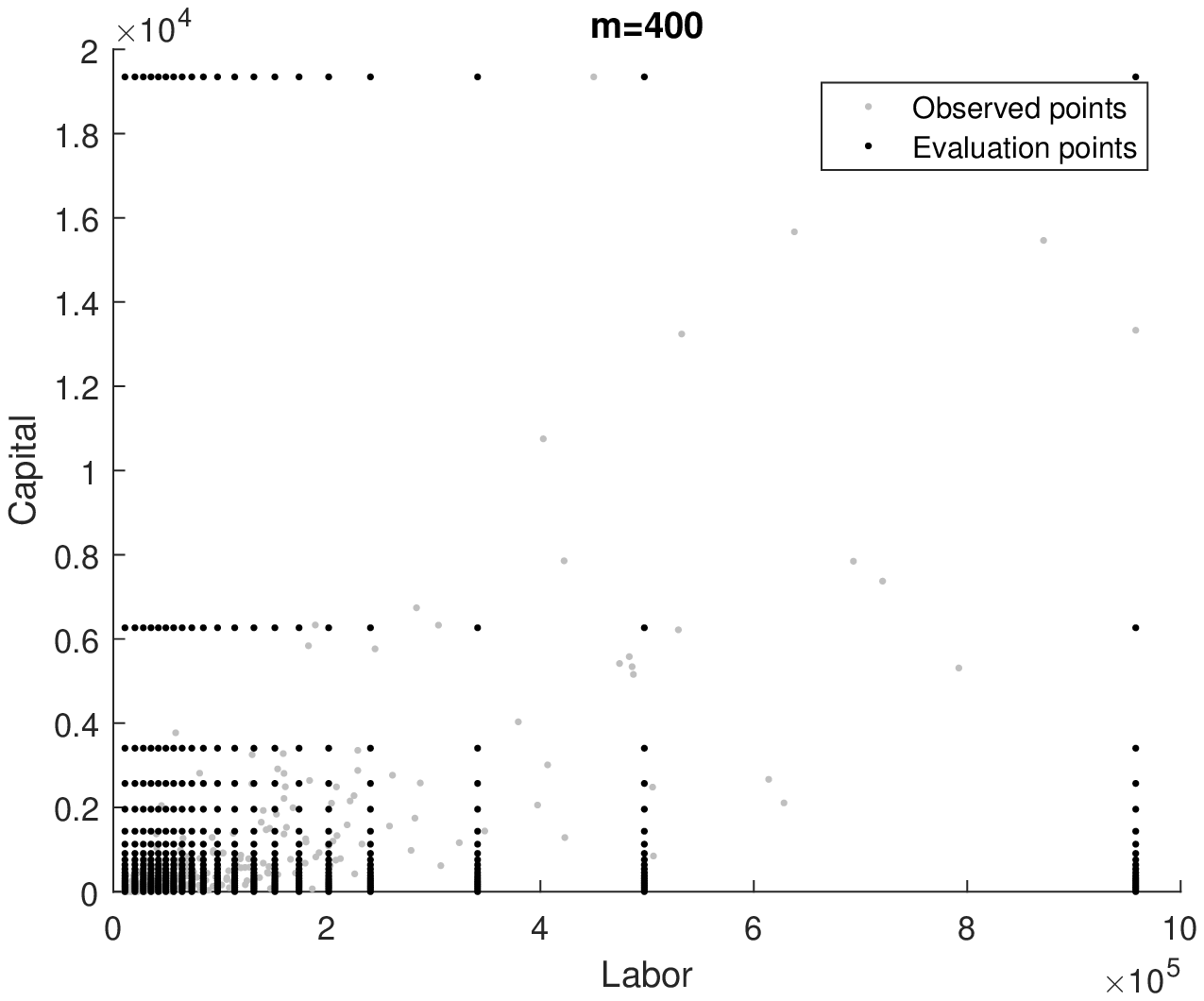}}
		\hfill
		\subfloat[After deletion]{\includegraphics[width=0.5\textwidth]{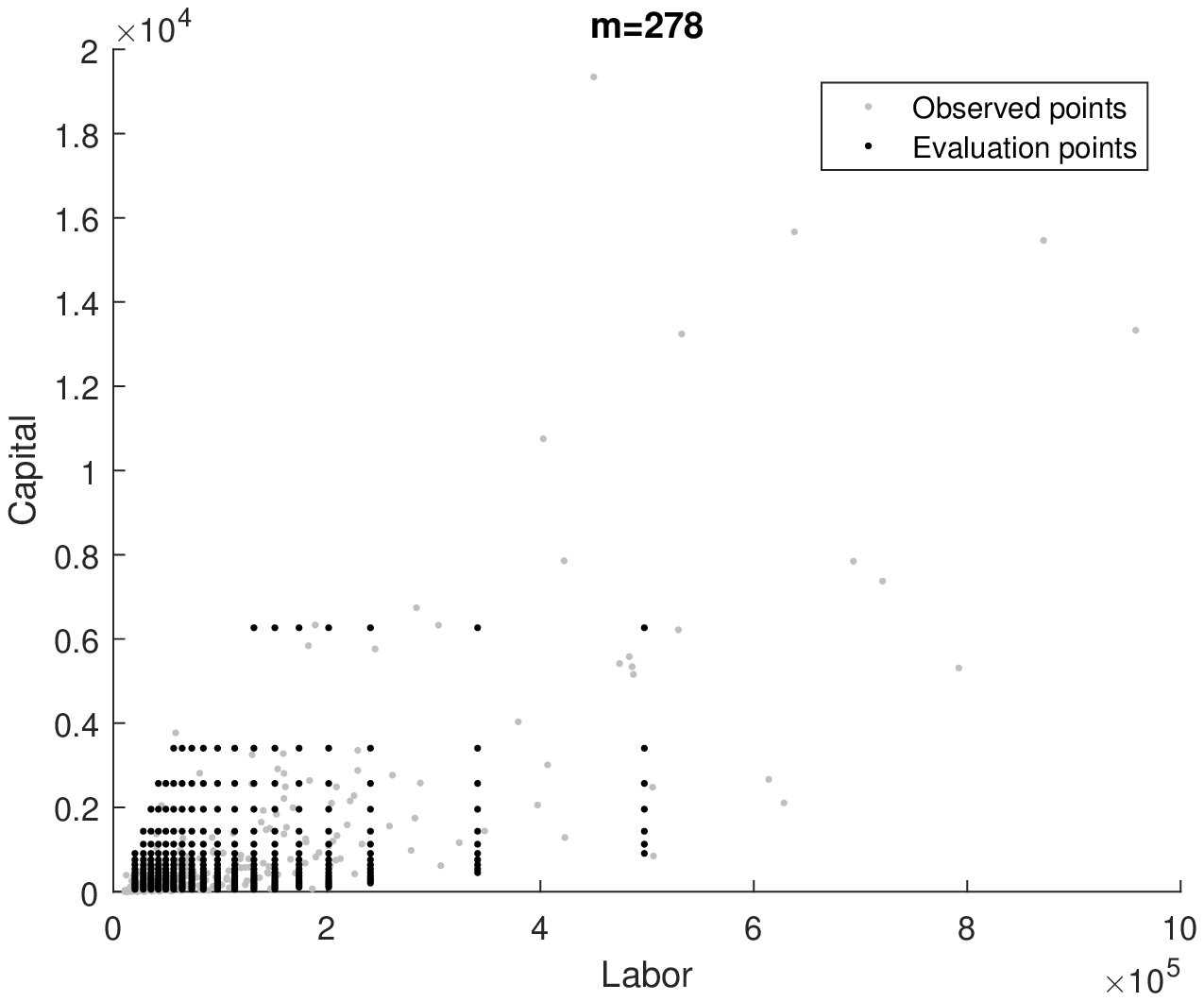}}
		\caption{Proposed evaluation points with Plastic industry (2520)}
		\label{fig:eval_plastic}
	\end{figure}
	\begin{figure}
		\centering
		\subfloat[Before deletion]{\includegraphics[width=0.5\textwidth]{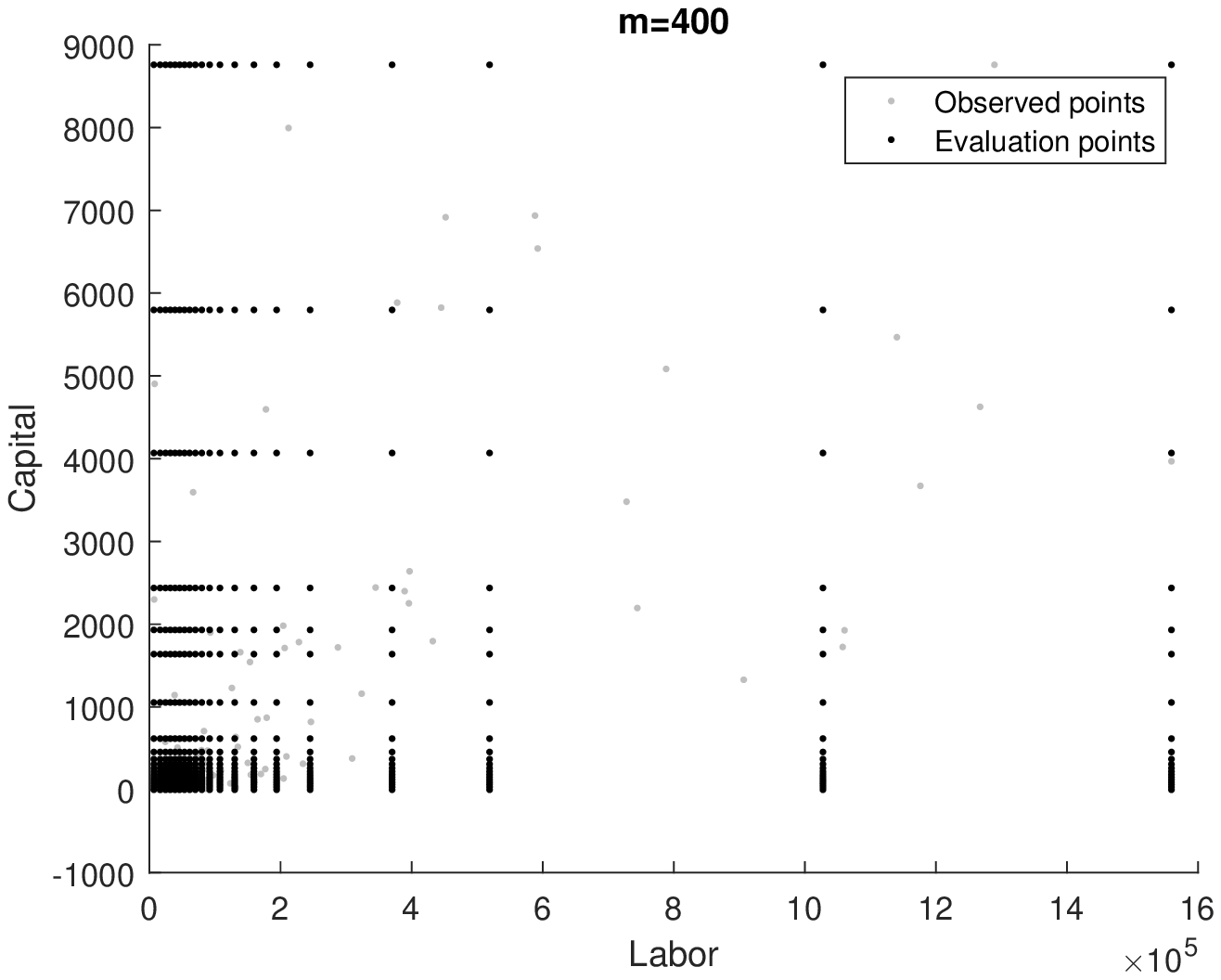}}
		\hfill
		\subfloat[After deletion]{\includegraphics[width=0.5\textwidth]{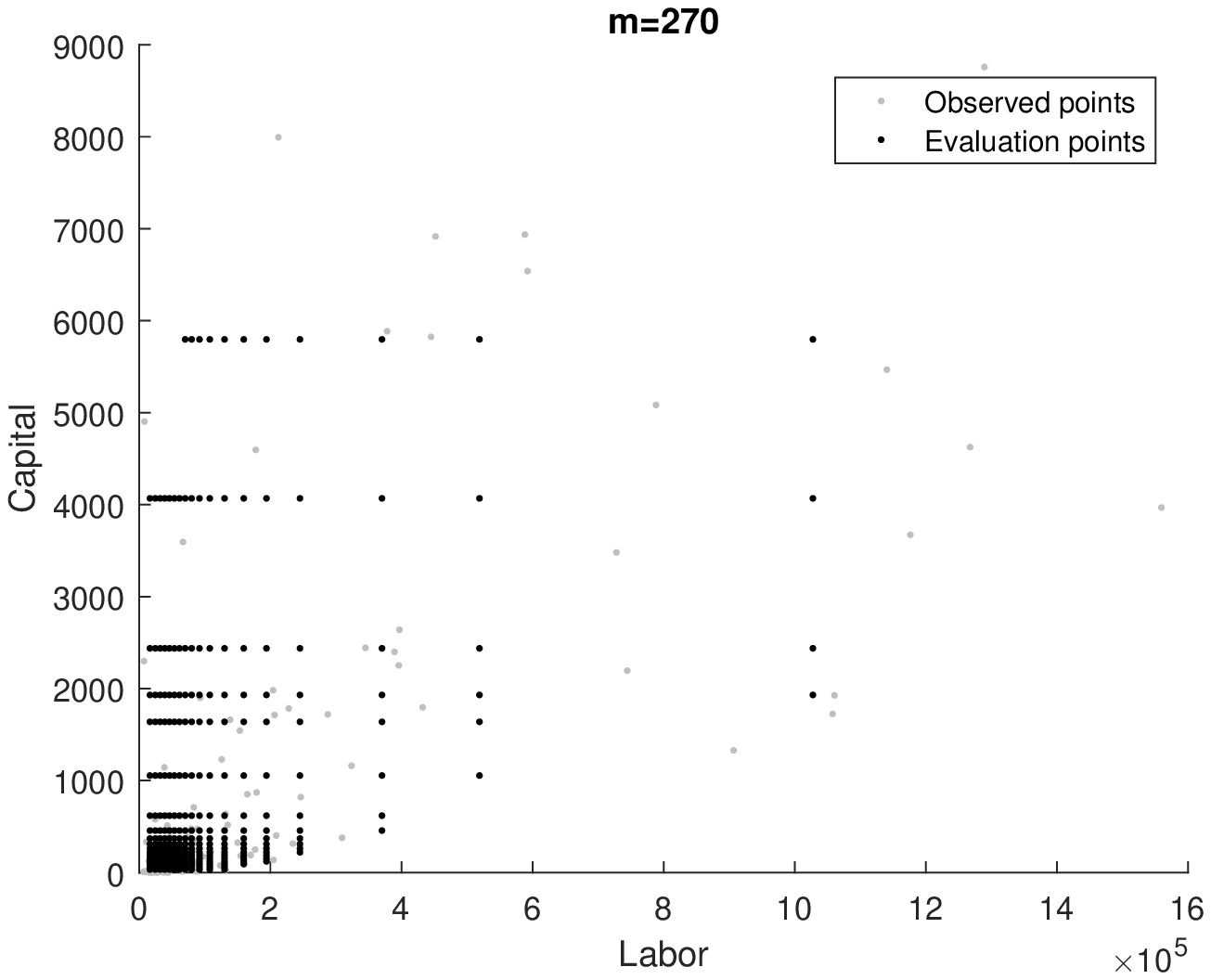}}
		\caption{Proposed evaluation points with Wood industry (2010)}
		\label{fig:eval_wood}
	\end{figure}

	\bibliographystyle{chicago}
	\setlength{\bibsep}{0.0pt} 
	\bibliography{SCKLS_revision}
\end{document}